\newcommand{\ra}[1]{\renewcommand{\arraystretch}{#1}}
\newtheorem{xdefinition}{Definition}
\newtheorem{xobservation}{Observation}
\newtheorem{xtheorem}{Theorem}
\newtheorem{xmaintheorem}{Main Theorem}
\newtheorem{xlemma}{Lemma}
\newtheorem{xproposition}{Proposition}
\newtheorem{xcorollary}{Corollary}
\newtheorem{xclaim}{Claim}
\newtheorem{xproperty}{Property}
\newenvironment{definition}{\begin{xdefinition}\rm}%
{\hspace*{\fill}\raisebox{-1pt}{\boldmath$\Box$}\end{xdefinition}}
{\hspace*{\fill}\raisebox{-1pt}{\boldmath$\Box$}\end{xobservation}}
\newenvironment{theorem}{\begin{xtheorem}\rm}{\end{xtheorem}}
\newenvironment{lemma}{\begin{xlemma}\rm}{\end{xlemma}}
\newenvironment{proposition}{\begin{xproposition}\rm}{\end{xproposition}}
\newenvironment{claim}{\begin{xclaim}\rm}{\end{xclaim}}
\newenvironment{proof}{\begin{trivlist}\item[]{\bf Proof }}%
{\hspace*{\fill}\raisebox{-1pt}{\boldmath$\Box$}\end{trivlist}}
\newcommand{\OPT}{\ensuremath{\operatorname{OPT}}\xspace}
\newcommand{\CEIL}[1]{\left\lceil#1\right\rceil}
\newcommand{\FLOOR}[1]{\left\lfloor#1\right\rfloor}
\newcommand{\SET}[1]{\{#1\}}
\newcommand{\SETOF}[2]{\SET{#1\;|\;#2}}
\newcommand{\SEQ}[1]{\langle #1 \rangle}
\newcommand{\A}{\ensuremath{\mathcal{A}}\xspace}
\newcommand{\B}{\ensuremath{\mathcal{B}}\xspace}
\newcommand{\I}{\ensuremath{\mathcal{I}}\xspace}
\newcommand{\LRU}{\ensuremath{\mathrm{LRU}}\xspace}
\newcommand{\FIFO}{\ensuremath{\mathrm{FIFO}}\xspace}
\newcommand{\FAR}{\ensuremath{\mathrm{FAR}}\xspace}
\newcommand{\FWF}{\ensuremath{\mathrm{FWF}}\xspace}
\newcommand{\LANG}[1]{\mathcal{L}(#1)}
\newcommand{\Ma}[2]{\ensuremath{\mathrm{Max}_{#1, #2}}}
\newcommand{\Mi}[2]{\ensuremath{\mathrm{Min}_{#1, #2}}}
\newcommand{\Min}{\ensuremath{\mathrm{Min}}}
\newcommand{\Max}{\ensuremath{\mathrm{Max}}}
\begin{document}

% Start of paper

\title{Relative Interval Analysis of Paging Algorithms \\
on Access Graphs,\thanks{A preliminary version of this paper will appear in the
proceedings of the Thirteenth Algorithms and Data Structures Symposium. Supported in part by the Danish Council for
Independent Research. Part of this work was carried out while the
first and third authors
were visiting the University of Waterloo.}}

\author{Joan Boyar \hspace{2em} Sushmita Gupta  \hspace{2em} Kim S. Larsen \\[1ex]
        University of Southern Denmark \\
        Odense, Denmark \\[1ex]
        {\tt \{joan,sgupta,kslarsen\}@imada.sdu.dk}}

\date{\Large Draft of \today{} at \Printtime}
\date{}
\maketitle

\begin{abstract}
Access graphs, which have been used previously in connection with
competitive
analysis and relative worst order analysis to model locality of reference in
paging, are considered in
connection with relative interval analysis. The algorithms \LRU, \FIFO,
\FWF, and \FAR are compared using the path, star, and cycle access graphs.
In this model, some of the expected results are obtained.
%\FWF is shown
%to be strictly worse than \LRU, \FIFO and \FAR on any access graph.
However, although
\LRU is found to be strictly better than \FIFO on paths, it has
worse performance on stars, cycles, and complete graphs, in this model.
We solve an open question from [Dorrigiv, L{\'{o}}pez-Ortiz, Munro, 2009],
obtaining tight bounds %and thereby exactly characterizing the
on the relationship between \LRU and \FIFO with relative interval analysis.
%which can be seen as results on the complete access graph, are
%proven to be tight.
\end{abstract}

\section{Introduction}
The paging problem is the problem of maintaining a subset of a
potentially very large set of pages from memory in a significantly
smaller cache. When a page is requested, it may already be in cache
(called a ``hit''), or it must be brought into
cache (called a ``fault'').
The algorithmic problem is the one of choosing an
eviction strategy, i.e., which page to evict from cache in the case
of a fault, with the objective of minimizing the total number of
faults.

Many different paging algorithms have been considered in the
literature, many of which can be found in \cite{BE97,DLO05j}.
Among the best known are \LRU (least-recently-used),
which always evicts the least
recently used page, and \FIFO (first-in-first-out),
which evicts pages in the order
they entered the cache.
We also consider a known bad algorithm, \FWF (flush-when-full),
which is often used for reference, since quality measures ought
to be able to determine at the very least that it is worse than
the other algorithms. If \FWF encounters a fault with a full
cache, it empties its cache, and brings the new page in.
Finally, we consider a more involved algorithm, \FAR, which
works with respect to a known access graph. Whenever a page is
requested, it is marked. When it is necessary to evict a page,
it always evicts an unmarked page. If all pages are marked in such
a situation, \FAR first unmarks all pages. The unmarked page it chooses
to evict is the one farthest from any marked page in the access graph.
For breaking possible ties, we assume the \LRU strategy in this paper.

Understanding differences in paging algorithms' behavior under
various circumstances has been a topic for much research.
The most standard measure of quality of an online algorithm,
competitive analysis~\cite{ST85j,KMRS88j}, cannot
directly distinguish between most of them.
It deems \LRU, \FIFO, and \FWF
equivalent, with a competitive ratio of $k$,
where $k$ denotes the size of the cache.
Other measures, such as relative worst order analysis~\cite{BF07,BFL07j},
can be used to obtain more separations, including
that \LRU and \FIFO are better than \FWF
and that look-ahead helps.
%but with other newer measures,
No techniques have been able to separate \LRU and \FIFO,
without adding some modelling of locality of reference.

Although \LRU performs better
than \FIFO in some practical situations~\cite{Y94},
if one considers all sequences
of length $n$ for any $n$, bijective/average analysis shows that
their average number of faults on these sequences is identical~\cite{ADL07},
which basically follows from \LRU and \FIFO being demand paging algorithms.
Thus, it is not surprising that some assumptions involving locality
of reference are necessary to separate them.
%It can be shown, however, that under the independent reference model,
%\LRU is at least as good as \FIFO~\cite{BG92}.

A separation between \FIFO and \LRU was established quite early using
access graphs for modelling locality of reference~\cite{CN99}, showing that
under competitive analysis, no matter which access graph
one restricts to, \LRU always does at least as well as \FIFO.
This proved a conjecture in~\cite{BIRS91}, where
the access graph model was introduced.
Another way to restrict the input sequences was
investigated in~\cite{AFG05}. Using Denning's working set
model~\cite{D68,D80} as an inspiration, sequences were limited with
regards to the number of distinct pages in a sliding window of size~$k$.
This also favors \LRU, as does bijective analysis~\cite{ADL07},
using the same locality of reference definition as~\cite{AFG05}.
There has also been work in the direction of probabilistic models,
including the diffuse adversary model~\cite{KP00}
and Markov chain based models~\cite{KPR00}.

The earlier successes and the generality of access graphs,
together with the possibilities the model offers with regards to
investigating specific access patterns, makes it an interesting
object for further studies.
In the light of the recent focus
on development of new performance measures,
together with the comparative studies initiated in~\cite{BIL09p},
exploring access graphs results in the context of new performance
measures seems like a promising direction for expanding our
understanding of performance measures as well as concrete algorithms.

One step in that direction was carried out in~\cite{BGL12}, where
more nuanced results were demonstrated,
showing that restricting input sequences using the access graph
model, while applying relative worst order analysis,
\LRU is strictly better than \FIFO on paths and 
%never worse and sometimes better than \FIFO on 
cycles.
The question as to whether or not \LRU is at least as good
as \FIFO on all finite graphs was left as an open problem,
but it was shown that there exists a family
of graphs which grows with the length of the corresponding
request sequence, where \LRU and \FIFO are incomparable.
Since \LRU is optimal on paths, it is not surprising that
both competitive analysis and relative worst order analysis
find that \LRU is better than \FIFO on paths. Any ``reasonable''
analysis technique should give this result. Under competitive
analysis, \LRU and \FIFO are equivalent on cycles.
The separation by relative worst order analysis
occurs because cycles contain paths, \LRU is better
on paths, and relative worst order analysis can reflect this.
The fact that there exists an infinite family of graphs which
grows with the length of the sequence where \LRU and \FIFO are
incomparable may or may not be interesting. There are many
sequences were \FIFO is better than \LRU; they just seem to occur
less often in real applications.

Comparing two algorithms under almost any analysis technique
is generally equivalent to considering them with the
complete graph as an access graph, since the complete graph
does not restrict the request sequence in any way.
Thus, \LRU and \FIFO are equivalent on complete graphs
under both competitive analysis and relative worst order
analysis, since they are equivalent without considering
access graphs.

In this paper,  we consider relative interval
analysis~\cite{DLM09}. In some ways relative interval analysis is
between competitive analysis and relative worst order
analysis. As with relative worst order analysis, two algorithms
are compared directly to each other, rather than compared
to \OPT. This gives the advantage that, when one algorithm
dominates another in the sense that it is at least as
good as the other on every request sequence and better on
some, the analysis will reflect this. However, it is similar
to competitive analysis in that the two algorithms are always
compared on exactly the same sequence. To compare two algorithms,
\LRU and \FIFO for example, one considers the difference between
\LRU's and \FIFO's performance on any sequence, divided by the
length of that sequence. The range that these ratios can take
is the ``interval'' for that pair of algorithms. For \FIFO and
\LRU, \cite{DLM09} found two families of sequences $I_n$ and $J_n$
such that $\lim_{n\rightarrow\infty}\frac{\FIFO(I_n)-\LRU(I_n)}
{n} = -1+\frac{1}{k}$ and $\lim_{n\rightarrow\infty}\frac{\FIFO(J_n)
-\LRU(J_n)}{n}=\frac{1}{2}-\frac{1}{4k-2}$. They left it as an open problem
to determine if worse sequences exist, making the interval even
larger. In their notation, they proved:
$[-1+\frac{1}{k},\frac{1}{2}-\frac{1}{4k-2}] \subseteq \I(\FIFO, \LRU)$.
We start by proving that this is tight:
$\I(\FIFO,\LRU) = [-1+\frac{1}{k},\frac{1}{2}-\frac{1}{4k-2}]$.
These results would be interpreted as saying that \FIFO has
{\em better performance} than \LRU,
since the absolute value of 
the minimum value in the interval is larger than
the maximum, but also that they have different strengths,
since zero is contained in the interior of the interval.
We obtain more nuanced results by considering
various types of access graphs, such as paths ($P_N$),
stars ($S_N$), and cycles ($C_N$), splitting the interval of
$[-1+\frac{1}{k}, \frac{1}{2}-\frac{1}{4k-2}]$ into subintervals
for the respective graph classes.
Considering complete graphs (or cliques) implies that there are no
restrictions on the input sequences, so this is equivalent to
considering the situation without an access graph.
Table~\ref{comparisons} shows our results.

\begin{table}[h!]
\caption{Summary of Results}
\label{comparisons}
\medskip
\centering
\ra{1.2}
\small
\begin{tabular}{|c@{\hspace{0.35em}}c@{\hspace{0.35em}}c@{\hspace{0.35em}}c@{\hspace{0.35em}}c|c|c|}
\hline
Lower Bound & &Relative Interval && Upper Bound &
Th. \\
\hline 
& & $\I[\FIFO, \LRU]$ & $=$ & $ \left[ -1+\frac{1}{k},
~\frac{1}{2} - \frac{1}{4k-2} \right] $ & \ref{thm:G:Main-result}\\
&   & $\I[\FWF, \mathcal{A}]$ & $=$ & $\left[0, ~1-
\frac{1}{k}\right]$ & \ref{FWF:P} \\
$\left[0, ~1 - \frac{k+1}{k^2} \right]$ & $\subseteq$ &
$\I[\FWF, \FIFO]$ & $\subseteq$  & $ \left[0,~ 1-
\frac{1}{k}\right]$& \ref{FWF-FIFO}  \\
\hline
&  & $\I^{P_{N}}[\FIFO, \A]$ & $=$     & $ \left[0, ~
\frac{1}{2}-\frac{1}{2k}\right]$ & \ref{thm:path} \\
& & $\I^{P_{N}}[\FWF, \A]$ & $=$  & $ \left[0,~1 - \frac{1}{k}\right]$
& \ref{FWF:P}\\
$\left[ 0, ~1 - \frac{k+1}{k^2} \right]$ & $\subseteq$ &
$\I^{P_{N}}[\FWF, \FIFO]$ & $\subseteq$  & $ \left[0, ~1-
\frac{1}{k}\right]$&  \ref{FWF-FIFO}\\
\hline
$ \left[ -\frac{1}{2}+ \Theta(\frac{1}{k}),~ \frac{1}{4} +
\Theta(\frac{1}{k}) \right]$ & $ \subseteq $ &  $\I^{S_N}[\FIFO, \A]$
&$\subseteq $ & $\left[ -\frac{1}{2}+\Theta(\frac{1}{k}),~ \frac{1}{4}
+ \Theta(\frac{1}{k}) \right] $ & \ref{thm:S:FIFO} \\
&  & $\I^{S_{N}}[\FWF, \mathcal{B}]$ & $=$     & $ \left[0,
~\frac{1}{2}\right]$ & \ref{thm:Max:Star-FWF}  \\ 
\hline
$\left[ -1+\frac{r}{k}, ~\frac{1}{2}-\frac{1}{4k-2} \right] $ & $
\subseteq $& $ \I^{C_{N}}[ \FIFO, \LRU] $& $ \subseteq$ & $\left[
-1+\frac{1}{k}, ~\frac{1}{2}-\frac{1}{4k-2} \right]$ &
\ref{cycle-FIFO-LRU} \\
 &  & $\I^{C_{N}}[\FWF, \LRU]$ & $ = $ & $ \left[ 0, ~1
-\frac{1}{k}\right]$ & \ref{Max:C:FWF} \\
$\left[- \frac{r\left(  \FLOOR{ \log \frac{\hat{N}}{r} } -1 \right)}{N-1},   1-\frac{X_r}{k}  \right ] $ & $ \subseteq$ &
$\I^{C_{N}}[\LRU, \FAR] $ & $\subseteq$ &
$\left[- \frac{X_r -1}{k}, ~1 -\frac{1}{k} \right]$  & \ref{thm:C:FAR}
 \\
$\left[0,  ~ 1 -\frac{X_{r}}{k} \right] $ & $ \subseteq$ &
$\I^{C_N}[\FWF, \FAR]$ & $ \subseteq $ & $\left [0, ~1-
\frac{1}{k}\right]$ & \ref{thm:C:FAR}\\
$\left[0, ~1 - \frac{k+1}{k^2} \right]$ & $\subseteq$ &
$\I^{C_{N}}[\FWF, \FIFO]$ & $\subseteq$  & $ \left[0,~ 1-
\frac{1}{k}\right]$& \ref{FWF-FIFO}  \\
\hline
\end{tabular}
\begin{center}
${\A} \in \{\FAR, \LRU  \}$ and $\mathcal{B} \in \{\FAR, \FIFO, \LRU
\}$.\\
$N=k+r$, with $1 \leq r\leq k-1$, $ X_r=
r(x-1) + \CEIL{\frac{N}{2^x}}$ with $x=  \FLOOR{\log \frac{N}{r}}$.\\
$\hat{N}$ denotes $N$ if $N$ is even, and $N-1$ otherwise.
\end{center}
\end{table}

Comparing these results with the results from competitive analysis
and relative worst order analysis, both with respect to access graphs,
it becomes clear that different measures highlight different
aspects of the algorithms. All the measures show that \LRU is strictly
best on paths, which is not surprising since it is in fact optimal
on paths and \FIFO is not.
On the other access graphs considered here,
relative interval analysis gives results which can be interpreted as
incomparability, but leaning towards deeming \FIFO
the better algorithm. Relative worst order analysis, on the other
hand, shows that on cycles, \LRU is strictly better than \FIFO, and
on complete graphs, they are equivalent. It has not yet been studied
on stars, but an incomparability result for \LRU and \FIFO has been
found for a family of graphs growing with the length of the input.

\section{Preliminaries}
We have defined the paging algorithms in the introduction.
If more detail is desired, the algorithms are described in~\cite{BE97}.

An {\em access graph} for paging models the access patterns,
i.e., which pages can be requested after a given page.
Thus, the vertices are pages, and after a page $p$ has been
requested, the next request is to $p$ or one of its neighbors
in the access graph.
We let $N$ denote the number of vertices of the access graph under
consideration at a given time.
This is the same as the number of different pages we consider.
We will always assume that $N>k$, since otherwise the problem is trivial,
and let $r=N-k$.
A requests sequence is a sequence of pages and the
sequence {\em respects} a given access graph
if any two consecutive requests are either identical or
neighbors in the access graph.
We let $\LANG{G}$ denote the set of all request sequences respecting $G$.

We use the definition of $k$-phases from~\cite{BE97}:
\begin{definition}
A request sequence can be divided recursively into a number of
{\em $k$-phases} as follows:
Phase~$0$ is the empty sequence.
For every $i \geq 1$, Phase~$i$ is a maximal sequence following
Phase~ $i-1$ containing at most $k$ distinct requests.
\end{definition}
Thus, Phase~$i$ begins on the $(k+1)$st distinct page requested
since the start of Phase~$i-1$, and the last phase may contain fewer than
$k$ different pages. We generally want to ignore Phase~0, and
refer to Phase~1 as the first phase.

Similarly, we can define $x$-blocks, for some integer $x$,
focusing on when a given algorithm \A has faulted $x$ times.
\begin{definition} 
A request sequence can be divided recursively into a number of
{\em $x$-blocks} with respect to an algorithm \A as follows:
The $0$th $x$-block is the empty sequence.
For every $i \geq 1$, the $i$th $x$-block is a maximal sequence following
the $(i-1)st$ $x$-block for which \A faults at most $x$ times.

The {\em complete} blocks are defined to be the ones with
$x$ faults, i.e., excluding the $0$th block and possibly the last.
\end{definition}

There are some well-known and important classifications of
paging algorithms, which are used here and in most other
papers on paging~\cite{BE97}:
An paging algorithm is called {\em conservative}
if it incurs at most $k$ page faults on any consecutive subsequence
of the input containing $k$ or fewer distinct page references.
\LRU and \FIFO belong to this class.
Similarly, a paging algorithm is called a {\em marking} algorithm
if for any $k$-phase, once a page has been requested in that phase,
it is not evicted for the duration of that phase.
\LRU, \FAR, and \FWF are marking algorithms.

If \A is a paging algorithm, we let $\A(I)$ denote \A's cost
(number of faults) on the input (request) sequence $I$.
We now adapt relative interval analysis from~\cite{DLM09} to access graphs.
Let \A and \B be two algorithms. We define the following notation:

\[
 \Mi{\A}{\B}(n,G) = \min_{|I|=n, I \in\LANG{G}} \SET{ \A(I) - \B(I) }
\]

\[
 \Ma{\A}{\B}(n,G) = \max_{|I|=n, I \in\LANG{G}} \SET{ \A(I) - \B(I) }
\]

\[
  \Min^G(\A,\B) = \lim_{n\rightarrow\infty}\inf \frac{\Mi{\A}{\B}(n,G)}{n}
\]

\[
  \Max^G(\A,\B) = \lim_{n\rightarrow\infty}\sup \frac{\Ma{\A}{\B}(n,G)}{n}
\]

\begin{definition}
The {\it relative interval} of two algorithms \A and \B
with respect to the access graph, $G$, is
\[ \I^G(\A,\B)  = [\Min^G(\A,\B),\Max^G(\A,\B)] \] 

\B \emph{has better performance than} \A if $\Max^G(\A,\B) > |\Min^G(\A,\B)|$.

\B \emph{dominates} \A if $\I^G(\A, \B) = [0, \beta]$ for some $\beta > 0$.
%
%$[\alpha, \beta]$ {\em approximates} $\I^G(\A,\B)$ if
%$\Min^G(\A,\B) \leq \alpha$ and $\beta \leq \Max^G(\A,\B)$,
%and we denote this more succinctly as
%$\I^G(\A,\B) \supseteq [\alpha,\beta]$. 
\end{definition}
Note that in the above, $\Max^G(\A,\B) = -\Min^G(\B,\A)$.

This definition generalizes the one from~\cite{DLM09} in that
the original definition is the special case where $G$ is the complete graph,
which is the same as saying that there are no restrictions on the
sequences. We omit $G$ in the notation when $G$ is complete.

Note that if \B \emph{dominates} \A, this means that
\A does not outperform \B on any sequence (asymptotically), 
while there are sequences on which \B outperforms \A.
Also, when $\Max^G(\A, \B)$ is close to $0$,
this indicates that \A's performance is not much worse than that of \B's.

The following general lemmas will prove helpful later.
The first observation is well known for $k$-phases~\cite{BE97}:
\begin{lemma}
\label{minimum-cost}
Any algorithm has at least $b + k - 1$ faults on a sequence
consisting of $b$ complete $k$-phases or
$b$ complete $k$-blocks defined with respect to any conservative or marking
algorithm.
\end{lemma}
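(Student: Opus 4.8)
The plan is to fix an arbitrary paging algorithm and show directly that it faults at least $b+k-1$ times; since every algorithm in this model loads a page only when that page is requested, the cache changes only at faults, and this is the only feature I will exploit. For the $k$-phase case I would first isolate the two structural facts that drive everything: the first phase consists of exactly $k$ distinct pages and the cache is initially empty, so the algorithm faults exactly $k$ times in it (they all fit, so none is evicted), and the cache is full from then on; and for $i\ge 2$ the first request $q_i$ of phase $i$ is the $(k+1)$st distinct page since the start of phase $i-1$, hence $q_i$ lies outside the page set $S_{i-1}$ of the previous phase.

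The engine would be a potential argument. Writing $f_i$ for the faults in phase $i$, $C_i$ for the cache at the end of phase $i$, and $\psi_i=|S_i\setminus C_i|$ (equivalently $|C_i\setminus S_i|$, the number of ``foreign'' pages retained at the end of the phase, since both sets and the full cache have size $k$), I would prove the per-phase inequality $f_i\ge 1+\psi_i-\psi_{i-1}$ for every $i\ge 2$. Combined with $f_1=k$ and $\psi_1=0$, telescoping gives $\sum_{i=1}^b f_i\ge k+(b-1)+\psi_b\ge b+k-1$, the claim. The easy half is the case $\psi_{i-1}\ge 1$: there one only needs $f_i\ge\psi_i$, which holds because $f_i$ is at least the number $k-a$ of $S_i$-pages absent from the cache at the phase's start (with $a=|S_i\cap C_{i-1}|$), while foreign pages can only leave the cache during the phase, so $\psi_i\le k-a\le f_i$.

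The hard part, and the step I expect to be the main obstacle, is the per-phase inequality when $\psi_{i-1}=0$, i.e.\ when the previous phase ended with its full set $S_{i-1}$ resident. Then $C_{i-1}=S_{i-1}$, so the first request $q_i\notin S_{i-1}$ is an immediate fault that must evict some $w\in S_{i-1}$, and I would force $f_i\ge 1+\psi_i$ by a capacity argument split on $w$: if $w\in S_i$ then $w$ is requested again later in the phase and reloads, an extra fault beyond the $k-a$ first-request faults on $S_i\setminus C_{i-1}$; if $w\notin S_i$ then a foreign page has been discarded and, since no foreign page can re-enter within the phase, $\psi_i\le (k-a)-1$ while still $f_i\ge k-a$. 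Making this squeeze rigorous—carefully tracking that foreign pages only ever leave the cache inside a phase, and that retaining the new page costs exactly one extra reload—is the crux.

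For the $k$-block case I would reuse the same machinery once complete $k$-blocks are shown to inherit the two structural facts. Since the defining algorithm \A is demand paging and faults exactly $k$ times in a complete block, each block spans at least $k$ distinct pages, yielding the analogue of ``$k$ faults in the first region''; and the maximality in the block definition, together with \A being conservative or marking, supplies a faulting page at each block boundary to play the role of $q_i$ (for marking \A the correspondence is tightest, as such an algorithm faults precisely on the clean pages of a region). The secondary obstacle is purely structural: verifying from the definitions of conservative and marking algorithms that the block regions carry the ``at least $k$ distinct pages'' and ``new page at the boundary'' properties needed to rerun the telescoping, and adapting the potential $\psi_i$ when a block happens to contain more than $k$ distinct pages.
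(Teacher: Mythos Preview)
Your potential-function argument for $k$-phases is correct, but it takes a considerably longer route than the paper. The paper never tracks cache contents across phases: letting $p$ be the first request of phase~$i$, it simply observes that the window $I'$ running from the \emph{second} request of phase~$i$ through the \emph{first} request of phase~$i+1$ contains $k$ distinct pages all different from $p$ (the $k-1$ other pages of phase~$i$, plus the new page opening phase~$i+1$). Since $p$ occupies one cache slot immediately after it is served, pigeonhole forces at least one fault inside $I'$. Counting $k+1$ faults on phase~$1$ together with the first request of phase~$2$, and one fault for each of the $b-2$ subsequent windows, gives $b+k-1$ in two sentences, with no case split on $\psi_{i-1}$ and no subcase on the evicted page~$w$. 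Your telescoping does yield the marginally sharper $b+k-1+\psi_b$, but that extra term is never used.

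For the $k$-block extension you should be more cautious than your last paragraph suggests. The ``new page at the boundary'' property $q_i\notin S_{i-1}$, which is what drives your hard case $\psi_{i-1}=0$, can \emph{fail} for blocks of a conservative algorithm such as \FIFO: a page requested as a hit early in block~$i-1$ may be evicted by later faults within that same block and then reappear as the opening fault of block~$i$, so $q_i\in S_{i-1}$ while $|S_{i-1}|>k$. What does hold---and what the paper's argument actually uses---is only that block~$i$ together with the first request of block~$i+1$ contains at least $k+1$ distinct pages, an immediate consequence of conservativeness since that stretch carries $k+1$ faults for~$\A$. The shifted-window pigeonhole needs precisely this and nothing more, so it transfers to $k$-blocks verbatim; your telescoping, by contrast, would require a more substantial reformulation of both $\psi_i$ and the boundary step than ``adapting'' suggests.
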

\begin{proof}
Let $p$ be the page requested first in Phase~$i$ and let
$I'$ be the subsequence starting with the second request in Phase~$i$
and ending right after the first request in Phase~$i+1$.
Since there are $k$ different pages in $I'$ different from $p$,
and $p$ is in cache right after it has been processed,
any algorithm must fault at least once in $I'$.
Thus, an algorithm must fault at least $k+1$ times on Phase~$1$ and
the first request in Phase~$2$, and then at least once for the
next $b-2$ $k$-phases, summing to $b + k - 1$.

The only properties used in the above are the following: First, there
are at least $k$ distinct requests in a $k$-phase, and, second,
for any phase, the first request is different from any request
in the previous phase;
specifically, the first request in two subsequent $k$-phases are different.
Any conservative or marking algorithm gives rise to such $k$-blocks.
\end{proof}

\begin{lemma}
\label{phases-limit-max}
Assume that for two algorithms \A and \B,
there exist functions $f$ and $g$ such that
\begin{itemize}
\item
$\lim_{n\rightarrow\infty} \Ma{\A}{\B}(n,G)=\infty$,
\item
for all $I\in\LANG{G}$,
$\A(I)-\B(I) \leq f(b_I)$
and
$|I|\geq g(b_I)$,
where $b_I$ denotes the number of complete $k$-phases or $k$-blocks in $I$,
and the $k$-blocks are defined with respect to a conservative or
marking algorithm,
and
\item
the limit $\lim_{b\rightarrow\infty}\frac{f(b)}{g(b)}$ exists.
\end{itemize}
Then
$\Max^G(\A, \B) \leq \lim_{b\rightarrow\infty}\frac{f(b)}{g(b)}$.
\end{lemma}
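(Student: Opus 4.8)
The plan is to bound the normalized difference $\Ma{\A}{\B}(n,G)/n$ by the ratio $f(b)/g(b)$ for every sufficiently large $n$, and then pass to the limit. For each $n$ large enough that the maximum is positive, fix a sequence $I_n\in\LANG{G}$ with $|I_n|=n$ attaining $\Ma{\A}{\B}(n,G)=\A(I_n)-\B(I_n)$, and write $b_n=b_{I_n}$ for its number of complete $k$-phases (or $k$-blocks). The two pointwise hypotheses then yield the chain
\[
\frac{\Ma{\A}{\B}(n,G)}{n}=\frac{\A(I_n)-\B(I_n)}{n}\leq\frac{f(b_n)}{n}\leq\frac{f(b_n)}{g(b_n)},
\]
where the last step uses $n=|I_n|\geq g(b_n)$ together with $f(b_n)\geq 0$; the nonnegativity holds for all large $n$ because $\A(I_n)-\B(I_n)=\Ma{\A}{\B}(n,G)$ is eventually positive by the first hypothesis, and $g(b_n)>0$ for large $n$ since $g$ is an (eventually positive) length lower bound.

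First I would establish that $b_n\to\infty$ as $n\to\infty$, which is the heart of the argument and the only place the assumption $\lim_{n\to\infty}\Ma{\A}{\B}(n,G)=\infty$ is genuinely used. From the chain above, $f(b_n)\geq\Ma{\A}{\B}(n,G)\to\infty$, so $f(b_n)\to\infty$. Since $b_n$ is a nonnegative integer, if $b_n$ failed to diverge there would be a subsequence on which $b_n$ stays below some fixed bound $B$; along that subsequence $f(b_n)$ would take values in the finite set $\{f(0),\dots,f(B)\}$ and hence be bounded, contradicting $f(b_n)\to\infty$. Therefore $b_n\to\infty$, and the third hypothesis applies to give $\lim_{n\to\infty} f(b_n)/g(b_n)=\lim_{b\to\infty} f(b)/g(b)$. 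Taking $\limsup$ in the displayed chain and recalling that $\Max^G(\A,\B)=\lim_{n\to\infty}\sup \Ma{\A}{\B}(n,G)/n$ then delivers
\[
\Max^G(\A,\B)\leq\limsup_{n\to\infty}\frac{f(b_n)}{g(b_n)}=\lim_{b\to\infty}\frac{f(b)}{g(b)}.
\]

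The hard part will be the passage from the limit in $n$ to the limit in $b$: a priori the maximizing sequences $I_n$ could consume their length with few phases and many repeated requests, so that $b_n$ does not grow and $f(b_n)/g(b_n)$ need not converge to the claimed value. The divergence hypothesis on $\Ma{\A}{\B}(n,G)$ is exactly what excludes this pathology, since the cost difference is dominated by $f(b_n)$ and therefore forces the phase count to grow without bound. Everything else is a routine squeeze, so once $b_n\to\infty$ is secured the conclusion follows immediately.
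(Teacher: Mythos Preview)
Your proof is correct and follows essentially the same approach as the paper: pick a maximizing sequence $I_n$ of each length, bound $\Ma{\A}{\B}(n,G)/n$ by $f(b_n)/g(b_n)$, and use the divergence of $\Ma{\A}{\B}(n,G)$ to force $b_n\to\infty$ so that the existing limit $\lim_{b\to\infty} f(b)/g(b)$ applies. If anything, your treatment of the key step is slightly sharper than the paper's: the paper argues that $\{b_{I_j}\}$ takes ``infinitely many values'' (really, is unbounded), whereas you correctly observe that $f(b_n)\to\infty$ rules out any bounded subsequence of $(b_n)$ and hence gives $b_n\to\infty$, which is exactly what is needed to pass to the limit.
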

\begin{proof}
In this proof, we will take the word ``phase'' to mean either a
$k$-phase or a $k$-block.

We define a sequence of request sequences as follows.
For $j\geq 1$, let $I_j$ be a sequence of length~$j$
such that $I_j$ maximizes $\A(I) - \B(I)$ over all sequences of length~$j$.

By construction,
$\Ma{\A}{\B}(n,G)= \max_{|I|=j, I \in\LANG{G}} \SET{ \A(I) - \B(I) } = \A(I_j)-\B(I_j) \leq f(b_{I_j})$,
and by assumption, $|I_j|\geq g(b_{I_j})$. Thus
$\frac{ \Ma{\A}{\B}(n,G) }{|I_j|} \leq \frac{f(b_{I_j})}{g(b_{I_j})}$.
Now,
\[
  \Max^G(\A,\B)
%     \limsup_{j\rightarrow\infty}\frac{ \max_{|I|=j, I \in\LANG{G}} \SET{ \A(I) - \B(I) }}{j}
\leq \limsup_{j\rightarrow\infty}\frac{f(b_{I_j})}{g(b_{I_j})}
=    \limsup_{b\rightarrow\infty}\frac{f(b)}{g(b)}
=    \lim_{b\rightarrow\infty}\frac{f(b)}{g(b)}.
\]
The second to last equality holds since
$\SETOF{b_{I_j}}{j \geq 1}$ contains infinitely many values.
Assume to the contrary that it had a maximum value $b_{I_{j'}}$ for some $j'$.
That would mean that for any $j$,
$\Ma{\A}{\B}(|I_j|,G)\leq\max\SETOF{f(b)}{1\leq b\leq b_{I_{j'}}}$,
contradicting the assumption of the left-hand
expression being unbounded.

The last equality holds since we have assumed that the limit exists.
\end{proof}

The proof of the following is analogous to the lemma just proven.
Note, however, that the function $f$ in the second bullet
has image in $\mathbb{R}^-$.
\begin{lemma}
\label{phases-limit-min}
Assume that for two algorithms \A and \B,
there exist functions $f$ and $g$ such that
\begin{itemize}
\item
$\lim_{n\rightarrow\infty}\Min_{\A,\B}(n,G)=-\infty$,
\item
for all $I\in\LANG{G}$,
$\A(I)-\B(I) \geq f(b_I)$
and
$|I|\geq g(b_I)$,
where $b_I$ denotes the number of complete $k$-phases or $k$-block in $I$, 
and the $k$-blocks are defined with respect to a conservative or
marking algorithm, and
\item
the limit $\lim_{b\rightarrow\infty}\frac{f(b)}{g(b)}$ exists.
\end{itemize}
Then
$\Min^G(\A, \B) \geq \lim_{b\rightarrow\infty}\frac{f(b)}{g(b)}$.
\end{lemma}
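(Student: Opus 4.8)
The plan is to mirror the proof of Lemma~\ref{phases-limit-max} essentially line for line, replacing the maximizing sequence by a minimizing one and reversing each inequality, the only genuinely new issue being the sign bookkeeping forced by $f$ having image in $\mathbb{R}^-$. As before, I take ``phase'' to mean either a $k$-phase or a $k$-block. For each $n \geq 1$ I would let $I_n \in \LANG{G}$ be a sequence of length $n$ that minimizes $\A(I)-\B(I)$, so that $\Mi{\A}{\B}(n,G) = \A(I_n)-\B(I_n)$. The second hypothesis then gives $\A(I_n)-\B(I_n) \geq f(b_{I_n})$ together with $n = |I_n| \geq g(b_{I_n})$, exactly the two ingredients used in the $\Max$ lemma, but now with the first inequality pointing the other way.

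The one step needing care is the per-sequence bound $\frac{\A(I_n)-\B(I_n)}{|I_n|} \geq \frac{f(b_{I_n})}{g(b_{I_n})}$. Writing $\phi = f(b_{I_n}) \leq 0$, $G_0 = g(b_{I_n}) > 0$, and $L_0 = |I_n| \geq G_0$, I would first observe that dividing the nonpositive quantity $\phi$ by the larger positive denominator can only move it towards $0$, so $\frac{\phi}{G_0} \leq \frac{\phi}{L_0}$; then, since $\phi \leq \A(I_n)-\B(I_n)$ and $L_0 > 0$, dividing preserves the order, giving $\frac{\phi}{L_0} \leq \frac{\A(I_n)-\B(I_n)}{L_0}$. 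Chaining the two yields the claim. This is precisely where the hypothesis $f(b) \leq 0$ enters, and it is the step most easily gotten wrong: in the $\Max$ lemma the nonnegativity of $f$ let us replace $|I_n|$ by the smaller $g(b_{I_n})$ in the denominator, whereas here it is the nonpositivity of $f$ that makes the opposite replacement, $g(b_{I_n}) \to |I_n|$, valid.

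With this inequality holding term by term, I would take $\liminf$ over $n$ to get $\Min^G(\A,\B) = \liminf_{n} \frac{\Mi{\A}{\B}(n,G)}{n} \geq \liminf_{n} \frac{f(b_{I_n})}{g(b_{I_n})}$, and it remains to identify the right-hand side with $L := \lim_{b\rightarrow\infty}\frac{f(b)}{g(b)}$. As in the companion lemma, the first hypothesis $\lim_{n}\Min_{\A,\B}(n,G) = -\infty$ forces $b_{I_n} \to \infty$: a bounded subsequence of $(b_{I_n})$ would keep $f(b_{I_n})$ bounded below, since $f$ takes only finitely many values on any finite range of integers, and hence would keep $\A(I_n)-\B(I_n) \geq f(b_{I_n})$ bounded below along that subsequence, contradicting $\A(I_n)-\B(I_n) \to -\infty$. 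Since $b_{I_n} \to \infty$ and the limit $L$ is assumed to exist, $\frac{f(b_{I_n})}{g(b_{I_n})} \to L$, so its $\liminf$ equals $L$ and we conclude $\Min^G(\A,\B) \geq L$. I expect no real obstacle beyond the sign handling in the middle paragraph, as the limiting argument is structurally identical to the one already carried out for $\Max$.
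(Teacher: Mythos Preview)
Your proposal is correct and follows exactly the approach the paper intends: the paper's own proof is simply the remark that the argument is analogous to Lemma~\ref{phases-limit-max}, with the caveat that $f$ takes values in $\mathbb{R}^-$, and you have carefully spelled out precisely that analogy, including the sign bookkeeping needed to push $g(b_{I_n})$ to $|I_n|$ in the denominator. Your argument that $b_{I_n}\to\infty$ is in fact slightly cleaner than the paper's corresponding step for the $\Max$ lemma.
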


\section{Complete Graphs}

As remarked earlier, if the access graph is complete, it incurs no
restrictions, so the result of this section is in the same model
as~\cite{DLM09}.
In~\cite{DLM09}, it is shown that
$[-\frac{k-1}{k}, \frac{k-1}{2k-1}] \subseteq \I(\FIFO, \LRU)$.
Below, we answer an open question from~\cite{DLM09},
proving that this is tight.

\begin{lemma}
\label{any-fifo-lru}
For any access graph $G$,
\[
   -1+\frac{1}{k} \leq \Min^G(\FIFO, \LRU)
\mbox{ and }
   \Max^G( \FIFO, \LRU) \leq \frac{1}{2} - \frac{1}{4k-2}.
\]
\end{lemma}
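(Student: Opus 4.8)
My plan is to establish both bounds as \emph{pointwise} inequalities that hold for every request sequence $I$ (over all pages), since any such inequality holds in particular for every $I\in\LANG{G}$ and therefore bounds $\Mi{\FIFO}{\LRU}(n,G)$ and $\Ma{\FIFO}{\LRU}(n,G)$ for all $n$ and all access graphs $G$ at once. Concretely, I aim to prove that for all $I$,
\[ \Big(-1+\tfrac1k\Big)|I| \;\le\; \FIFO(I)-\LRU(I) \;\le\; \Big(\tfrac12-\tfrac1{4k-2}\Big)|I| + O(1), \]
after which dividing by $|I|$ and taking the appropriate limits gives the lemma with no reference to a specific graph.

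For the left inequality I would partition $I$ into $k$-blocks defined with respect to \LRU, which is a marking, hence conservative, algorithm. If $b$ of these blocks are complete, then \LRU faults exactly $k$ times in each and at most $k-1$ times in the trailing incomplete block, so $\LRU(I)\le kb+(k-1)$; by Lemma~\ref{minimum-cost} every algorithm, \FIFO included, faults at least $b+k-1$ times on the complete-block prefix, so $\FIFO(I)\ge b+k-1$. Subtracting gives $\FIFO(I)-\LRU(I)\ge-(k-1)b$, and since each complete block contains at least its $k$ faulting requests we have $b\le|I|/k$, whence $\FIFO(I)-\LRU(I)\ge-(k-1)|I|/k=(-1+\tfrac1k)|I|$. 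This settles $\Min^G(\FIFO,\LRU)\ge-1+\tfrac1k$.

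The right inequality is the substantial one. The template of Lemma~\ref{phases-limit-max} is tempting---using \FIFO $k$-blocks one gets $\FIFO(I)\le kb+k$ and, by Lemma~\ref{minimum-cost}, $\LRU(I)\ge b+k-1$, hence $\FIFO(I)-\LRU(I)\le(k-1)b+O(1)$---but it falls short: a sequence with $k$ distinct pages per phase and no repeats has length only $kb$, so the best \emph{universally valid} length bound is $g(b)=kb$, yielding the ratio $\tfrac{k-1}{k}$ rather than $\tfrac{k-1}{2k-1}$. The extra slack is invisible block-by-block, so I would instead argue directly. Writing $E$ for the set of positions where \FIFO faults while \LRU hits (the \emph{extra} faults) and $C$ for the positions where \LRU faults while \FIFO hits, the shared faults cancel and $\FIFO(I)-\LRU(I)=|E|-|C|\le|E|$, so it suffices to show $|E|\le\big(\tfrac12-\tfrac1{4k-2}\big)|I|+O(1)=\tfrac{k-1}{2k-1}|I|+O(1)$ (should the extra faults alone not suffice, the same windows can be offset against $C$).

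The heart of the proof, and the step I expect to be hardest, is bounding the rate of extra faults by amortizing each against roughly $2k-1$ requests. The relevant local picture at an extra fault on a page $p$ is that the two eviction rules pull in opposite directions: because $p$ still sits in \LRU's cache, at most $k-1$ \emph{distinct} pages can have been requested since $p$'s previous request; yet \FIFO, evicting in insertion order, can only have expelled $p$ after at least $k$ faults (insertions) following $p$'s last insertion. Reconciling these two facts forces a window of about $2k-1$ requests to be associated with each extra fault, and the technical work is to make this charging global: to pin down, for every extra fault, a block of $\approx 2k-1$ request positions, and to show that these blocks overlap only $O(1)$ times, so that $|E|\le\tfrac{k-1}{2k-1}|I|+O(1)$. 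Once that amortized inequality is in hand, dividing by $|I|$ and taking $\limsup$ gives $\Max^G(\FIFO,\LRU)\le\tfrac12-\tfrac1{4k-2}$, completing the lemma.
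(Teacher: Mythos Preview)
Your treatment of the $\Min$ bound is correct and essentially identical to the paper's: you use \LRU $k$-blocks where the paper uses $k$-phases, but since \LRU is a marking algorithm these coincide, and the arithmetic is the same.

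For the $\Max$ bound, however, there is a genuine gap. Your proposed charging---assigning roughly $2k-1$ request positions to each extra fault with only $O(1)$ overlap---cannot be right as stated. In the tight sequence of Lemma~\ref{cycle-FIFO-LRU-max-lb}, each sub-block $S_i$ has length $2k-1$ and contains exactly $k-1$ extra faults (\FIFO faults that are \LRU hits), so $|E|/|I|\to(k-1)/(2k-1)$. Hence any position is covered on average by $k-1$ of your windows, not $O(1)$; if you allow $\Theta(k)$ overlap the scheme collapses to $|E|\le\tfrac12|I|$, which is weaker than the target. Bounding $|E|$ alone is in fact not how the bound is obtained.

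The paper's argument uses a much lighter charge together with the \LRU lower bound you already quoted. For each extra fault $r$ on a page $p$, your own local observation shows that the most recent prior request $r'$ to $p$ (within the last $k$ distinct pages) must be a \FIFO \emph{hit}; one checks that $r\mapsto r'$ is injective. Working in \FIFO $k$-blocks $B_1,\ldots,B_b$ with $\alpha_i$ \LRU faults in $B_i$, this yields at least $k-\alpha_i$ distinct \FIFO hits per block, hence
\[
|I|\ \ge\ \underbrace{bk}_{\text{\FIFO faults}}\ +\ \underbrace{\bigl(bk-\textstyle\sum_i\alpha_i\bigr)}_{\text{\FIFO hits}}\ =\ 2bk-\textstyle\sum_i\alpha_i.
\]
Combined with $\FIFO(I)-\LRU(I)\le bk+(k-1)-\sum_i\alpha_i$ and $\sum_i\alpha_i\ge b+k-1$ from Lemma~\ref{minimum-cost}, and using that $(u-y)/(v-y)$ is increasing in $-y$ when $u<v$, one gets
\[
\frac{\FIFO(I)-\LRU(I)}{|I|}\ \le\ \frac{bk+(k-1)-\sum_i\alpha_i}{\,2bk-\sum_i\alpha_i\,}\ \le\ \frac{b(k-1)}{b(2k-1)-(k-1)}\ \xrightarrow[b\to\infty]{}\ \frac{k-1}{2k-1}.
\]
So the missing idea is not a heavier window per extra fault, but a \emph{single} associated \FIFO-hit per extra fault; the factor $\tfrac{k-1}{2k-1}$ then enters through the global constraint $\sum_i\alpha_i\ge b+O(1)$ and a fractional inequality, not through the charging itself.
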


\begin{proof}
We first consider the $\Min$ value.
Suppose that a sequence $I$ has $b$ complete $k$-phases.
Since \LRU is conservative and a complete $k$-phase contains
$k$ distinct pages, it cannot fault more than $bk+k-1$ times~\cite{BE97}.
By Lemma~\ref{minimum-cost}, $\FIFO(I) \geq k + b - 1$.
Thus, $\FIFO(I) - \LRU(I) \geq k+b-1-(bk+k-1)=-b(k-1)$.
Each $k$-phase must have length at least $k$, and
$\lim_{b\rightarrow\infty}\frac{-b(k-1)}{bk}=-\frac{k-1}{k}$.
Clearly, $\min_{|I|=n, I \in\LANG{G}} \SET{ \FIFO(I) - \LRU(I) }$
goes towards $-\infty$ as a function of $n$
(see for instance the family of sequences $J_n$ from Lemma~\ref{cycle-FIFO-LRU-min-ub}).
Thus, by Lemma~\ref{phases-limit-min},
$\Min^G(\FIFO,\LRU)\geq-\frac{k-1}{k}=-1+\frac{1}{k}$.

We now consider the $\Max$ value.
Given a request sequence $I$, we let
$B_i$ denote the $i$th $k$-block for \FIFO.
Assume that there are $b$ complete $k$-blocks.
\FIFO faults $k$ times per complete $k$-block and up to $k-1$ times
for the possible final $k$-block.
Thus, $\FIFO(I) \leq bk + (k-1)$.
Assume that \LRU faults $\alpha_i$ times in $B_i$.
By Lemma~\ref{minimum-cost},
\LRU faults at least $b + k - 1$ times.
Thus, $\Sigma_{i=1}^b\alpha_i\geq b+k-1$.

We now compute a lower bound on the length of the request sequence $I$
based on the number of complete $k$-blocks in it and the algorithms'
behavior on it.

As a first step, with every request on which \FIFO faults and \LRU has
a hit, we associate a distinct request where \FIFO has a hit.
Let $r$ be such a request to a page $p$ in $B_i$.
Since it is a hit for \LRU, $p$ must have been requested in the
maximal subsequence of requests $I'$ consisting of $k$ distinct pages
and ending just before $r$.
Consider the first such request, $r'$, in $I'$.
If it were a fault for \FIFO, \FIFO could not have faulted
again on $r$. Thus, $r'$ was a hit for \FIFO and we associate $r'$ with $r$.

To establish that the association is distinct, assume that $r'$ also gets
associated with a request $r''$. Without loss of generality, assume that
$r''$ is later than $r$. For \FIFO to fault on both $r$ and $r''$, there
must be at least $k$ distinct pages different from $p$ in between $r$
and $r''$. However, since we are assuming that \LRU has a hit on $r''$,
by the property of \LRU, the page requested by $r''$
must have been requested during the
same $k$ distinct pages. Thus, by the
construction above, the page that gets associated with $r''$ (and $r$)
will be later than $r$, which is a contradiction.

Thus, if \LRU faults $\alpha_i$ times in $B_i$, by the procedure above,
we identify at least $k-\alpha_i$ distinct requests.
In total, there are at least $\Sigma_{i=1}^b (k-\alpha_i) =
bk-\Sigma_{i=1}^b \alpha_i$ distinct hits for
\FIFO in $I$ and, since there are $b$ complete $k$-blocks, at least
$bk$ faults. Thus, the length of $I$ is at least
$2bk-\Sigma_{i=1}^b \alpha_i$,
and
\[
 \frac{\FIFO(I) - \LRU(I)}{|I|}
 \leq \frac{bk + k-1 - \Sigma_{i=1}^b\alpha_i}{2bk-\Sigma_{i=1}^b \alpha_i}.
\]
By the lower bound on $\Sigma_{i=1}^b\alpha_i$ above, and
the arithmetic observation that $\frac{u-y}{v-y}<\frac{u-x}{v-x}$,
if $u<v$ and $x<y<v$, we have that
\[
\frac{bk + k-1 - \Sigma_{i=1}^b\alpha_i}{2bk-\Sigma_{i=1}^b \alpha_i}
\leq \frac{bk + k-1 - (b+k-1)}{2bk-(b+k-1)}
= \frac{b(k-1)}{b(2k-1)-k+1}.
\]

Clearly, $\max_{|I|=n, I \in\LANG{G}} \SET{ \FIFO(I) - \LRU(I)) }$ is unbounded 
as a function of $n$
(see for instance the family of sequences $I_n$ in Lemma~\ref{Star-Max}).
By Lemma~\ref{phases-limit-max},
$\Max^G(\FIFO,\LRU)\leq\frac{k-1}{2k-1}=\frac{1}{2} - \frac{1}{4k-2}$,
since
$\lim_{b\rightarrow\infty}\frac{b(k-1)}{b(2k-1)-k+1}=\frac{k-1}{2k-1}$.
\end{proof}

From~\cite{DLM09} and Lemma~\ref{any-fifo-lru}, we have the following:
\begin{theorem}\label{thm:G:Main-result}
$\I(\FIFO, \LRU) = [-1+\frac{1}{k}, \frac{1}{2} - \frac{1}{4k-2}]$.
\end{theorem}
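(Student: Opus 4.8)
The plan is to obtain the two endpoints of the interval separately and then invoke the definition $\I(\FIFO,\LRU) = [\Min(\FIFO,\LRU),\Max(\FIFO,\LRU)]$, where for $G$ the complete graph the superscript is dropped. All of the real work has already been done: Lemma~\ref{any-fifo-lru}, specialized to the complete graph, supplies one inequality for each endpoint, and the two families of sequences from~\cite{DLM09} supply the matching inequalities in the opposite direction.

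First I would record what Lemma~\ref{any-fifo-lru} gives. Taking $G$ to be the complete graph (which imposes no restriction on the request sequences, so the special case coincides with the model of~\cite{DLM09}), the lemma states $\Min(\FIFO,\LRU) \geq -1+\frac{1}{k}$ and $\Max(\FIFO,\LRU) \leq \frac{1}{2}-\frac{1}{4k-2}$. These are the ``inner'' bounds on the interval, and they are the only nontrivial inputs to the argument.

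Next I would use the families $I_n$ and $J_n$ constructed in~\cite{DLM09}, for which $\lim_{n\to\infty}\frac{\FIFO(I_n)-\LRU(I_n)}{n} = -1+\frac{1}{k}$ and $\lim_{n\to\infty}\frac{\FIFO(J_n)-\LRU(J_n)}{n} = \frac{1}{2}-\frac{1}{4k-2}$. Since $\Mi{\FIFO}{\LRU}(n)$ is a minimum over \emph{all} length-$n$ sequences, we have $\Mi{\FIFO}{\LRU}(n) \leq \FIFO(I_n)-\LRU(I_n)$; dividing by $n$ and passing to the liminf (using monotonicity of liminf and the fact that the right-hand limit exists) gives $\Min(\FIFO,\LRU) \leq -1+\frac{1}{k}$. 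Symmetrically, $\Ma{\FIFO}{\LRU}(n) \geq \FIFO(J_n)-\LRU(J_n)$ yields $\Max(\FIFO,\LRU) \geq \frac{1}{2}-\frac{1}{4k-2}$. This is exactly the containment $[-\frac{k-1}{k},\frac{k-1}{2k-1}] \subseteq \I(\FIFO,\LRU)$ reported in~\cite{DLM09}, rewritten via the identities $-\frac{k-1}{k} = -1+\frac{1}{k}$ and $\frac{k-1}{2k-1} = \frac{1}{2}-\frac{1}{4k-2}$.

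Combining the two directions forces $\Min(\FIFO,\LRU) = -1+\frac{1}{k}$ and $\Max(\FIFO,\LRU) = \frac{1}{2}-\frac{1}{4k-2}$, and the theorem follows at once. There is no genuine obstacle remaining at the level of the theorem; the difficulty lives entirely in Lemma~\ref{any-fifo-lru}, whose $\Max$ half rests on the injective pairing of each ``\FIFO-fault/\LRU-hit'' request with an earlier \FIFO-hit in order to lower bound $|I|$. The only points requiring care here are purely bookkeeping: matching the two notations for the endpoints, and confirming that the liminf and limsup in the definitions of $\Min$ and $\Max$ interact correctly with the ordinary limits taken along the specific families $I_n$ and $J_n$.
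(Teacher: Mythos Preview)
Your proposal is correct and matches the paper's own proof, which simply cites~\cite{DLM09} for the containment $[-1+\frac{1}{k},\frac{1}{2}-\frac{1}{4k-2}]\subseteq\I(\FIFO,\LRU)$ and Lemma~\ref{any-fifo-lru} for the reverse containment. The extra bookkeeping you mention about liminf/limsup versus ordinary limits along the specific families is routine and not spelled out in the paper either.
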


The following gives general bounds that are applicable to all pairs
of algorithms considered here, though in many cases better bounds are
proven later. The proof was essentially given in the
first paragraph of the proof of Lemma~\ref{any-fifo-lru}.

\begin{proposition}\label{general_bound}
Let \A be a conservative or marking algorithm and \B be any algorithm
for paging, then for any access graph $G$, $\Min^G [ \B,\A] \geq  
-1+\frac{1}{k} $ and
$\Max^G [ \A,\B] \leq 1-\frac{1}{k} $.
\end{proposition}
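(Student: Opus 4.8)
The plan is to prove the lower bound on $\Min^G(\B,\A)$ by a per-sequence estimate and then read off the upper bound on $\Max^G(\A,\B)$ for free from the identity $\Max^G(\A,\B) = -\Min^G(\B,\A)$ recorded after the definition of the relative interval. Thus it suffices to show that $\frac{\B(I)-\A(I)}{|I|} \geq -1 + \frac{1}{k}$ for every sufficiently long $I \in \LANG{G}$, since this forces $\Min^G(\B,\A) = \liminf_{n}\frac{\Mi{\B}{\A}(n,G)}{n} \geq -1 + \frac{1}{k}$, and then $\Max^G(\A,\B) = -\Min^G(\B,\A) \leq 1 - \frac{1}{k}$.

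To get the per-sequence estimate I would partition $I$ into $k$-blocks with respect to \A and let $b$ be the number of complete such blocks. By the definition of a complete $k$-block, \A faults exactly $k$ times in each of them and at most $k-1$ times in the final, possibly incomplete, block, so $\A(I) \leq bk + k - 1$; this bound is purely combinatorial. The hypothesis that \A is conservative or marking enters precisely here: it guarantees that the $k$-blocks of \A are of the kind covered by Lemma~\ref{minimum-cost}, so that \emph{any} algorithm, in particular \B, satisfies $\B(I) \geq b + k - 1$. Combining the two estimates gives $\B(I) - \A(I) \geq (b+k-1) - (bk+k-1) = -b(k-1)$, and since each complete $k$-block contains at least $k$ requests we also have $|I| \geq bk$.

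The final step is to divide and observe that the numerator $-b(k-1)$ is nonpositive, so enlarging the denominator only increases the quotient:
\[
  \frac{\B(I)-\A(I)}{|I|} \geq \frac{-b(k-1)}{|I|} \geq \frac{-b(k-1)}{bk} = -1 + \frac{1}{k},
\]
valid whenever $b \geq 1$. Sequences with $b = 0$ are exactly those on which \A faults fewer than $k$ times, so $\A(I) \leq k-1$ is bounded while $|I|$ may grow without bound; the corresponding quotients satisfy $\frac{\B(I)-\A(I)}{|I|} \geq \frac{-(k-1)}{|I|} \to 0$ and therefore do not pull the liminf below $-1+\frac{1}{k}$. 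Assembling the two cases yields the claimed bounds.

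I expect the only delicate point to be bookkeeping at the boundary rather than anything conceptual: one must keep the constant $k-1$ terms and the partial final block under control so that the clean ratio $-1+\frac{1}{k}$ survives, and one must dispose of the $b=0$ case directly. The latter is why I would argue through the per-sequence inequality above rather than appeal to Lemma~\ref{phases-limit-min} as in the proof of Lemma~\ref{any-fifo-lru}: for a general \B the hypothesis $\lim_{n}\Mi{\B}{\A}(n,G) = -\infty$ of that lemma can fail (for example when \B coincides with \A), whereas the per-sequence bound sidesteps it entirely.
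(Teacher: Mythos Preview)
Your proof is correct and follows essentially the same route as the paper, which simply points back to the first paragraph of the proof of Lemma~\ref{any-fifo-lru}: bound $\A(I)$ above by $bk+k-1$, bound $\B(I)$ below by $b+k-1$ via Lemma~\ref{minimum-cost}, and use $|I|\geq bk$. The only difference is that the paper partitions into $k$-phases while you partition into $k$-blocks of \A, and the paper (implicitly) routes through Lemma~\ref{phases-limit-min} whereas you argue directly per sequence; your observation that the hypothesis $\lim_n \Mi{\B}{\A}(n,G)=-\infty$ of that lemma need not hold for arbitrary \B is well taken and makes your direct argument the cleaner choice here.
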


\subsection{FWF}

\FWF performs very badly compared to the other algorithms
considered here, \LRU, \FAR, and \FIFO. 
The following is folklore:
\begin{lemma}\label{fwf-conservative}
For any sequence $I$ and any conservative or marking algorithm \A,
we have $\A(I) \leq \FWF(I)$. 
\end{lemma}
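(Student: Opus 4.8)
The plan is to reduce the whole statement to a phase-by-phase comparison, exploiting the fact that \FWF's behaviour is completely determined by the $k$-phase structure. First I would pin down \FWF exactly: \FWF empties its cache precisely when it faults on a full cache, and this happens exactly on the request to the $(k+1)$st distinct page since the last flush, i.e.\ at the first request of each new $k$-phase. Consequently \FWF begins every phase with an empty cache, never evicts within a phase, and so faults exactly once on each distinct page of that phase. If $d_i$ denotes the number of distinct pages in phase~$i$, then $\FWF(I)=\sum_i d_i$, with $d_i=k$ for every complete phase and $d_i\le k$ for the (possibly incomplete) last phase.

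Next I would handle the marking case, which is the easy half. If \A is a marking algorithm, then once a page has been requested in a phase it is not evicted before the end of that phase, so within a single phase \A can fault at most once on each distinct page; hence \A faults at most $d_i$ times on phase~$i$. Summing over all phases gives $\A(I)\le\sum_i d_i=\FWF(I)$.

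For the conservative case I would aim for the same per-phase bound, namely that \A faults at most $d_i$ times on phase~$i$. For a complete phase ($d_i=k$) this is immediate from the definition of conservativeness applied to the phase itself, since the phase is a window containing exactly $k$ distinct pages. The delicate point, and the step I expect to be the main obstacle, is the last, incomplete phase, where $d_i=j<k$ and conservativeness only yields the weaker bound of $k$ faults. To get the sharp bound I would prove the auxiliary statement that a conservative algorithm faults at most $d$ times on any window $W$ containing $d\le k$ distinct pages. Since the inequality $\A(I)\le\FWF(I)$ is claimed for an arbitrary sequence with no access-graph restriction, I am free to argue about a modified input: append, immediately after $W$ (to the prefix of $I$ that ends with $W$), a block of $k-d$ pairwise distinct pages that have never been requested before. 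On this new input \A behaves identically up to the end of $W$, since an online algorithm's actions depend only on the prefix it has seen; it is still conservative; and the enlarged window now has exactly $k$ distinct pages, so \A faults at most $k$ times on it. But each of the $k-d$ appended pages is fresh and hence necessarily a fault, forcing at least $k-d$ of those faults to lie outside $W$ and leaving at most $d$ faults inside $W$.

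Applying this auxiliary bound to each phase gives the per-phase inequality $\le d_i$ for conservative \A as well, and summing over phases again yields $\A(I)\le\sum_i d_i=\FWF(I)$, completing both cases. The only facts I rely on beyond the definitions are that a never-requested page cannot be in cache, so a fresh request is always a fault, and that \A is deterministic with actions determined by the observed prefix; both are standard for the paging model used here.
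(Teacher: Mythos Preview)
Your proof is correct. The paper does not prove this lemma at all---it labels it folklore and moves on---so there is no approach to compare against. Your phase-by-phase decomposition, with $\FWF(I)=\sum_i d_i$ and the per-phase bound for $\A$, is the natural argument, and the handling of the incomplete final phase in the conservative case (padding with $k-d$ fresh pages so the enlarged window has exactly $k$ distinct pages, then subtracting the $k-d$ forced faults) is precisely what is needed to get the sharp bound $d$ rather than the trivial bound $k$ and make the exact inequality go through.

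One small remark on presentation: the reason you may pass to the padded input is not that the lemma is stated without access-graph restrictions, but that conservativeness is a property of the algorithm on \emph{every} input, so it automatically holds on the modified sequence. You do note this in passing (``it is still conservative''), but the emphasis in your justification is slightly misplaced.
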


This implies that for any access graph $G$, $\A^G(I) \leq \FWF^G(I)$ and so
\[\Min^G[\FWF, \LRU] = \Min^G[\FWF,\FIFO] = \Min^G[\FWF,\FAR] = 0.\]
Thus, \LRU, \FIFO, and \FAR all dominate \FWF.

The upper bound of $1-\frac{1}{k}$ from Proposition~\ref{general_bound} is tight for \FWF versus either \LRU, for any
access graph containing a path on $k+1$ vertices, and it is
tight for \FWF versus \FAR on a path containing at least
$k+1$ vertices. Note that a cycle on $k+1$ vertices contains
a path on $k+1$ vertices, but \FAR does not behave identically
on these two graphs.

\begin{theorem}\label{FWF:P}
For the path access graph $P_N$, where $N\geq k+1$ (and for \LRU for any graph
containing $P_{k+1}$), and $\A \in \{ \LRU,  \FAR\}$,
\[ \I^{P_N}[\FWF, \A] = \left[0, 1 - \frac{1}{k} \right]. \]
\end{theorem}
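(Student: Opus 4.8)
The plan is to establish the two endpoints of the interval separately, and the left endpoint is essentially free. Since \FWF is a marking algorithm, Lemma~\ref{fwf-conservative} gives $\A(I)\le\FWF(I)$ for every $I\in\LANG{P_N}$ and every $\A\in\{\LRU,\FAR\}$, so $\FWF(I)-\A(I)\ge 0$ always; combined with sequences on which both incur equal cost, this yields $\Min^{P_N}[\FWF,\A]=0$, exactly as already recorded in the discussion following Lemma~\ref{fwf-conservative}. For the right endpoint I would first dispatch the upper bound. Because \FWF is itself a marking algorithm, Proposition~\ref{general_bound}, applied with the marking algorithm in the \emph{first} argument, immediately gives $\Max^{P_N}[\FWF,\A]\le 1-\frac1k$ with no further work. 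Thus the entire content of the theorem reduces to a matching lower bound $\Max^{P_N}[\FWF,\A]\ge 1-\frac1k$.

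For that lower bound I would exhibit an explicit family of request sequences confined to a sub-path of $k+1$ consecutive vertices of $P_N$ (so $N\ge k+1$ suffices, and the sequence respects $P_N$). Label these vertices $1,2,\dots,k+1$ and let the sequence sweep back and forth, $1,2,\dots,k,k+1,k,k-1,\dots,2,1,2,\dots$, with $I_n$ its length-$n$ prefix. The key is that \FWF and $\A$ behave in diametrically opposite ways. After the initial load of $1,\dots,k$, \emph{every} subsequent request is a fault for \FWF: each turning request ($k+1$ at the top, $1$ at the bottom) arrives with a full cache and triggers a flush, and the requests in between merely refill the just-emptied cache, each one a fault. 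By contrast, once \LRU holds a window of $k$ consecutive pages it serves an entire sweep for free and faults only at the two turning points, evicting the opposite endpoint each time; hence \LRU faults exactly twice per round trip of $2k$ requests.

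The step that requires genuine checking---and the main obstacle---is confirming that \FAR faults no more often than \LRU on this sequence. Here I would trace \FAR's marking behaviour through one round trip: when all of $1,\dots,k$ have become marked and $k+1$ is requested, \FAR unmarks everything and must evict the unmarked page farthest from the freshly marked $k+1$, which on the path is the opposite endpoint $1$; symmetrically, at the bottom turn it evicts $k+1$. Since the farthest unmarked page is \emph{unique} at each turn, the \LRU tie-break is never invoked, and \FAR's cache contents coincide with \LRU's throughout, giving $\FAR(I_n)=\LRU(I_n)$. Consequently, for both choices of $\A$, one round trip contributes $2k$ faults to \FWF and only $2$ to $\A$, so $\frac{\FWF(I_n)-\A(I_n)}{n}\to\frac{2k-2}{2k}=1-\frac1k$ as $n\to\infty$, the constant cost of the initial load washing out in the limit. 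This establishes $\Max^{P_N}[\FWF,\A]\ge 1-\frac1k$, which together with the upper bound proves the claimed interval.
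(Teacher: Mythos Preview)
Your proposal is correct and follows essentially the same route as the paper: the same back-and-forth sequence on $k+1$ vertices, Proposition~\ref{general_bound} for the upper bound on $\Max$, and Lemma~\ref{fwf-conservative} for the left endpoint. The only difference is cosmetic---the paper indexes by number of repetitions of the block $\SEQ{1,\ldots,k+1,k,\ldots,2}$ rather than by length---and you in fact supply more detail than the paper does in verifying that \FAR evicts the opposite endpoint at each turn and hence tracks \LRU exactly.
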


\begin{proof}
Consider the sequence $I_n = \SEQ{1,2,\ldots, k, k+1, k, \ldots, 2}^n$.
For this we have $\LRU(I_n) =\FAR(I_n) =2n +k-1$, and $\FWF(I_n)=2kn$.
Therefore, 
\[\lim_{n \rightarrow \infty}\frac{\FWF(I_n) - \LRU(I_n)}{|I_n|} = 
\lim_{n \rightarrow \infty}\frac{\FWF(I_n) - \FAR(I_n)}{|I_n|} = 
\frac{k-1}{k}.
\]

By Proposition~\ref{general_bound}, this gives $\Max^{P_N}(\FWF, \LRU) = 
\Max^{P_N}(\FWF, \FAR) =  1 - \frac{1}{k}$. Lemma~\ref{fwf-conservative} shows that 
\LRU and \FAR dominate \FWF. 
\end{proof}

The same tight result for \FWF versus \FIFO almost holds.

\begin{theorem}\label{FWF-FIFO}
For any graph $G$ containing a path with $k+1$ vertices, 
if $k$ is odd, then
\[\I^{G}[\FWF, \FIFO] = \left[0, 1-\frac{1}{k} \right],\]
and if $k$ is even, then
\[\left[0,\frac{k^2-k-1}{k^2}\right]\subseteq\I^G[\FWF, \FIFO] \subseteq \left[0, \frac{k-1}{k} \right].\]
\end{theorem}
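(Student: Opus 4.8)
The plan is to treat the two endpoints of $\I^{G}[\FWF,\FIFO]$ separately, since the left endpoint is forced to be $0$ and all the content lies in the right endpoint $\Max^{G}(\FWF,\FIFO)$. For the left endpoint, \FIFO is conservative, so Lemma~\ref{fwf-conservative} gives $\FIFO(I)\le\FWF(I)$, hence $\FWF(I)-\FIFO(I)\ge 0$ on every $I\in\LANG{G}$ and $\Min^{G}(\FWF,\FIFO)\ge 0$; the matching upper bound comes from any sequence on which the two algorithms fault identically (for instance a single repeated page, or any walk on which both fault on every request), so $\Min^{G}(\FWF,\FIFO)=0$ and the interval has the form $[0,\beta]$. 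For the right endpoint, the easy direction is $\beta\le 1-\frac1k$: since \FWF is a marking algorithm, this is exactly Proposition~\ref{general_bound} with $\A=\FWF$ and $\B=\FIFO$. This already yields the stated outer bound in both parity cases, so the whole theorem reduces to producing families of request sequences that push $\Max^{G}(\FWF,\FIFO)$ up to $1-\frac1k$ when $k$ is odd and to $\frac{k^{2}-k-1}{k^{2}}$ when $k$ is even.

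For the lower bound I would work inside a fixed path on $k+1$ vertices sitting in $G$ and build a periodic family $I_n$, then compute $\FWF(I_n)$, $\FIFO(I_n)$, and $|I_n|$ exactly and let $n\to\infty$; because $\Max$ is a limit superior, a single good family suffices to lower-bound it. The quantity to control is the number of positions at which \FWF faults while \FIFO hits: on every request where both fault (or both hit) the contribution to $\FWF(I)-\FIFO(I)$ is $0$, and positions where \FIFO faults but \FWF hits cannot occur here by Lemma~\ref{fwf-conservative}. Thus $\FWF(I)-\FIFO(I)$ equals the number of \FWF-faults that are \FIFO-hits, and the target ratio $1-\frac1k$ is reached precisely when \FWF faults on essentially every request (each $k$-phase consists of $k$ distinct pages, so \FWF flushes once and faults $k$ times per phase) while \FIFO's fault rate drops to $\frac1k$, i.e.\ \FIFO faults on average once per phase. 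The design goal for the walk is therefore to make each phase reuse $k-1$ of the pages \FIFO already holds, so that \FIFO faults only on the single genuinely new page while \FWF, having just flushed, faults on all $k$.

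The main obstacle is that on a path this reuse fights the eviction order of \FIFO: when the walk reverses direction, \FIFO hits on the returning pages (these are exactly the profitable ``\FWF-fault/\FIFO-hit'' positions), but on the following forward sweep \FIFO tends to evict, one step ahead, the very pages it is about to request, producing a cascade of faults that are shared with \FWF and hence wasted. Keeping \FIFO's queue aligned with the access pattern across a phase boundary --- so that the page evicted when the new page enters is exactly the page leaving the current window --- is what limits how close the fault rate can get to $\frac1k$. I expect the heart of the proof to be a bookkeeping argument over one super-period of the walk (of length $\Theta(k^{2})$, comprising $k$ phases) showing that the queue realigns after a full super-period, and that the realignment is perfect when $k$ is odd but, because a path is bipartite and a closed sequence of window shifts has constrained parity, incurs exactly one unavoidable extra \FIFO fault per super-period when $k$ is even. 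That single extra fault per $k^{2}$ requests is what separates $\frac{k^{2}-k}{k^{2}}=1-\frac1k$ from $\frac{k^{2}-k-1}{k^{2}}$, and accounts for the parity split in the statement.

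Finally, to assemble the interval: the upper bound $\beta\le 1-\frac1k$ in both parity cases is Proposition~\ref{general_bound}, so in the odd case it remains only to exhibit a family attaining $1-\frac1k$ to conclude $\Max^{G}(\FWF,\FIFO)=1-\frac1k$, while in the even case the explicit family certifies $\Max^{G}(\FWF,\FIFO)\ge\frac{k^{2}-k-1}{k^{2}}$ against the same outer bound. If one prefers a self-contained upper bound, Lemma~\ref{phases-limit-max}, applied with $f(b)$ and $g(b)$ read off from the per-phase counts, gives the same conclusion. The delicate points to verify are that the constructed $I_n$ genuinely keeps \FWF at $k$ faults per phase --- no accidental \FWF hit at a phase boundary, which would lower the numerator --- and that the exact \FIFO fault count over one super-period is $k$ for odd $k$ and $k+1$ for even $k$; these two calculations are where the parity enters and are the part I expect to require the most care.
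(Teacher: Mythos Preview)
Your high-level plan matches the paper's: $\Min^{G}(\FWF,\FIFO)=0$ from Lemma~\ref{fwf-conservative}, the upper bound $\Max^{G}(\FWF,\FIFO)\le 1-\tfrac1k$ from Proposition~\ref{general_bound}, and then an explicit periodic walk on a $(k{+}1)$-path in which \FWF faults on every request while \FIFO's fault rate is pushed down towards $\tfrac1k$. So the skeleton is right, and the outer containments are fully justified.

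Where your proposal diverges from the paper is in the mechanism behind the parity split and in what the construction actually looks like. You conjecture a super-period of fixed length $k^{2}$ in which \FIFO makes $k$ faults for odd $k$ and $k+1$ for even $k$, and you attribute the extra fault to a bipartite/parity obstruction. The paper's construction works the other way around: \FIFO makes exactly $k+1$ faults per super-period in \emph{both} parities, but the super-period length is $2kh$ with $h=\lfloor(k+1)/2\rfloor$, hence $k(k+1)$ when $k$ is odd and $k^{2}$ when $k$ is even. The ratios $(2kh-(k+1))/(2kh)$ then give $1-\tfrac1k$ and $(k^{2}-k-1)/k^{2}$ respectively. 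So the parity enters through the period length, not through an extra \FIFO fault; your bipartite heuristic is not what is doing the work.

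The piece your plan is genuinely missing is the \emph{priming} sequence. The paper first runs a carefully chosen preamble $R$ (built from nested back-and-forth sweeps around a midpoint $h$) whose sole purpose is to force \FIFO's queue into the specific cyclic order $h,\,h{+}1,\,h{-}1,\,h{+}2,\,h{-}2,\ldots$. Only with the queue in this order does the periodic block $J=\langle k{+}1,k,\ldots,1,2,\ldots,k\rangle^{h}$ cause \FIFO to fault exactly twice per sweep (once going left, once going right) and return to the same queue state after $h$ sweeps. Without such a preamble, the ``realignment after one super-period'' you are hoping for does not occur on its own: a generic starting queue will produce a cascade of \FIFO faults on the first few sweeps and need not ever settle into the one-fault-per-direction regime. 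This is the step that requires real care, and your proposal does not yet supply it.
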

\begin{proof}
Let $h=\lfloor (k+1)/2 \rfloor$.
Define the subsequence
\[S_i = \SEQ{ h+i,h+i-1,...,h,...,h-i,h-i+1,...,h,...,h+i}\]
and define the subsequence $R$ which starts with page $h$ and
then requests $S_1,S_2,...,S_{h-1}$.
This initial part of every sequence in our family of sequences ensures
that \FIFO's order for faulting is always $\SEQ{h,h+1,h-1,h+2,h-2,...,2h-1,1}$.
The value $2h-1$ is $k$ if $k$ is odd and $k-1$ if $k$ is even.

Suppose $k$ is odd.
Let $I_n =\langle R,K_n\rangle$, where
$J = \SEQ{ k+1,k,...,1,2,...k}^{h}$ and
$K_n = J^n$.
\FWF and \FIFO fault the same number of times on $R$.
\FWF faults $2khn$ times on  $K_n$.
On the first request to $k+1$ in $I_n$, \FIFO evicts $h$. Thus,
after the fault on $k+1$, its only fault while going ``left'' 
(towards lower page numbers) for the
first time in $J$ is on $h$, and its only fault going ``right'' is
on $h+1$. On the $i$th iteration ($i\leq h-1$) of $J$, it faults on $h-i+1$
going left and on $h+i$ going right. On iteration $h$, it only faults on
$1$, so \FIFO has the same cache configuration immediately after $J$
as it had immediately before. Thus, \FIFO has $k+1$ faults on $J$,
giving $(k+1)n$ in all. The number of requests in $K_n$ is 
$2kn$. Thus, $\lim{n\rightarrow \infty}\frac{\FWF(I_{n}) - \FIFO(I_{n})}
{|I_{n}|} = \frac{2khn-(k+1)n}{2khn}=\frac{k-1}{k}$.

Suppose $k$ is even. We define similar sequences, but
let $I_n=\langle R,k,K_n\rangle$, since $k$ is not requested yet.
\FIFO will still fault $k+1$ times on $J$, but
\[\lim{n\rightarrow \infty}\frac{\FWF(I_{n}) - \FIFO(I_{n})}
{|I_{n}|} = \frac{2khn-(k+1)n}{2khn}=\frac{k^2-k-1}{k^2}.\]
Lemma~\ref{fwf-conservative} shows that \FIFO dominates \FWF.
\end{proof}

\section{Path Graphs}
In this section, we analyze path access graphs, $P_N$, with $N$ vertices.
We assume that $N\geq k+1$, since otherwise, results become trivial.
\begin{lemma} \label{path-upper-bound}
For the path access graph $P_N$,
$$\Max^{P_N}(\FIFO,\LRU)\leq\frac{1}{2}-\frac{1}{2k}.$$
\end{lemma}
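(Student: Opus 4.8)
The plan is to apply Lemma~\ref{phases-limit-max} with $b_I$ taken to be the number of complete $k$-phases of $I$. For the length bound I use $g(b)=bk$, which is valid since every complete $k$-phase contains $k$ distinct pages and hence at least $k$ requests. The hypothesis $\lim_{n\to\infty}\Ma{\FIFO}{\LRU}(n,P_N)=\infty$ is witnessed by a family of path sequences on which \FIFO beats \LRU by an unbounded amount (for instance the lower-bound family behind Theorem~\ref{thm:path}). Everything then reduces to proving a single inequality: for every $I\in\LANG{P_N}$ with $b$ complete $k$-phases,
\[ \FIFO(I)-\LRU(I)\ \le\ \tfrac{b(k-1)}{2}+O(k), \]
since $\lim_{b\to\infty}\frac{b(k-1)/2+O(k)}{bk}=\frac{k-1}{2k}=\frac12-\frac1{2k}$, which is exactly the asserted bound.

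First I would pin down the geometry of phases on a path. Because any walk on $P_N$ visits a contiguous set of vertices, the $k$ distinct pages of each complete $k$-phase form an interval $W_i=[\ell_i,\ell_i+k-1]$ of $k$ consecutive vertices, and consecutive phases overlap: the interval shifts by $s_i:=|\ell_i-\ell_{i-1}|\in\{1,\dots,k\}$. At the first request of phase~$i$ the $k$ most recently requested distinct pages are exactly those of phase~$i-1$, so \LRU's warmed-up cache equals $W_{i-1}$; since \LRU is a marking algorithm and all requests of phase~$i$ lie in $W_i$, \LRU faults in phase~$i$ exactly on the $|W_i\setminus W_{i-1}|=s_i$ newly entering vertices. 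Summing, $\LRU(I)=\sum_i s_i+O(k)$.

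It remains to bound \FIFO. Writing $F_i$ for \FIFO's faults in phase~$i$, \FIFO certainly faults on the $s_i$ new vertices, but---unlike \LRU---it may also fault on vertices still inside the current window that it has already evicted, because it orders its cache by insertion time rather than by recency. Let $E_i=F_i-s_i\ge 0$ count these extra faults; since \FIFO is conservative, $F_i\le k$, so $E_i\le k-s_i$, and the target inequality is equivalent to $\sum_i E_i\le\frac{b(k-1)}{2}+O(k)$. The mechanism that should make this hold is a trade-off: when the shift $s_i$ is large the window moves onto fresh ground and $E_i\le k-s_i$ is already small, whereas when $s_i$ is small the phase must re-traverse the $k-s_i$ shared vertices, and each such re-traversal re-inserts those pages into \FIFO's cache, preventing it from faulting on them repeatedly. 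I would make this precise by an amortized argument over consecutive phases, establishing a bound of the form $E_i+E_{i+1}\le k-1$; telescoping this over the $b$ phases yields $\sum_i E_i\le\frac{(b-1)(k-1)}{2}+O(k)$, as required.

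The main obstacle is exactly this per-pair bound on \FIFO's extra faults. The difficulty is that \FIFO's cache is not an interval, and whether a shared window vertex survives in \FIFO's cache depends intricately on the order in which vertices were last inserted, which in turn depends on the direction reversals of the walk. I expect the cleanest way to control this is a potential function measuring the discrepancy between \FIFO's cache and the current window $W_i$---for instance the number of window vertices \FIFO is currently missing, weighted by their positions in \FIFO's insertion order---and to show that each extra \FIFO fault either decreases this potential or is paid for by a corresponding shift. The constant $\frac{k-1}{2}$ should then emerge as the average extra cost per phase once the potential is amortized, and combining this with $\LRU(I)=\sum_i s_i+O(k)$ and $g(b)=bk$ closes the argument through Lemma~\ref{phases-limit-max}.
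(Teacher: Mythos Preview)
Your proposed pairwise inequality $E_i+E_{i+1}\le k-1$ is false, and with it the whole amortization collapses. Take $k=4$ and the walk $5,6,7,8\mid 9,8,7,6\mid 5,6,7,8,7,6,5\mid 4,5,6,7,6,5,4$ on a long path. After the second phase \FIFO's queue is $(6,7,8,9)$; in the third phase (window $[5,8]$, shift $s=1$) the request $5$ evicts $6$, then $6$ evicts $7$, etc., so $F_3=4$ and $E_3=3$. After that \FIFO's queue is $(5,6,7,8)$, and the fourth phase (window $[4,7]$, again $s=1$) cascades in exactly the same way: $E_4=3$. Thus $E_3+E_4=6>k-1$. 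In fact one can keep shifting left and get $E_i=k-1$ for arbitrarily many consecutive phases, so even your weaker target $\sum_i E_i\le \tfrac{b(k-1)}{2}+O(k)$ fails on finite prefixes: with $m$ consecutive cascading phases you get $\sum E_i\approx m(k-1)$ while $\tfrac{b(k-1)}{2}\approx \tfrac{m(k-1)}{2}$.

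The reason the lemma nevertheless holds is exactly what your decomposition throws away: those cascading phases are \emph{long}. When phase~$i$ shifts left by~$1$ and is followed by another left shift, the walk must go from the new left endpoint to the right end of the window and back, forcing length $\ge 2k-1$, not $k$. So the excess $E_i=k-1$ is paid for by the extra $k-1$ requests in that same phase, not by the next phase's budget. Your length bound $g(b)=bk$ is too coarse to see this, and no potential on \FIFO's cache alone can manufacture the missing length. The paper handles this by abandoning $k$-phases entirely: it cuts the sequence at direction reversals of \LRU's faults, and within each piece shows directly that whenever \FIFO accumulates $f$ faults on a stretch with $\le k$ distinct pages, that stretch has length at least $2f-1$ (since all but the leftmost of the $f$ faulted pages must be requested twice). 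That coupling between \FIFO's fault count and sequence length is what produces the denominator $2k$; any fix to your argument would have to recover it.
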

\begin{proof}
Consider any request sequence $I$.
We divide the sequence up into phases as described now
(these are {\em not} $k$-phases).
Initially, define a direction by where \LRU makes its $k$th
fault compared with its cache content. Without loss of generality,
we assume this happens going to the right on the path.

We start the first phase with the first request and later explain how
subsequent phases are started.
In all the phases, we start to the left (relatively).
In all phases, except the first, \LRU has the first
$k-1$ distinct pages that will be requested during that phase in cache.
In all phases, the first fault by \LRU in the phase, after having 
processed the first $k-1$ distinct pages, is to the right.
We maintain this as an invariant that holds at the start of any phase,
though the direction can change, as we will get back to at the
end of the proof.
The exception in the first phase, adding an extra $k-1$ faults
to the cost of \LRU as compared with the analysis below,
will not influence the result in the
the limit for the length of the request sequence going towards infinity.

We want to analyze a phase where \LRU faults to the right before
it faults to the left again. These faults to the right may not
appear consecutively. There may be some faults in a row, but then
there may be hits and then faults again, etc. Thus,
assume that there are $m$ maximal subsequences of requests to the
right where \LRU faults---all of this before \LRU faults
going to the left again.
Assume further that these maximal subsequences of requests give rise to
$s_1, s_2, \ldots, s_m$ faults, respectively, where, by definition,
$m\geq 1$, and let $s=\Sigma_{i=1}^m s_i$.

\medskip
\begin{center}
\begin{picture}(300,82)(0,-30)
\put(0,0){\line(1,0){300}}
\put(20,-2){\line(0,1){4}}
\put(20,4){\makebox(0,0)[b]{$E_{\mathrm{left}}$}}
\put(280,-2){\line(0,1){4}}
\put(280,4){\makebox(0,0)[bl]{$E_{\mathrm{right}}$}}
\put(150,40){\makebox(0,0)[b]{$\overbrace{\makebox(260,0)[b]{}}^{k+t}$}}
\put(240,20){\makebox(0,0)[b]{$\overbrace{\makebox(80,0)[b]{}}^{s}$}}
\put(100,-2){\line(0,1){4}}
\put(105,4){\makebox(0,0)[b]{$s_1$}}
\put(110,-2){\line(0,1){4}}
\put(200,-2){\line(0,1){4}}
\put(205,4){\makebox(0,0)[b]{$s_1$}}
\put(210,-2){\line(0,1){4}}
\put(215,4){\makebox(0,0)[b]{$s_2$}}
\put(220,-2){\line(0,1){4}}
\put(260,-2){\line(0,1){4}}
\put(270,4){\makebox(0,0)[b]{$s_m$}}
\put(100,-10){\line(1,0){108}}
\put(112,-14){\line(1,0){96}}
\put(112,-18){\line(1,0){106}}
\put(122,-22){\line(1,0){96}}
\put(122,-26){\line(1,0){30}}
\path(152,-26)(152,-26)
\path(154,-26)(154,-26)
\path(156,-26)(156,-26)
\path(158,-26)(158,-26)
\path(160,-26)(160,-26)
\path(162,-26)(162,-26)
\path(164,-26)(164,-26)
\path(166,-26)(166,-26)
\path(168,-26)(168,-26)
\path(170,-26)(170,-26)
\path(172,-26)(172,-26)
\path(174,-26)(174,-26)
\path(176,-26)(176,-26)
\path(178,-26)(178,-26)
\path(180,-26)(180,-26)
\path(182,-26)(182,-26)
\path(184,-26)(184,-26)
\path(186,-26)(186,-26)
\path(188,-26)(188,-26)
\path(190,-26)(190,-26)
\path(192,-26)(192,-26)
\path(194,-26)(194,-26)
\path(196,-26)(196,-26)
\path(198,-26)(198,-26)
\path(200,-26)(200,-26)
\path(202,-26)(202,-26)
\path(204,-26)(204,-26)
\path(206,-26)(206,-26)
\path(208,-26)(208,-26)
\path(210,-26)(210,-26)
\path(212,-26)(212,-26)
\path(214,-26)(214,-26)
\path(216,-26)(216,-26)
\path(218,-26)(218,-26)
\path(220,-26)(220,-26)
\path(222,-26)(222,-26)
\path(224,-26)(224,-26)
\path(226,-26)(226,-26)
\path(228,-26)(228,-26)
\path(230,-26)(230,-26)
\path(232,-26)(232,-26)
\path(234,-26)(234,-26)
\path(236,-26)(236,-26)
\path(238,-26)(238,-26)
\path(240,-26)(240,-26)
\path(242,-26)(242,-26)
\path(244,-26)(244,-26)
\path(246,-26)(246,-26)
\path(248,-26)(248,-26)
\put(248,-26){\line(1,0){30}}
\put(20,-30){\line(1,0){258}}
\put(112,-16){\oval(4,4)[l]}
\put(122,-24){\oval(4,4)[l]}
\put(208,-12){\oval(4,4)[r]}
\put(218,-20){\oval(4,4)[r]}
\put(278,-28){\oval(4,4)[r]}
\end{picture}
\end{center}
\medskip

For now, we assume that for all $i$, $s_i<k$.
Thus, \LRU moves left and right at least $m$ times;
maybe more times where it does not give rise to faults.
Since it does not fault going to the left during these turns,
the faults are to pages further and further to the right.
Let $E_{\mathrm{right}}$ denote the extreme rightmost position
it reaches during these faults to the right.

When \LRU faults again to the left after having processed $E_{\mathrm{right}}$,
we consider the leftmost node
$E_{\mathrm{left}}$, where \LRU faults after the $s$ faults described above,
but before it faults to the right again.
We end the phase with the first request to $E_{\mathrm{left}}$ after
the $s$ faults.
We define
subsequent phases inductively in the same way, starting with the first request
not included in the previous phase,
possibly leaving an incomplete phase at the end.

We now consider the costs of the algorithms and the length of the
sequence per phase.
\LRU faults $s$ times going to the right during the $m$ turns in the phase.
Additionally, \LRU must fault at least $t$ times going from
$E_{\mathrm{right}}$ to $E_{\mathrm{left}}$, where $t$ is defined by there
being $k+t$ nodes between $E_{\mathrm{left}}$ and $E_{\mathrm{right}}$,
including both endpoints.
This sums up to $s+t$ faults.

For \FIFO, we postpone the discussion of the first $s_1$ distinct
pages seen in a phase. Just to avoid any confusion, note that
these pages are immediately to the right of $E_{\mathrm{left}}$
(the endpoint of the previous phase) and thus not the pages that \LRU faults on.
After that, consider the maximal subsequence
of at most $k$ distinct pages. This subsequence starts with the
$(s_1+1)$st distinct request (the last request to it before the $s_2$
faults) and continues up to, but not including
the first request that \LRU has one of its $s_2$ faults on.
We know that there are at most $k$ pages there, because
\LRU  only faults $s_1$ times there.
Assume that \FIFO faults $f_1$ times on this subsequence.
Since \FIFO is conservative, $f_1\leq k$.

We define more such
subsequences repeatedly, the $(m-1)$st of these ending just before
\LRU's first fault of the $s_m$ faults, and the $m$th including the
$s_m$ faults and $k$ of the $k+t$ nodes before we reach $E_{\mathrm{left}}$.
Finally, we return to the question of the first $s_1$ distinct
pages seen in the phase. These overlap with the
``$t$ pages'' from the previous phase; otherwise we would not have started
the phase where we did. If \FIFO faults on one of these pages
when going through the $t$ pages in the previous phase, it will
not fault on them again in this phase. Thus, we only have to count them in
one phase, and choose to do this in the previous phase.
In total, \FIFO faults at most $(\Sigma_{i=1}^m f_i) + t$ times,
and for all $i$, $f_i\leq k$.

The difference between the cost of \FIFO and \LRU is then at most
$(\Sigma_{i=1}^m f_i) + t - (s + t)= (\Sigma_{i=1}^m f_i) - s
= (\Sigma_{i=1}^m (f_i - 1)) - (s - m)$.

From the analysis of \FIFO above, knowing that on a subsequence
of length at most $k$, \FIFO can fault at most once on any given page,
if it faults $f_i$ times, the subsequence has at least $f_i$
distinct pages.
Given that the subsequence starts at the left end of the ``$s_i$ pages''
and ends at the right end of the ``$s_i$ pages'', all pages that
\FIFO faults on, except possibly the leftmost, must be requested at
least twice, giving at least $2f_i - 1$ requests.
So, the length of the sequence is at least
$(\Sigma_{i=1}^m (2f_i - 1)) + t$.
We now sum up over all phases, equipping each variable with
a superscript denoting the phase number.

First, the total length, $L$, is at least
\[
L \geq \Sigma_j (\Sigma_{i=1}^{m^j} (2f_i^j - 1)) + t^j
   =    \Sigma_j (\Sigma_{i=1}^{m^j} 2f_i^j) - m^j + t^j
.
\]
Since $s$ expresses how far we move to the right and $t$ how
far we move to the left, and the whole path has a bounded number of nodes~$N$, 
we have that $\Sigma_j t^j \geq \Sigma_j s^j - N$.
Thus,
$L \geq (\Sigma_j (\Sigma_{i=1}^{m^j} 2f_i^j) - m^j + s^j) - N$.

$I$ has a number of
complete phases and then some extra requests in addition to that.
There must exist a fixed constant $c$ independent of $I$ such that the cost
of \FIFO on the extra part of any sequence is bounded by $c$.
This follows since there
is a limit of $N$ on how far requests can move to the right.
So if requests never again come so far
to the left that \LRU faults, all requests thereafter
are to only $k$ pages.
This added constant can also take care of the initial extra cost of $k-1$.
Since we are just using a lower bound on the sequence length,
we can ignore the length of a possibly incomplete phase at the end.
Thus,
\[\begin{array}{rcl}
\displaystyle
\frac{\FIFO(I) - \LRU(I)}{|I|}
&
\displaystyle
\leq
&
\displaystyle
\frac{c + \Sigma_j\Sigma_{i=1}^{m^j} (f_i^j - 1) - (s^j - m^j)}{-N + \Sigma_j (\Sigma_{i=1}^{m^j} 2f_i^j) - m^j + s^j} \\[4ex]
&
\displaystyle
\leq
&
\displaystyle
\frac{c + \Sigma_j\Sigma_{i=1}^{m^j} (f_i^j - 1)}{-N + \Sigma_j \Sigma_{i=1}^{m^j} 2f_i^j} \\[4ex]
&
\displaystyle
\leq
&
\displaystyle
\frac{c + \Sigma_j m^j (k - 1)}{-N + \Sigma_j m^j 2k} \\[4ex]
&
\displaystyle
=
&
\displaystyle
\frac{c + (k - 1) \Sigma_j m^j }{-N + 2k \Sigma_j m^j}
\end{array}
\]

The second inequality follows since $s^j\geq m^j$, and
the third inequality follows because
$\frac{f_i^j - 1}{2 f_i^j}\leq\frac{1}{2}$ and $k\geq f_i$ implies
that $\frac{f_i^j - 1}{2 f_i^j}\leq\frac{k-1}{2k}$.

For sequences where the number of phases does not approach infinity,
as argued above, \FIFO's cost will be bounded. For the number of phases
approaching infinity,
$\lim_{j\rightarrow\infty}\frac{c + (k - 1) \Sigma_j m^j }{-N + 2k \Sigma_j m^j}
=\frac{k-1}{2k}=\frac{1}{2}-\frac{1}{2k}$,
which implies the result.

Now, for this proof, we assumed that $s_i<k$.
If $s_i\geq k$, we simply terminate the phase after the processing
of the $s_i$ requests that \LRU faults on,
and continue to define phases inductively from there.
All the bounds from above hold with $t=0$ and the observation
that \FIFO will not fault on the first $s_1$ requests in the next phase.
The direction of the construction is now reversed.
In this process, whenever we reverse the direction as above, we also
rename the variable $s$ to $t$ and $t$ to $s$,
such that $s$ continues to keep track of
movement to the right and $t$ of movement to the left,
and the inequality $\Sigma_j t^j \geq \Sigma_j s^j - N$ still holds.
\end{proof}

\begin{lemma}
For the path access graph $P_N$,
\[\Max^{P_N}(\FIFO,\LRU)=\frac{1}{2}-\frac{1}{2k}.\]
\end{lemma}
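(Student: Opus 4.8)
The immediately preceding lemma, Lemma~\ref{path-upper-bound}, already supplies the upper bound $\Max^{P_N}(\FIFO,\LRU)\le\frac{1}{2}-\frac{1}{2k}$, so the plan is to prove the matching lower bound $\Max^{P_N}(\FIFO,\LRU)\ge\frac{1}{2}-\frac{1}{2k}$ by exhibiting an explicit sequence family that approaches this ratio. I would reuse the sawtooth family $I_n=\SEQ{1,2,\ldots,k,k+1,k,\ldots,2}^n$ already introduced in Theorem~\ref{FWF:P}. It only requests the vertices $\{1,\ldots,k+1\}$, hence lies in $\LANG{P_N}$ for every $N\ge k+1$, consists of $n$ periods of length exactly $2k$, and Theorem~\ref{FWF:P} already records $\LRU(I_n)=2n+k-1$.

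The heart of the argument is to compute $\FIFO(I_n)$, and I expect this to be the only real obstacle. I would show that \FIFO behaves periodically, faulting exactly $k+1$ times per period. The key invariant, proved by induction on the period number, is that \FIFO enters each period holding $\{2,3,\ldots,k+1\}$ with entry order $[2,3,\ldots,k+1]$ (oldest first); the base case coincides with this configuration, since the first period fills the cache on $1,\ldots,k$ and then faults on $k+1$ evicting $1$. On the upward sweep $1,2,\ldots,k+1$, \FIFO faults on every request: each of the first $k$ requests evicts exactly the page requested immediately afterwards (request $1$ evicts the oldest page $2$, request $2$ evicts $3$, and so on), which is the classic \FIFO anomaly, here triggered because the walk revisits pages in the same order \FIFO expels them; the final request $k+1$ evicts $1$, restoring entry order $[2,3,\ldots,k+1]$. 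The downward sweep $k,k-1,\ldots,2$ is then all hits, so each period costs $k+1$ faults and returns \FIFO to its initial configuration, giving $\FIFO(I_n)=(k+1)n$.

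Combining the two counts yields $\FIFO(I_n)-\LRU(I_n)=(k+1)n-(2n+k-1)=(k-1)(n-1)$, while $|I_n|=2kn$, so
\[
\frac{\FIFO(I_n)-\LRU(I_n)}{|I_n|}=\frac{(k-1)(n-1)}{2kn},
\]
which tends to $\frac{k-1}{2k}=\frac{1}{2}-\frac{1}{2k}$ as $n\to\infty$. Since $\Ma{\FIFO}{\LRU}(2kn,P_N)\ge\FIFO(I_n)-\LRU(I_n)$ for every $n$, passing to the limit gives $\Max^{P_N}(\FIFO,\LRU)\ge\frac{1}{2}-\frac{1}{2k}$. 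Together with Lemma~\ref{path-upper-bound} this proves the claimed equality. Once the steady-state configuration $[2,3,\ldots,k+1]$ is identified the fault count and periodicity are immediate, so apart from that observation the remaining steps are routine arithmetic.
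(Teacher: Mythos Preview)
Your proposal is correct and follows exactly the same approach as the paper: both invoke Lemma~\ref{path-upper-bound} for the upper bound and use the identical sawtooth family $I_n=\SEQ{1,2,\ldots,k,k+1,k,\ldots,2}^n$ for the lower bound, obtaining $\FIFO(I_n)=(k+1)n$ and $\LRU(I_n)=2n+k-1$. Your exposition of the \FIFO steady-state invariant $[2,3,\ldots,k+1]$ is more detailed than the paper's, which simply asserts that \FIFO faults on pages $1$ through $k+1$ in every iteration.
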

\begin{proof}
The upper bound was shown in Lemma~\ref{path-upper-bound}.
Consider the family of sequences
$I_n=\SEQ{1, 2, \ldots, k, k+1, k, k-1, \ldots, 2}^n$.
In each iteration, except the first, \LRU faults twice (on pages $1$
and $k+1$), whereas \FIFO faults on pages $1$ through $k+1$ in every
iteration.
So on this family,
$\lim_{n \rightarrow \infty} \frac{\FIFO(I_n) - \LRU(I_n) }{|I_n|} = 
\frac{k-1}{2k}=\frac{1}{2}-\frac{1}{2k}$,
so the maximum must be at least that large.
\end{proof}

Since \LRU is optimal on paths, this gives :
\begin{theorem}\label{thm:path}
$\I^{P_N}[\FIFO,\LRU ] = [0,\frac{1}{2}-\frac{1}{2k}]$, and \LRU
dominates \FIFO on paths.
\end{theorem}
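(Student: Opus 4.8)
The plan is to assemble the two endpoints of the interval separately and then read off domination directly from the definition. The right endpoint, $\Max^{P_N}(\FIFO,\LRU)=\frac{1}{2}-\frac{1}{2k}$, has just been established in the immediately preceding lemma (the upper bound of Lemma~\ref{path-upper-bound} matched by the family $I_n=\SEQ{1,2,\ldots,k,k+1,k,\ldots,2}^n$), so I would simply invoke it. All of the remaining work lies in pinning down the left endpoint, and the claim is that $\Min^{P_N}(\FIFO,\LRU)=0$.

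For the lower bound $\Min^{P_N}(\FIFO,\LRU)\geq 0$, I would appeal to the classical fact that \LRU is optimal on path (indeed tree) access graphs~\cite{BIRS91}, which is already used informally in the surrounding discussion. Optimality means $\LRU(I)=\OPT(I)\leq\FIFO(I)$ for every $I\in\LANG{P_N}$, so $\FIFO(I)-\LRU(I)\geq 0$ on every admissible sequence. Hence $\Mi{\FIFO}{\LRU}(n,P_N)\geq 0$ for each $n$, and passing to the limit gives $\Min^{P_N}(\FIFO,\LRU)\geq 0$. For the matching upper bound it suffices to exhibit, for each $n$, one admissible sequence on which the two algorithms fault equally often; the constant sequence $\SEQ{1}^n$ (or any sequence confined to at most $k$ distinct pages) works, since both algorithms fault only on the first request, forcing $\Mi{\FIFO}{\LRU}(n,P_N)\leq 0$. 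Combining the two bounds yields $\Mi{\FIFO}{\LRU}(n,P_N)=0$ for all $n$, hence $\Min^{P_N}(\FIFO,\LRU)=0$ and therefore $\I^{P_N}[\FIFO,\LRU]=[0,\frac{1}{2}-\frac{1}{2k}]$.

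Domination is then immediate from the definition: for $k\geq 2$ we have $\frac{1}{2}-\frac{1}{2k}>0$, so the interval has the form $[0,\beta]$ with $\beta>0$, which is exactly the condition for the second argument, \LRU, to dominate the first, \FIFO. The one point worth flagging is that the whole argument rests on \LRU's optimality on paths; I expect no genuine obstacle here, since the heavy analysis was already spent on the $\Max$ bound in Lemma~\ref{path-upper-bound}, and the only thing one might otherwise have to supply is a proof of that optimality fact, which is why it is cleanest to cite it directly.
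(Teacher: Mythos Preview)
Your proposal is correct and follows essentially the same approach as the paper. The paper's own proof is even terser---it simply writes ``Since \LRU is optimal on paths, this gives'' and states the theorem, relying on the preceding lemma for the $\Max$ endpoint; you have spelled out the $\Min=0$ argument and the domination check in more detail, but the underlying idea is identical.
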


Note that \FAR and \LRU perform identically on paths, so \FAR
also dominates \FIFO with the same interval.

\section{Star Graphs}
We let $S_N$ denote a star graph with $N$
vertices. A star graph has a central vertex, $s$, which is directly
connected to $N-1$ other vertices, none of which are directly
connected. Thus, we could also see a star graph as a tree with root
$s$ and $N-1$ leaves, all located at a distance one from the root.
We assume that $N\geq k+1$, since otherwise, results become trivial.

\begin{lemma}\label{Star-Min} 
For the star access graph $S_N$,
\[     -\frac{1}{2}+\frac{1}{2(k-1)}
 \leq \Min^{S_N}(\FIFO, \LRU)
 \leq -\frac{1}{2}+\frac{1}{2(k-1)}+\frac{1}{2k(k-1)}
\]
\end{lemma}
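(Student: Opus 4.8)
The plan is to prove the two inequalities separately: the right-hand (upper) bound on $\Min^{S_N}(\FIFO,\LRU)$ by exhibiting an explicit family of request sequences, and the left-hand (lower) bound by an amortized structural argument feeding into Lemma~\ref{phases-limit-min}. Both directions rest on one observation about the star: since two distinct leaves are never adjacent, every switch between distinct leaves passes through the center $s$, so between any two consecutive requests to $s$ at most one distinct leaf can occur. Consequently, for $k\ge 2$ the center, once loaded, is never the least-recently-used page, so \LRU never evicts it; \LRU therefore faults only on leaves and effectively manages $k-1$ leaf slots, faulting exactly when a leaf returns after $k-1$ or more other distinct leaves. By contrast, \FIFO does not refresh $s$ on a hit, so $s$ ages out of its queue and must be reloaded periodically; while $s$ is absent, \FIFO has $k$ leaf slots, one more than \LRU. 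This extra slot is the sole source of \FIFO's advantage, and it is paid for by the center reloads.

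For the lower bound I would partition an arbitrary $I\in\LANG{S_N}$ into complete $k$-blocks with respect to \LRU (a marking algorithm), so that \LRU faults exactly $k$ times per complete block and $\LRU(I)=kb+O(1)$, where $b$ is the number of complete blocks. Mirroring the per-phase accounting of Lemma~\ref{path-upper-bound}, the two quantities to control per block are a lower bound $g$ on length and a (negative) lower bound $f$ on $\FIFO(I)-\LRU(I)$. For the length, the $k$ \LRU-faults in a block all fall on leaf requests, and two consecutive leaf-faults are separated by at least one center request, forcing length at least $2(k-1)$ per block after sharing boundary requests between adjacent blocks. For the cost gap, I would bound the number of requests that \FIFO hits while \LRU faults, charging each such ``\FIFO-win'' either against a center fault that \FIFO is forced to pay (retaining the extra leaf costs the queue slot occupied by $s$) or against the additional length needed to set up the divergence of the two caches. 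Summing over blocks gives $\FIFO(I)-\LRU(I)\ge f(b)$ with $f(b)=-(k-2)b$ up to lower-order terms and $|I|\ge g(b)$ with $g(b)=2(k-1)b$; since $\Mi{\FIFO}{\LRU}(n,S_N)\to-\infty$, Lemma~\ref{phases-limit-min} then yields $\Min^{S_N}(\FIFO,\LRU)\ge\lim_{b\to\infty}\frac{f(b)}{g(b)}=-\frac{k-2}{2(k-1)}=-\frac12+\frac{1}{2(k-1)}$.

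For the upper bound I would construct a periodic family $I_n=\SEQ{W,P^n}$, where a short warmup $W$ first steers \LRU and \FIFO into divergent configurations by the standard refresh trick: request a just-loaded leaf once more before introducing the next leaf, so that on the next fault \LRU evicts a leaf while \FIFO evicts the center, after which \FIFO permanently holds one more leaf than \LRU. The period $P$ then cycles the leaves with $s$ interleaved, and the purpose of $W$ is to place the two algorithms in the steady state where \LRU faults on every leaf of the cycle while \FIFO hits the leaves it has retained, paying only for the unavoidable center reloads. Computing $\LRU(P)$, $\FIFO(P)$, and $|P|$ exactly and dividing gives $\lim_{n\to\infty}\frac{\FIFO(I_n)-\LRU(I_n)}{|I_n|}=-\frac12+\frac{1}{2(k-1)}+\frac{1}{2k(k-1)}$, which bounds the infimum $\Min^{S_N}(\FIFO,\LRU)$ from above.

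The main obstacle is the cost-gap estimate in the lower bound. Unlike \LRU, whose behavior is pinned down by the center-retention fact, \FIFO's savings depend delicately on how the extra leaf slot, available only while $s$ is evicted, interacts with the forced center reloads, so the charging scheme must be designed as carefully as the amortized analysis in Lemma~\ref{path-upper-bound}. This is also where the small gap of $\frac{1}{2k(k-1)}$ between the two bounds originates: the worst sequences produced by the construction have blocks of length $2k$ rather than the pessimistic $2(k-1)$ used in the length estimate, and tightening either estimate to close this gap is exactly what the statement leaves unresolved, presenting the result as a range rather than an equality.
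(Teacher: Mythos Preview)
Your upper-bound sketch is on the right track and matches the paper's construction: a preamble followed by repeated blocks of the form $\langle k,s,k-1,s,\ldots,1,s\rangle$ on which \LRU faults on every leaf while \FIFO faults only a handful of times per full period; computing the costs exactly gives the stated bound $-\tfrac12+\tfrac{k+1}{2k(k-1)}$.

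The lower-bound argument, however, has a real gap, and it is not merely that the charging scheme is unfinished. You partition into $k$-blocks with respect to \LRU, so \LRU incurs exactly $k$ faults per block, and then aim for $f(b)=-(k-2)b$; this is equivalent to claiming that \FIFO faults at least twice per \LRU-$k$-block. That claim is false for $k\ge 4$: on the very sequence used for the upper bound, \FIFO faults $k+1$ times per period while \LRU faults $k(k-1)$ times, i.e.\ only $\frac{k+1}{k-1}<2$ \FIFO faults per \LRU-$k$-block. Your proposed charging of each ``\FIFO-win'' to a center reload cannot rescue this, since that same period contains $k(k-2)$ \FIFO-wins against a single \FIFO fault on $s$. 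Whatever refined charging to ``extra length'' you have in mind, the pair $f(b)=-(k-2)b$, $g(b)=2(k-1)b$ cannot both be valid for this decomposition.

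The paper's remedy is to change the decomposition, not to devise a clever charge: use $k$-phases rather than \LRU-$k$-blocks. On a star every $k$-phase contains $s$ together with exactly $k-1$ distinct leaves, so \LRU (which never evicts $s$) faults at most $k-1$ times per phase, while Lemma~\ref{minimum-cost} forces \FIFO to fault at least once per phase, and each phase has length at least $2(k-1)$. These per-phase estimates give $f(b)=-(k-2)b$ and $g(b)=2(k-1)b$ legitimately, and Lemma~\ref{phases-limit-min} then yields $-\tfrac{k-2}{2(k-1)}=-\tfrac12+\tfrac{1}{2(k-1)}$ with no charging argument at all.
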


\begin{proof}Consider an arbitrary sequence $I$ respecting the
star access graph, and consider its division into $k$-phases.
Since the central vertex occurs after each request to a leaf, each $k$-phase,
except the last, must contain requests to $k-1$ different leaves,
and must be of length at least $2(k-1)$.
As in the proof of Lemma~\ref{any-fifo-lru}, 
\FIFO faults at least once for each of these phases. \LRU faults only
on the leaves and only once on each, so it faults at most
$k-1$ times for each phase.
Thus, if $I$ has $b$ phases, not counting the first empty phase,,
$|I| \geq  2(k-1)(b-1) + 1$
and $\FIFO(I) - \LRU(I) \geq (b-1) - (k-1)(b-1) - k = -(k-2)(b-1) - k$,
and so
$\Min^{S_N}(\FIFO, \LRU ) \geq -\frac{k-2}{2(k-1)} = -\frac{1}{2}+\frac{1}{2(k-1)}$. 

We will show that the upper bound on $\Min^{S_N}(\FIFO, \LRU)$
comes very close to this by analyzing the following sequence.
\begin{align*}
I_n = &~ \SEQ{P, J^n}, ~J= ~ B_1, \ldots , B_{k-1}  \\
 P = &~ \SEQ{1,s, 2, s, \ldots s ,k-2, s, k-1, s, k-2, s, \ldots s , 2, s, 1, s} \\
B_i = &~ \SEQ{k,s, k-1,s, \ldots, s,1,s}, \textrm{ for } 1 \leq i \leq k-1
\end{align*}

We note that $k$ does not appear in $P$ and that all the $B_i$ are identical
(we use the index for reference).
Each $|B_i| = 2k$, so $|I_n| = 2(2k-3) + 2k(k-1)n$.
\LRU starts $B_1$ with a fault on the request to $k$, thereby evicting $k-1$.
It then faults on $k-1$ and evicts $k-2$.
This repeats and ends with the eviction of $k$ at the request to $1$
such that $k-1$ is the least recently used page.
Thus, it faults everywhere except on the central vertex $s$,
which is never evicted by \LRU.
Since \LRU's cache configuration---content as well as the relative ordering
of the recency of pages---is the same at the end of $B_1$ as it was at the
end of $P$, the same pattern must be repeated in each $B_i$.
Thus, $\LRU(I_n) = k + (k-1)kn$. 

\FIFO has three faults in $B_1$: On the request to $k$, where $1$ is evicted,
and at the last two requests of $B_1$. So \FIFO ends $B_1$ with $2$ being
outside its cache. From there onwards, \FIFO faults exactly once in
each $B_i$, $2\leq i\leq k-1$, at the request to $i$,
on which it evicts $i+1$. Therefore,
\FIFO ends each $J$ with $k$ outside its cache
and, hence, the above described fault and eviction pattern is repeated
in every $J$. This gives the cost $\FIFO(I_n) = k + (k+1)n$, and
$\lim_{n \rightarrow \infty} \frac{\FIFO(I_n) - \LRU(I_n)}{|I_n|}$ equals
\[\lim_{n \rightarrow \infty}\frac{k+(k+1)n-(k+(k-1)kn)}{2(2k-3)+2k(k-1)n}
= -\frac{1}{2} + \frac{k+1}{2k(k-1)}\]
Thus,
$\Min^{S_N}(\FIFO, \LRU) \leq -\frac{1}{2} + \frac{k+1}{2k(k-1)}
=-\frac{1}{2}+\frac{1}{2(k-1)}+\frac{1}{2k(k-1)}$.
\end{proof}

\begin{lemma}\label{Star-Max}
For the star access graph $S_N$,
\[ \Max^{S_N}(\FIFO, \LRU) = \frac{1}{4} + \frac{1}{8k-12}. \] 
\end{lemma}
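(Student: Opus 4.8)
The plan is to establish the two matching bounds behind $\Max^{S_N}(\FIFO,\LRU)=\frac{k-1}{2(2k-3)}=\frac14+\frac{1}{8k-12}$ separately, relying throughout on two structural facts about the star. First, \LRU never evicts the center $s$ once it is loaded: to pass between two \emph{distinct} leaves one must request $s$, so at most one distinct leaf can intervene between consecutive requests to $s$, and $s$ therefore always stays among \LRU's two most recently used pages. Consequently \LRU faults only on leaves. Second, every complete $k$-phase consists of $s$ together with exactly $k-1$ distinct leaves, and realizing $k-1$ leaves with the forced interleaving of $s$ costs at least $2(k-1)-1=2k-3$ requests. This is the origin of the $2k-3$ in the stated value.

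For the lower bound I would exhibit a periodic family $I_n=\SEQ{R,Q^n}$, where $R$ is a fixed prefix that fixes both caches and $Q$ is a block of length $2(2k-3)$ that oscillates over a bounded set of leaves with $s$ interleaved, in the spirit of the extremal path sequence $\SEQ{1,\dots,k,k+1,k,\dots,2}$ used in Lemma~\ref{path-upper-bound} and Theorem~\ref{thm:path}. The block is tuned so that \LRU keeps a stable working set of $k-1$ leaves and faults only a bounded number of times per block, while \FIFO's discipline forces it to evict and reload $s$, after which it trails \LRU by one slot and re-faults on leaves that \LRU still holds (a cascade). The intended bookkeeping is that over each block \FIFO incurs about $k-1$ more faults than \LRU on $2(2k-3)$ requests, so that computing $\lim_{n\to\infty}\frac{\FIFO(I_n)-\LRU(I_n)}{|I_n|}$ yields $\frac{k-1}{2(2k-3)}$, exactly as the cost computation is carried out for the \Min direction in Lemma~\ref{Star-Min}.

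For the upper bound I would invoke Lemma~\ref{phases-limit-max} with $b$ the number of complete $k$-phases of $I$. The length function is immediate from the second structural fact: $|I|\ge (2k-3)b$, so $g(b)=(2k-3)b$. The real work is the surplus function: I would show $\FIFO(I)-\LRU(I)\le \frac{k-1}{2}\,b+c$ for a constant $c$ independent of $I$, whence $f(b)=\frac{k-1}{2}b+c$ and $\lim_{b\to\infty}\frac{f(b)}{g(b)}=\frac{k-1}{2(2k-3)}$. To prove the surplus bound I would adapt the accounting of Lemma~\ref{path-upper-bound}: partition the sequence according to where \LRU faults, and within each part count \FIFO's faults $f_i$; since \FIFO faults at most once per leaf on a window of at most $k$ distinct pages and each leaf it re-faults on must have been requested earlier, each unit of \FIFO-surplus can be charged against (at least) two requests, which is the source of the factor $\frac12$. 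The interleaved center requests, together with the at-most-one center fault per such part, account for the shift from the path's $2k$ to the star's $2(2k-3)$, so that the surplus is at most $k-1$ per two phases.

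The main obstacle is this surplus bound, that is, proving the exact constant rather than a weaker $O(1/k)$-approximate one. Two points need care: correctly charging \FIFO's center faults, which have no \LRU counterpart, so that they do not inflate the rate beyond $\frac{k-1}{2(2k-3)}$; and handling the incomplete boundary phase together with the direction reversals exactly as in Lemma~\ref{path-upper-bound}, so that the additive constant $c$ disappears in the limit. On the construction side, a single symmetric oscillation over $k$ leaves only reaches the rate $\frac14$, so the delicate part of the lower bound is tuning $Q$ (its width and turning points) so that \LRU's per-block faults stay low enough to gain the extra $\frac{1}{8k-12}$; making the construction meet the surplus bound is where the two analyses must agree.
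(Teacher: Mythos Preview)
Your upper-bound scheme has a concrete gap: the decoupled bounds $g(b)=(2k-3)b$ and $f(b)=\tfrac{k-1}{2}\,b+c$ (with $b$ the number of $k$-phases) cannot both be true. In fact the surplus bound fails already on the extremal sequence. Take the paper's lower-bound family (or any family achieving the rate $\tfrac{k-1}{4k-6}$): each row of $B$ is a single $k$-phase of length $4k-6$ in which \LRU faults once and \FIFO faults $k$ times, so the surplus per $k$-phase is exactly $k-1$, not $\tfrac{k-1}{2}$. Your length bound $(2k-3)b$ is valid but slack on this sequence by a factor of two, and that slack is precisely correlated with the extra surplus; treating the two bounds as independent functions of $b$ throws this correlation away and forces you to a false inequality. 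The paper avoids this by working with $k$-\emph{blocks for \FIFO} and letting the per-block length bound depend on $\alpha_i=\#\{\text{\LRU faults in block }i\}$: it shows $|B_i|\ge 2(k-1)+2(k-1-\alpha_i)=4k-4-2\alpha_i$ while the surplus in $B_i$ is $k-\alpha_i$, and then maximizes the ratio $\frac{\sum(k-\alpha_i)}{\sum(4k-4-2\alpha_i)}$ over $\alpha_i\ge 1$, attained at $\alpha_i=1$. So the length bound must carry the same parameter that governs the surplus; Lemma~\ref{phases-limit-max} with $f,g$ depending only on $b$ is too coarse here.

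On the lower bound you have the right target numbers (difference $k-1$ on a block of length $4k-6$) but no construction. Note that \FIFO evicting $s$ is not an accident to be ``handled'' but the mechanism: once \FIFO evicts $s$ it immediately reloads it, which shifts its queue by one relative to \LRU and triggers the cascade along the returning leaf walk. The paper's block is exactly a down-then-up walk over $k-1$ leaves with one new leaf inserted at the turn; \LRU faults only on the new leaf, while \FIFO faults on that leaf and then on $s$ and on every leaf on the way back. A single symmetric oscillation over $k$ leaves does not do this (as you note, it only reaches $\tfrac14$), so you really do need the asymmetric ``go to a new leaf at the far end, then come back'' shape.
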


\begin{proof} 
We give a sequence respecting $S_N$ for $N \geq k+1$
giving rise to the stated ratio. Let
\begin{align*}
I_n = \SEQ{P, B^n}, \textrm{ where } P =  \SEQ{1,s,2,s,\ldots,s, k-2,s, k-1,s} \textrm{ and $B$ is}
\end{align*}
\[\left[ \begin{array}{@{\hspace{0.4em}}c@{\hspace{0.4em}}c@{\hspace{0.4em}}c@{\hspace{0.4em}}c@{\hspace{0.4em}}c@{\hspace{0.4em}}c@{\hspace{0.4em}}c@{\hspace{0.4em}}c@{\hspace{0.4em}}c@{\hspace{0.4em}}c@{\hspace{0.4em}}c@{\hspace{0.4em}}c@{\hspace{0.4em}}c@{\hspace{0.4em}}c@{\hspace{0.4em}}c@{\hspace{0.4em}}c@{\hspace{0.4em}}c@{\hspace{0.4em}}c@{\hspace{0.4em}}c@{\hspace{0.4em}}}
 k-2, & s, & \ldots & s, & 2, & s, & 1, & s, & \mathbf{k}, & s, & 1, & s, & 2,  & s, & \ldots & s, & k-2, & s \\
 k-3, & s, & \ldots & s, & 1, & s, & k, & s, & \mathbf{k-1}, & s, & k, & s, & 1, & s, & \ldots& s, & k-3, & s \\
 k-4, & s, & \ldots& s, &k,  & s, &  k-1, & s, & \mathbf{k-2}, & s, & k-1, & s & k, & s, & \ldots&s, & k-4,  & s\\
 \vdots &\vdots  & \ldots & \vdots&\vdots & \vdots & \vdots & \vdots & \vdots & \vdots & \vdots & \vdots &\vdots & \vdots & \ldots & \vdots&  \vdots& \vdots\\
 k, &  s,  & \ldots  & s,  & 4, &s, &3,  & s, &  \mathbf{2}, &s, &3,  & s, &  4,  & s, &\ldots &s, & k, & s \\
 k-1, & s, & \ldots & s, &3,  & s, & 2, & s, & \mathbf{1}, & s, & 2, & s , & 3, & s, &\ldots &s, &  k-1, & s
\end{array} \right] \]
Writing the sequence $B$ like this is just to give an overview.
The sequence is the concatenation of all the rows from top to bottom.

The column in bold indicates the requests that are faults for \LRU.
\LRU faults on exactly one request in every row and so we have
$\LRU(I_n) = k + kn$.
\FIFO faults on $k$ distinct pages in each row,
starting with the request at which \LRU faults.
Thus, $\FIFO(I_n) = k + k^2n$.
Furthermore, $|I_n| = 2(k-1) + (4k-6)kn$.
Since
\[  
  \lim_{n \rightarrow \infty}\frac{\FIFO^{S_N}(I_n) - \LRU^{S_N}(I_n) }{|I_n|}
  =
  \lim_{n \rightarrow \infty}\frac{k + k^2n - (k + kn)}{2(k-1) + (4k-6)kn}
  =
  \frac{k-1}{4k-6},
\]
we have that
$\Max^{S_N}(\FIFO, \LRU) \geq \frac{k-1}{4k-6}=\frac{1}{4} + \frac{1}{8k-12}$.

To prove a tight upper bound on $\Max^{S_N}(\FIFO,\LRU)$,
we consider an arbitrary sequence $I$.
We can assume without loss of generality that $I$ does not contain
any consecutive requests to the same page as they only result in hits
for both algorithms, while increasing the length of the sequence.

We view $I$ as a partition of $k$-blocks with respect to \FIFO, denoted by 
$B_1, \ldots, B_n$, ignoring the first empty block.
Since both \FIFO are \LRU are conservative,
each block, excluding perhaps the last one, 
must have requests to at least $k$ distinct pages.
The access graph is a star, so each request 
must be followed by a request to $s$.
The number of faults incurred by \LRU in $B_i$ is denoted by $\alpha_i$,
where $\alpha_1 = k$. From the maximality of the blocks $B_i$, each
block must have at least one fault for \LRU.
Since $s$ is never evicted from the cache by \LRU,
we have $1 \leq \alpha_i \leq k-1$.

We now find a lower bound on the length of $B_i$.
First recall that \FIFO faults on $k-1$ leaf requests.
We now establish some hits by \FIFO.
Consider a leaf request $r$ that is a fault for \FIFO, but a hit for \LRU.
Since it is not a fault for \LRU, there must have been a request
$r'$ to the same page in the last $k-1$ distinct page requests.
If $r'$ were a fault for \FIFO, then $r$ would have to be a hit.
Since it is not, $r'$ must be a hit for \FIFO.
Since \LRU incurs $\alpha_i$ faults in $B_i$, there are at least $k-1-\alpha_i$
distinct leaf requests where \LRU has a hit while \FIFO faults, ensuring
at least $k-1-\alpha_i$ distinct hits for \FIFO.
Note that even though the hit we establish for \FIFO could be in the
previous block, $B_{i-1}$, it cannot be counted twice, since there are no more
faults on that page after $r'$ in $B_{i-1}$.

The faults and the hits, together with the requests to $s$
following each of them, gives us at least $2(k-1) + 2(k-1-\alpha_i)$
requests. Since the terms not involving $n$ disappear in the limit,
\[\Max^{S_N}(\FIFO, \LRU) 
  \leq
  \max_{\genfrac{}{}{0pt}{}{\alpha_2, \ldots, \alpha_n}{\alpha_i \geq 1}} 
     \left\{ \frac{ \sum_{i=2}^{n} k-\alpha_i}{ 
        \sum_{i=2}^{n-1}(4k - 4 - 2\alpha_i) }
     \right\} 
\]

This is maximized for $\alpha_i=1$ for $2\leq i\leq n$.
Hence, $\Max^{S_N}(\FIFO, \LRU) \leq \frac{k-1}{4k-6}$.
\end{proof}

The algorithms \FAR and \LRU behave identically on star graphs.
Neither of them ever evicts the central vertex. We state the result for both \LRU and \FAR in the
main theorem, though \FAR is not directly mentioned in the lemmas and proofs.

\begin{theorem}\label{thm:S:FIFO}
For the star access graph $S_N$ and $\A \in \{\LRU, \FAR\}$,
\[\begin{array}{rcl}
  \left[ -\frac{1}{2}+\frac{1}{2(k-1)}, \frac{1}{4} + \frac{1}{8k-12} \right]
    & \subseteq & \I^{S_N}[ \FIFO, \A ] \\ & \subseteq &
    \left[ -\frac{1}{2}+\frac{1}{2(k-1)}+\frac{1}{2k(k-1)}, \frac{1}{4} + \frac{1}{8k-12} \right]
\end{array}\]
\end{theorem}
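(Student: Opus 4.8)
The plan is to reduce the statement to the single pair $(\FIFO,\LRU)$ and then read it off directly from Lemma~\ref{Star-Min} and Lemma~\ref{Star-Max}, which between them bound each endpoint of the relative interval. First I would dispose of \FAR by arguing that on a star \FAR and \LRU produce identical fault sequences, so that $\I^{S_N}[\FIFO,\FAR]=\I^{S_N}[\FIFO,\LRU]$ and it suffices to treat $\A=\LRU$. The justification is that on $S_N$ any move between two distinct leaves must pass through the centre $s$, since two leaves are non-adjacent; hence $s$ is requested in (essentially) every $k$-phase and is always among the most recently used pages, so it is never the \LRU victim, and, being a marking algorithm that keeps $s$ marked, \FAR never evicts it either. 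Both algorithms therefore evict only leaves, and since all unmarked leaves sit at distance~$1$ from the marked centre they are tied under \FAR's ``farthest from a marked page'' rule, a tie we break by \LRU; consequently the two algorithms make exactly the same eviction decisions on every star-respecting sequence.

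Next I would unwind the definition $\I^{S_N}[\FIFO,\LRU]=[\Min^{S_N}(\FIFO,\LRU),\Max^{S_N}(\FIFO,\LRU)]$ and observe that the displayed chain is a compact encoding of four scalar inequalities: the two endpoints of the left-hand interval are lower bounds on the corresponding endpoints of the true relative interval, while the two endpoints of the right-hand interval are upper bounds on them. The right endpoint is pinned down exactly by Lemma~\ref{Star-Max}, which gives $\Max^{S_N}(\FIFO,\LRU)=\frac14+\frac{1}{8k-12}$; this single value serves simultaneously as the required lower and upper bound on the right endpoint, which is precisely why both intervals share that right endpoint. For the left endpoint I would invoke Lemma~\ref{Star-Min}, whose two-sided bound $-\frac12+\frac{1}{2(k-1)}\le\Min^{S_N}(\FIFO,\LRU)\le-\frac12+\frac{1}{2(k-1)}+\frac{1}{2k(k-1)}$ supplies exactly the left endpoints of the left-hand and right-hand intervals, respectively. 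Assembling these four inequalities yields the stated chain verbatim.

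Since the two preceding lemmas already carry all the quantitative content, the theorem is in effect a bookkeeping step, and the only genuine gap to fill is the \FAR/\LRU equivalence sketched above; I expect that to be the main (though modest) obstacle, the delicate point being the eviction at a phase boundary, where $s$ might momentarily be unmarked yet is still never the farthest unmarked page. Were the lemmas not available, the hard part would instead lie in the matching upper bound of Lemma~\ref{Star-Max}: for an arbitrary input partitioned into \FIFO $k$-blocks one must relate \FIFO's surplus faults to the per-block \LRU fault counts $\alpha_i$ and to the forced length $2(k-1)+2(k-1-\alpha_i)$ of each block, and then verify that the resulting ratio is maximised at $\alpha_i=1$; by contrast, the lower-bound families witnessing both endpoints are comparatively routine constructions.
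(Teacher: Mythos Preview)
Your proposal is correct and essentially identical to the paper's approach: the paper's proof reads, in full, ``This follows directly from Lemmas~\ref{Star-Min} and~\ref{Star-Max},'' with the \FAR/\LRU equivalence on stars asserted in the paragraph immediately preceding the theorem. Your sketch of that equivalence (centre always marked, all unmarked leaves tied at distance~$1$, ties broken by \LRU) is more detailed than the paper's one-line remark, but the substance is the same.
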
 

\begin{proof}
This follows directly from Lemmas~\ref{Star-Min} and~\ref{Star-Max}. \end{proof} 

In \cite{DLM09}, it was shown that
$\Max(\FIFO, \LRU) \geq \frac{k-1}{2k-1}=\frac{1}{2}-\frac{1}{4k-2}$.
The above result shows that for star access graphs, that bound can be
decreased by a factor of approximately two.

Since \LRU and \FAR perform identically on stars, 
$\Min^{S_N}(\FAR, \LRU) =\Max^{S_N}(\FAR, \LRU)= 0$.

The star access graph is another example of where \FWF performs
poorly compared with the other algorithms.

\begin{lemma}\label{Max:Star-FWF}For the star access graph $S_N$, and $\B \in \{\LRU, \FIFO\}$,
\[\Max^{S_N}(\FWF, \B) \leq  \frac{1}{2}.\]
\end{lemma}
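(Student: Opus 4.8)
The plan is to analyze an arbitrary sequence $I \in \LANG{S_N}$ through its division into $k$-phases, exploiting the fact that \FWF's flushes coincide exactly with $k$-phase boundaries. First I would record the two cost bounds. Since \FWF refills its cache from scratch after each flush, it faults exactly $k$ times per complete $k$-phase, so if $I$ has $b$ complete $k$-phases then $\FWF(I) \leq bk + (k-1)$. On the other hand, Lemma~\ref{minimum-cost} gives that any algorithm, in particular $\B \in \{\LRU, \FIFO\}$, faults at least $b + k - 1$ times on $b$ complete $k$-phases. Subtracting yields the clean bound $\FWF(I) - \B(I) \leq bk + (k-1) - (b + k - 1) = b(k-1)$.

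The crux is a matching lower bound on $|I|$, and here the star structure is essential. I would first argue that we may assume $I$ contains no two consecutive requests to the same page: such a repeat is a hit for both algorithms, so deleting it leaves $\FWF(I) - \B(I)$ unchanged (and nonnegative, by Lemma~\ref{fwf-conservative}) while shortening $I$, which can only increase the ratio. In a star without consecutive repeats the requests must strictly alternate between leaves and the centre $s$, because from a leaf the only legal non-repeat move is to $s$ and from $s$ the only legal non-repeat move is to a leaf. Next I would observe that, for $k \geq 2$, every complete $k$-phase must contain $s$ (two distinct leaves cannot be requested without visiting $s$), hence its $k$ distinct pages are $s$ together with exactly $k-1$ distinct leaves; in particular each complete $k$-phase contributes at least $k-1$ leaf requests. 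Summing over the $b$ complete phases gives at least $b(k-1)$ leaf requests, and strict alternation then forces at least $b(k-1) - 1$ intervening requests to $s$, so $|I| \geq 2b(k-1) - 1$.

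Finally I would combine these estimates through Lemma~\ref{phases-limit-max}, taking $f(b) = b(k-1)$ and $g(b) = 2b(k-1) - 1$; the hypothesis $\lim_{n\to\infty}\Ma{\FWF}{\B}(n, S_N) = \infty$ is witnessed by the family constructed for the matching bound in Theorem~\ref{thm:Max:Star-FWF}, and since $\lim_{b\to\infty} f(b)/g(b) = \tfrac{1}{2}$ we conclude $\Max^{S_N}(\FWF, \B) \leq \tfrac{1}{2}$. I expect the main obstacle to be getting the length bound tight enough: the naive per-phase length estimate of $2k-3$ only yields $\tfrac{k-1}{2k-3} > \tfrac{1}{2}$, so the factor of two coming from strict leaf/centre alternation is exactly what makes the bound $\tfrac12$ rather than something larger, and the one place needing care is checking that deleting consecutive repeats does not decrease $b$ or otherwise invalidate the per-phase leaf count.
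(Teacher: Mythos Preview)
Your proposal is correct and follows essentially the same approach as the paper: partition into $k$-phases, bound the cost difference by $k-1$ per phase via Lemma~\ref{minimum-cost}, and use the star structure (leaf/centre alternation) to get at least $2(k-1)$ requests per complete phase. The paper's proof is terser---asserting the phase-length bound directly and computing the ratio without routing through Lemma~\ref{phases-limit-max}---but the argument is identical in substance.
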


\begin{proof}
Given any sequence $I$ in $S_N$, it can be viewed a partition of $k$-phases.
Since it is a star, each phase must be of length at least $2(k-1)$ and
by Lemma~\ref{minimum-cost} \B must incur at least one fault in each phase.
Since \FWF can incur at most $k$ faults in each phase,
if there are $n$ complete phases in $I_n$,
then $\frac{\FWF(I)-\B(I)}{|I|} \leq \frac{n(k-1)}{2n(k-1)} = \frac{1}{2}$. Hence, $\Max^{S_N}(\FWF, \B) \leq  \frac{1}{2}$. 
\end{proof}

\begin{theorem}\label{thm:Max:Star-FWF}
For the star access graph $S_N$, and $\A \in \{\LRU, \FAR, \FIFO\}$,
\[\I^{S_N}[\FWF, \A] = \left[0, \frac{1}{2} \right].\]
\end{theorem}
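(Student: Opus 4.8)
The interval has two endpoints, so I would treat them separately. The lower endpoint is already in hand: each of \LRU, \FAR, and \FIFO is conservative or a marking algorithm, so by Lemma~\ref{fwf-conservative} none of them ever faults more than \FWF; hence $\FWF(I)-\A(I)\ge 0$ for every $I$, which gives $\Min^{S_N}(\FWF,\A)=0$, exactly as already recorded for general graphs. For the upper endpoint, Lemma~\ref{Max:Star-FWF} already supplies $\Max^{S_N}(\FWF,\A)\le \frac12$ for $\A\in\{\LRU,\FIFO\}$, and since \FAR and \LRU behave identically on stars, this bound carries over to $\A=\FAR$. Thus the real content of the theorem is the matching lower bound $\Max^{S_N}(\FWF,\A)\ge\frac12$.

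For the lower bound I would exhibit a family $I_n$ assembled from $k$-phases of minimum length. Numbering the leaves $1,\dots,N-1$, I let each phase request a window of $k-1$ distinct leaves, each leaf immediately followed by the centre $s$, so that every phase contains exactly $k$ distinct pages (the $k-1$ leaves together with $s$) and has length exactly $2(k-1)$. Successive windows are shifted cyclically by one leaf, so each phase introduces exactly one leaf absent from the preceding phase. At every phase boundary \FWF's cache is full, so the incoming leaf forces a flush and \FWF then refaults on all $k$ distinct pages of the phase; hence $\FWF(I_n)=kn+O(1)$ and $|I_n|=2(k-1)n+O(1)$.

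What remains is to bound the comparison algorithm's cost by $n+o(n)$, and this is where the cases split. For \LRU (hence \FAR) it is immediate: the centre is re-requested after every leaf and is therefore never the least-recently-used page, so \LRU never evicts it; in each phase \LRU faults only on the single new leaf, evicting the least-recently-used leaf, and the requests within each window can be ordered so that the evicted leaf is precisely the one dropped from the next window. Thus $\LRU(I_n)=n+O(1)$ and
\[
  \frac{\FWF(I_n)-\LRU(I_n)}{|I_n|}\;\longrightarrow\;\frac{(k-1)n}{2(k-1)n}=\frac12,
\]
which, with the upper bound, settles $\A\in\{\LRU,\FAR\}$.

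The main obstacle is \FIFO. Unlike \LRU, \FIFO does not refresh a page on a hit, so the centre ages in the queue and is evicted once roughly every $k$ insertions; each such eviction is followed immediately by a re-request of $s$ and hence an extra fault. On the sliding-window family this makes \FIFO fault about $1+\tfrac1k$ times per phase, giving only $\frac{k^2-k-1}{2k(k-1)}<\frac12$. To reach $\frac12$ one must force \FIFO to fault just once per phase, which would require arranging each centre-eviction to land in a phase whose ``new'' leaf is already cached, so that no separate leaf-fault is charged. I expect this to be the genuine crux: a counting argument --- each phase contributes a new leaf-fault (so the number of leaf-faults is at least $n$), while the centre is re-inserted once per $k$ leaf-faults --- pushes the ratio strictly below $\frac12$, and the delicate part of the proof is to design the \FIFO instance (exploiting the slack the upper-bound lemma permits) that drives $\Max^{S_N}(\FWF,\FIFO)$ up to $\frac12$.
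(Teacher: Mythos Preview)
Your handling of the lower endpoint (via Lemma~\ref{fwf-conservative}) and of the upper bound $\Max^{S_N}(\FWF,\A)\le\frac12$ (via Lemma~\ref{Max:Star-FWF}, extended to \FAR by the \LRU/\FAR identification on stars) is exactly the paper's. For $\A\in\{\LRU,\FAR\}$ your sliding-window family is essentially the paper's construction and gives $\frac12$.

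The interesting point is \FIFO. The paper does \emph{not} leave this as a difficulty; it presents an explicit family $I_n=\langle P,B^n\rangle$ where $B$ consists of $k$ rows, the $i$th row listing the $k-1$ leaves $\neq i$ (interleaved with $s$), and asserts that ``\FIFO only faults on the first request in each row, $\FIFO(B)=k$''. But this assertion is wrong for precisely the reason you anticipated. After the first row the centre $s$ has become the oldest page in \FIFO's queue, so the leading leaf-fault of the second row evicts $s$, and the immediately following request to $s$ is an extra fault. A direct trace (already for $k=3$) gives $\FIFO(B)=k+1$, and the state at the end of $B$ matches the state after $P$, so this repeats. Hence the paper's own construction yields
\[
\lim_{n\to\infty}\frac{\FWF(I_n)-\FIFO(I_n)}{|I_n|}
=\frac{k^2-(k+1)}{2k(k-1)}
=\frac{k^2-k-1}{2k(k-1)}
=\frac12-\frac{1}{2k(k-1)},
\]
exactly the value you computed for the sliding-window family. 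Your heuristic---that $s$ is evicted once per $k$ leaf-insertions, forcing an average of $1+\tfrac1k$ \FIFO-faults per phase---is what actually happens in the paper's sequence as well.

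So you have not missed an idea present in the paper: the ``delicate part'' you flagged is a genuine obstruction, and the paper's proof, as written, does not establish $\Max^{S_N}(\FWF,\FIFO)=\frac12$. Either a different construction is required, or the stated equality for \FIFO needs a $\Theta(\frac{1}{k^2})$ correction.
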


\begin{proof}
By Lemma~\ref{fwf-conservative},
\[\Min^{S_N}(\FWF, \LRU) = \Min^{S_N}(\FWF, \FIFO) =0.\] 
Furthermore, since \LRU and \FAR perform identically
on star graphs, we also have that $\Min^{S_N}(\FWF, \FAR) = 0$.

Given any sequence $I$ respecting $S_N$, it can be viewed a partition of $k$-phases.
Since $S_N$ is a star, each phase must be of length at least $2(k-1)$, and
\A must incur at least one fault in each phase. Since \FWF can incur at most $k$ faults in each phase, $\lim_{n \rightarrow \infty}\frac{\FWF(I)-\A(I)}{|I|}\leq \frac{k-1}{2(k-1)} = \frac{1}{2}$. Hence, $\Max^{S_N}(\FWF, \A) \leq  \frac{1}{2}$. 

Consider the sequence $I_n = \SEQ{P, (B_1, B_2)^n}$, where $P =  \SEQ{1,s,2,s,\ldots,s, k-2,s, k-1,s}$,
\[
  B_1 = \SEQ{k,s, k-1,s, \ldots,s, 2 , s}, \textrm{ and } 
  B_2 = \SEQ{1, s, 2, s, \ldots,s ,k-1, s}
\]
$B_1$ and $B_2$ have requests to $k$ distinct pages, excluding $1$ and $k$, respectively.

\LRU faults on the first request in each $B_i$. \FWF flushes its cache
at the start of each $B_i$. So $|I_n| = 4(k-1)n + 2(k-1), \LRU(I_n) = 2n+k$
and $\FWF(I_n) = 2kn+k$. So $\lim_{n \rightarrow \infty}\frac{\FWF(I_n)-\LRU(I_n)}{|I_n|} = \frac{2(k-1)}{4(k-1)} = \frac{1}{2}$ and $\Max^{S_N}(\FWF, \LRU) \geq \frac{1}{2}$. 

Let $I_n = \SEQ{P, B^n}$ where  $P =  \SEQ{1,s,2,s,\ldots,s, k-2,s, k-1,s}$ and
\[B =  \left[ \begin{array}{ccccccc} %cccccccccccc}
k, & s, &k-1 & \cdots& \cdots  & s \\   
1, & s, & k & \cdots  & \cdots  & s \\ %& k, & s, & \mathbf{k-1}, & s, & k, & s, & 1, & s, & \cdots \cdots& s, & k-3, & s \\
2, & s, & 1 & \cdots  & \cdots&   s, \\ %&  k-1, & s, & \mathbf{k-2}, & s, & k-1, & s & k, & s, & \cdots \cdots&s, & k-4,  & s\\
\vdots &\vdots& \vdots  & \cdots & \cdots & \vdots\\ %& \vdots & \vdots & \vdots & \vdots & \vdots & \vdots & \vdots & \vdots &\vdots & \vdots & \cdots \cdots & \vdots&  \vdots& \vdots\\
\vdots&\vdots  & \cdots & \cdots& \vdots  \\%& \vdots & \vdots& \vdots &  \vdots & \vdots & \vdots & \vdots & \vdots & \vdots & \vdots &  \cdots \cdots & \vdots&  \vdots & \vdots\\
k-2, &  s & k-3, &\cdots & \cdots  & s,  \\ %& 4, &s, &3,  & s, &  \mathbf{2}, &s, &3,  & s, &  4,  & s, &\cdots \cdots &s, & k, & s \\
k-1, &  s, & k-2, & \cdots & \cdots & s, \\ %&3,  & s, & 2, & s, & \mathbf{1}, & s, & 2, & s , & 3, & s, &\cdots \cdots &s, &  k-1, & s
\end{array} \right] \]

The $i$th row is $i$-free. Hence, each row is of length $2(k-1)$
and $|I_n| = 2(k-1)+ 2(k-1)kn$. Since \FIFO only faults on the first request
in each row, $\FIFO(B) = k$ and
$\FIFO(I_n) = kn+ k$. Since \FWF flushes its cache at the start of each row,
it incurs $k$ faults in each row. Therefore, $\FWF(B) = k^2$ and
$\FWF(I_n) = k^2n+k$. Therefore,
$\lim_{n \rightarrow \infty }\frac{\FWF(I_n)-\FIFO(I_n)}{|I_n|} =
\frac{k(k- 1)}{2(k-1)k} = \frac{1}{2}$ and
$\Max^{S_N}(\FWF, \FIFO) \geq \frac{1}{2}$.
Since \FAR and \LRU behave identically on $S_N$, 
by Lemma \ref{Max:Star-FWF}, we get
$\Max^{S_N}(\FWF, \A) = \frac{1}{2}$.
\end{proof}

\section{Cycle Graphs}
We consider graphs consisting of exactly one cycle,
containing $N$ vertices.
We assume that $N\geq k+1$, since otherwise, results become trivial,
and define $r = N - k$. We 
concentrate on the case where $r < k$, since otherwise the cycle is so large that for the algorithms
considered here, it works as if it were an infinite path.
Thus, for example, there are sequences where \FIFO performs worse than \LRU,
but on worst case sequences, simply going around the cycle,
the algorithms perform identically.
In this section, it is convenient to work modulo $N$ when indexing
pages on the cycle. Thus, if $p < 1$ or $p > N$, we let
$p$ denote the page $p - 1 (\mod N) + 1$.
We will not mention this again later in the proofs to follow.

%%%%%%%%%%%%%%%%%%%% DLM09 %%%%%%%%%%%%%%%%%%%%%%%%%%%%%%%%%%%

The following sequences were used in~\cite[Theorem 7]{DLM09} to show that $[ -1 + \frac{1}{k}, \frac{1}{2}- \frac{1}{4k-2}] \subseteq \I[\FIFO, \LRU] $.
\begin{align*}\label{Seq-Im}
I_m = \SEQ{P, B^m }, \textrm{ where } P =  \SEQ{1,2,\ldots, k-1, k}, \textrm{ and $B$ is }
\end{align*}
\[\left[ \begin{array}{cccccccccc}
k-1 & k-2 & \cdots & 2 & 1 & \mathbf{k+1} & 1 & 2 & \cdots & k-1 \\
k-2 & k-3 & \cdots & 1 & k+1 & \mathbf{k} & k+1 & 1 & \cdots & k-2 \\
k-3 & k-4 & \cdots & k+1 & k & \mathbf{k-1} & k & k+1 & \cdots & k-3 \\
\vdots & \vdots & \vdots & \vdots & \vdots & \vdots & \vdots & \vdots & \vdots & \vdots \\
k & k-1 & \cdots & 3 & 2 & \mathbf{1} & 2 & 3 & \cdots & k 
\end{array} \right] \]

\begin{align*}\label{Seq-IM}
I_M = \SEQ{P, B^M}, \textrm{ where } P = \SEQ{1,2,\ldots, k-1, k, k-1, \ldots, 1}, \textrm{ and }
\end{align*}
\[B =  \left[ \begin{array}{cccccc}
{\bf k+1} & k  & k-1 & \cdots & 3 & 2\\
{\bf 1} & k+1  & k & \cdots & 4 & 3\\
{\bf 2} & 1 & k+1 & \cdots & 5 & 4\\
\vdots & \vdots & \vdots & \cdots & \vdots & \vdots\\
{\bf k-1} & k-2 & k-3 & \cdots & k & k+1\\
{\bf k} & k-1 & k-2 & \cdots & 2 &1 
\end{array} \right] \]

These sequences respect $C_{k+1}$, the cycle access graph on $k+1$ vertices. 
Hence, that bound is applicable to cycles of length $k+1$ as well. 

\begin{proposition}\label{prop:implied_results}
For the cycle access graph $C_{k+1}$,  
\[ \I^{C_{k+1}}[\FIFO, \LRU] =[ -1 + \frac{1}{k}, \frac{1}{2}- \frac{1}{4k-2}].\]
\end{proposition}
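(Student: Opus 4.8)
The plan is to obtain the stated equality as the conjunction of two opposite inclusions. One of them, $\I^{C_{k+1}}[\FIFO,\LRU] \subseteq [-1+\frac{1}{k},\ \frac{1}{2}-\frac{1}{4k-2}]$, is immediate: Lemma~\ref{any-fifo-lru} is proved for an \emph{arbitrary} access graph, so instantiating it with $G=C_{k+1}$ yields both $\Min^{C_{k+1}}(\FIFO,\LRU)\geq -1+\frac{1}{k}$ and $\Max^{C_{k+1}}(\FIFO,\LRU)\leq \frac{1}{2}-\frac{1}{4k-2}$ with no further work.

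For the reverse inclusion I would reuse the two families $I_m$ and $I_M$ displayed just above, which \cite[Theorem 7]{DLM09} already shows push the ratio $\frac{\FIFO(\cdot)-\LRU(\cdot)}{|\cdot|}$ to the left and right endpoints respectively. The first step is to verify that both families lie in $\LANG{C_{k+1}}$: reading page labels modulo $N=k+1$, every adjacent pair of requests inside $P$ and inside each row of $B$ differs by $\pm 1 \pmod{k+1}$, and the transitions between consecutive rows and from $P$ into $B$ are again between neighbors. Since the pages used are exactly $\{1,\dots,k+1\}$, the vertex set of $C_{k+1}$, the fault counts of \cite{DLM09} carry over verbatim. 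It then remains only to note that $\Min^{C_{k+1}}$ is a $\liminf$ and $\Max^{C_{k+1}}$ a $\limsup$ over \emph{all} length-$n$ sequences in $\LANG{C_{k+1}}$, so a single witnessing family already bounds them the correct way: evaluating along the subsequence of lengths $|I_m|$ gives $\Min^{C_{k+1}}(\FIFO,\LRU)\leq -1+\frac{1}{k}$, and along $|I_M|$ gives $\Max^{C_{k+1}}(\FIFO,\LRU)\geq \frac{1}{2}-\frac{1}{4k-2}$. Combining the two inclusions proves the proposition.

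The only genuine verification, and hence the main (though mild) obstacle, is confirming that $I_m$ and $I_M$ respect $C_{k+1}$ rather than merely the unrestricted universe for which they were originally designed; everything else is a direct appeal to Lemma~\ref{any-fifo-lru} and to the already-computed costs in \cite{DLM09}. The one place to be careful is the wrap-around step between page $1$ and page $k+1$ in each block: on $C_{k+1}$ these two vertices are adjacent (since $k+1\equiv 0\pmod{k+1}$), so the jump is legal, and this is precisely the step where the cyclic structure, as opposed to a path, is actually used.
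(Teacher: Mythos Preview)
Your approach is exactly the paper's: invoke Lemma~\ref{any-fifo-lru} (stated for arbitrary $G$) for the outer inclusion, and use the two DLM09 families displayed just above---which the paper also observes respect $C_{k+1}$---to witness the inner inclusion. One small slip: you have the roles of $I_m$ and $I_M$ reversed; $I_m$ is the family on which \FIFO faults much more than \LRU (it witnesses the $\Max$ endpoint $\frac{1}{2}-\frac{1}{4k-2}$), while $I_M$ is the one driving $\FIFO-\LRU$ toward $-1+\frac{1}{k}$.
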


\begin{proof}
This follows from the results in~\cite{DLM09}, using the sequences
above which respect the cycle, and Lemma~\ref{any-fifo-lru}.
\end{proof}

We now generalize these results to values of $N=k+r$, where $1\leq r\leq k-1$.

%%%%%%%%%%%%%%%%%%%% us - FIFO vs. LRU first %%%%%%%%%%%%%%%%%%

\begin{lemma}\label{cycle-FIFO-LRU-min-ub}
For the cycle access graph $C_N$,
\begin{center}
$ \Min^{C_N}(\FIFO, \LRU) \leq -1+\frac{r}{k} $ and $ \Min^{C_N}(\FIFO, \FWF) \leq -1+\frac{r}{k} $ 
\end{center}
\end{lemma}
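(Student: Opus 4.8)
The plan is to prove this upper bound on the liminf $\Min^{C_N}(\FIFO,\LRU)$ constructively: exhibit an explicit family $J_n\in\LANG{C_N}$ for which $\frac{\FIFO(J_n)-\LRU(J_n)}{|J_n|}\to -1+\frac{r}{k}$, and then conclude directly from the definition of $\Min$ as a limit inferior (since $\Mi{\FIFO}{\LRU}(|J_n|,C_N)\le \FIFO(J_n)-\LRU(J_n)$), or via Lemma~\ref{phases-limit-min}. The target is a family on which \LRU faults on essentially every request (rate tending to $1$) while \FIFO faults on only a $\frac{r}{k}$ fraction, since then the normalized difference tends to $\frac{r}{k}-1$. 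This generalizes the sequence $I_M$, which already realizes $-1+\frac{1}{k}$ on $C_{k+1}$ (Proposition~\ref{prop:implied_results}); note that for $r=1$ the two bounds coincide, so the construction must degrade gracefully as $r$ grows.

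First I would take $J_n=\SEQ{P,B^n}$. The prefix $P$ loads the cache and, crucially, fixes \FIFO's insertion order to be the reverse of \LRU's recency order, exactly as the up-then-down prefix does for $I_M$. The repeated block $B$ winds around the cycle so that, at each fault, \LRU evicts precisely the page it is about to request next (making \LRU pessimal, so its fault rate tends to $1$), whereas \FIFO's first-in-first-out order lets it retain a contiguous arc of the cycle that the traversal re-enters, so that roughly $k-r$ out of every $k$ of \FIFO's requests are hits. I would verify the costs by exhibiting an invariant: the cache contents and the internal orderings of both algorithms return to their values at the start of $B$ after one copy of $B$ is processed, so the per-block fault counts ($\approx|B|$ for \LRU and $\approx\frac{r}{k}|B|$ for \FIFO) repeat. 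Summing over the $n$ blocks, dividing by $|J_n|$, and letting $n\to\infty$ gives the claimed limit, with the bounded contribution of $P$ vanishing.

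The bound for \FWF then follows with no further construction: since \LRU is both conservative and a marking algorithm, Lemma~\ref{fwf-conservative} gives $\LRU(I)\le\FWF(I)$ for every $I$, hence $\FIFO(I)-\FWF(I)\le\FIFO(I)-\LRU(I)$ for all $I\in\LANG{C_N}$; taking the minimum over $|I|=n$ and then the normalized limit inferior yields $\Min^{C_N}(\FIFO,\FWF)\le\Min^{C_N}(\FIFO,\LRU)\le -1+\frac{r}{k}$ (equivalently, on the very family $J_n$ one has $\FWF(J_n)\ge\LRU(J_n)$, so its ratio is at most that for \LRU). The hard part is the block $B$ itself: a naive descending scan does make \LRU pessimal, but for $r\ge 2$ it forces \FIFO to fault on every request as well, because \FIFO's retained arc fragments at the wrap, collapsing the ratio to $0$. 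The delicate point is therefore to design the winding/doubling-back pattern in $B$ so that \FIFO's retained arc survives each trip around the cycle and its fault rate settles at exactly $\frac{r}{k}$, and to package the verification as a clean invariant on the joint \FIFO/\LRU configuration.
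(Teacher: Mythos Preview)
Your plan is essentially the paper's approach: exhibit $J_n=\SEQ{P,B^n}$ on which \LRU faults on (asymptotically) every request while \FIFO faults on an $r/k$ fraction, then read off the limit. The one place you overshoot is the anticipated complexity of $B$. The paper's block has \emph{no} winding or doubling-back: after a single turn at the very first request of $B$, it is a pure monotone scan around the cycle. What makes this work is the choice of prefix. Rather than the up-then-down prefix of $I_M$, the paper takes $P=\SEQ{1,2,\ldots,N,1,\ldots,r-1}$, i.e., it goes once around the cycle and continues $r-1$ further steps before turning. After this, \FIFO's oldest $r$ pages are precisely the $r$ pages farthest ahead in the scan direction, so during each stretch of $k$ consecutive requests \FIFO evicts exactly those $r$ far-ahead pages and hits on the remaining $k-r$; there is no fragmentation at the wrap. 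Your worry that a descending scan forces \FIFO to fault everywhere is correct only for the wrong prefix, not for this one. For \FWF, your shortcut via Lemma~\ref{fwf-conservative} is valid; the paper instead simply observes that on this particular $J_n$ \FWF and \LRU fault on exactly the same requests, so the same computation applies verbatim.
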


\begin{proof}
We define
$  J_n = \SEQ{P, B^n }$, where 
$P = \SEQ{1,2, \ldots,k, \ldots N, 1, 2, \ldots, r-1}$
and $B$ is defined by
\[ B = \left[ \begin{array}{rrrr|rrrrr}
 r &  r-1 & \cdots &    1 &  N &  N-1 & \cdots & 2r+2 & 2r+1 \\
2r & 2r-1 & \cdots &  r+1 &  r &  r-1 & \cdots & 3r+2 & 3r+1 \\ 
3r & 3r-1 & \cdots & 2r+1 & 2r & 2r-1 & \cdots & 4r+2 & 4r+1 \\ 
\vdots & \vdots & \cdots & \vdots & \vdots & \vdots & \cdots & \vdots & \vdots \\
 N &  N-1 & \cdots &  k+1 &  k &  k-1 & \cdots & r+2 & r+1 \\
\end{array}\right]\]
The vertical line is merely for reference in the proof.

Let $R$ denote the number of rows in $B$.
(Note that $R=\frac{LCM(N,r)}{r}$, where $LCM(N,r)$ denotes the
least common multiple of $N$ and $r$.)
There are $r$ columns before and $k-r$ columns after the vertical line.
Thus, $|J_n| = N+r-1+kRn$. 

Observe that the sequence turns exactly once,
namely after the first request in $B$.
There are $k-1$ hits following that request for both \FIFO and \LRU.
After that, the sequence moves around the cycle,
so \LRU faults on all of these requests, giving a total cost of
$\LRU^{C_N}(J_n) = N + r - k + kRn$. Note that \FWF faults on the same requests
as \LRU, so $\FWF^{C_N}(J_n) =\LRU^{C_N}(J_n)$.

For \FIFO, when processing $\SEQ{k+1, \ldots, N}$ in $P$, it
evicts $\SET{1, \ldots, r}$,
and then when processing $\SEQ{1, 2, \ldots, r-1}$, it evicts
$\SET{r+1, \ldots, 2r-1}$.
Then, at the very first request of $B$,
it incurs the next fault and evicts $2r$.
After that, the set of pages outside its cache is $\SET{r+1, \ldots, 2r}$,
and \FIFO does not fault again in the first row of $B$.
\FIFO then faults on the first $r$ requests in the second row,
evicting $\SET{2r+1, \ldots, 3r}$.
This pattern continues,
so \FIFO only faults on the first $r$ entries in each row of $B$.
Therefore, $\FIFO^{C_N}(J_n) = N + rRn$. 

This gives
\begin{align*}
  \Min^{C_N}(\FIFO, \LRU)
  \leq &
  \lim_{n \rightarrow \infty}\frac{\FIFO^{C_N}(J_n) - \LRU^{C_N}(J_n)}{|J_n|} \\
  = &
  \lim_{n \rightarrow \infty}\frac{N + rRn - (N + r - k + kRn)}{N+r-1+kRn} \\
  = & - \frac{k - r}{k}
  = -1+\frac{r}{k}.
\end{align*} \end{proof}

\begin{lemma}\label{cycle-FIFO-LRU-max-lb}
For the cycle access graph $C_N$,
\[ \Max^{C_N}(\FIFO, \LRU) \geq \frac{1}{2}-\frac{1}{4k-2} \]
\end{lemma}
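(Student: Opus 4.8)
The plan is to prove the lower bound by exhibiting an explicit family of sequences on $C_N$ whose \FIFO-minus-\LRU ratio tends to $\frac{1}{2}-\frac{1}{4k-2}=\frac{k-1}{2k-1}$; combined with the universal upper bound of Lemma~\ref{any-fifo-lru}, this in fact pins down the exact value. The ``valley'' construction used for $C_{k+1}$ in Proposition~\ref{prop:implied_results} cannot be reused verbatim, since its rows use the wrap-around edge of $C_{k+1}$, which is absent once $N>k+1$; and it cannot be replaced by a sequence confined to a sub-path $P_{k+1}\subseteq C_N$, because Lemma~\ref{path-upper-bound} caps the ratio on any path at $\frac{1}{2}-\frac{1}{2k}<\frac{1}{2}-\frac{1}{4k-2}$. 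Thus the cyclic structure must be used genuinely, and the idea is to let the valley \emph{rotate} around the whole cycle.

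Concretely, I would take $J_n=\SEQ{P,B^n}$, where $P=\SEQ{1,2,\ldots,k}$ primes both algorithms to hold the $k$ consecutive pages $\SET{1,\ldots,k}$ with \FIFO's queue order and \LRU's recency order both equal to $1,\ldots,k$ (oldest to newest). The block $B$ is a concatenation of $N$ ``rows,'' each of length $2k-1$: a row descends through $k-1$ consecutive cached pages (hits for both algorithms), steps out to the single page just beyond the window at the turning point, and then ascends back through the same $k-1$ pages. The effect of one row is to rotate the window of $k$ cached pages by one vertex around the cycle, so after $N$ rows the window returns to $\SET{1,\ldots,k}$ and $B$ is genuinely periodic; the priming cost of $P$ (and any startup discrepancy) contributes only an additive constant that vanishes in the limit.

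The two per-row cost counts are the heart of the argument. \LRU faults exactly once per row: the turning page is the only new page, and because the ascent revisits the just-seen pages, they remain most-recently-used and are hits. \FIFO, by contrast, faults $k$ times per row---once at the turning point and once on each of the $k-1$ ascending pages---because it evicts in insertion order and therefore discards precisely the pages it is about to re-request. I would verify by induction that at the start of every row \FIFO's queue order and \LRU's recency order coincide with the spatial order of the window, which is what makes the pattern repeat cleanly. Granting these counts, each row contributes $\FIFO-\LRU=k-1$ over length $2k-1$, so
\[
\lim_{n\to\infty}\frac{\FIFO^{C_N}(J_n)-\LRU^{C_N}(J_n)}{|J_n|}
=\frac{N(k-1)}{N(2k-1)}=\frac{k-1}{2k-1}=\frac{1}{2}-\frac{1}{4k-2},
\]
which is exactly the claimed bound and is independent of $r=N-k$.

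The main obstacle I anticipate is the bookkeeping in the third paragraph: rigorously establishing, for all $k$ and all $N\geq k+1$, that the queue, recency, and spatial orders realign at each row boundary, and that every consecutive pair of requests (including the row-to-row and $P$-to-$B$ transitions) respects the cycle. Once this realignment invariant is proved by induction, the per-row fault counts and the limit follow mechanically, and the independence from $r$ drops out because the window size stays $k$ while it merely rotates through the $N$ vertices.
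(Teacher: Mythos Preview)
Your approach is correct and essentially the same as the paper's: both use a ``rotating valley'' of length $2k-1$ that shifts by one vertex around the cycle per iteration, with \LRU faulting once and \FIFO faulting $k$ times per valley. The only differences are cosmetic---the paper places the single new page at the \emph{start} of each segment $S_i=\SEQ{i+k,\ldots,i+1,\ldots,i+k}$ (so \FIFO's cascade happens on the descent) rather than at the turning point, and it indexes the segments directly modulo $N$ instead of bundling $N$ rows into a repeatable block $B$; this lets the paper skip the order-alignment invariant you plan to prove, but your version is equally valid.
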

\begin{proof}
Let $I_n = \SEQ{S_0,S_1,...,S_n}$, where
\[ S_i = \SEQ{i+k,i+k-1, \ldots ,i+2,i+1,i+2, \ldots, i+k-1,i+k }.\]

Clearly, $\FIFO(S_0)=\LRU(S_0) = k$.

In processing $S_1$, \LRU only faults on $1+k$, where it evicts $1$,
which is not requested in $S_1$.
In general, \LRU faults only on the first request in each $S_i$,
evicting page $i$, which is not requested in $S_i$.
Hence, $\LRU(I_n) = k + n$. 

\FIFO faults on the first request in $S_1$, evicting $k$, which is requested
next. At that request $k-1$ is evicted, leading to a fault on the following
request, etc.
In total, \FIFO faults $k$ times on $S_1$ and pages were brought into
cache in the ordering $i+k$ through $i+1$.
Thus, in general, when the processing of $S_{i+1}$ starts,
the situation repeats.
Hence, we have $\FIFO(I_n) = k + kn$.
The length of the sequence is $|I_n| = (2k-1)(n+1)$.
So,
\begin{align*}
  \Max^{C_N}(\FIFO, \LRU)
  \geq &
  \lim_{n \rightarrow \infty} \frac{\FIFO(I_n) - \LRU(I_n)}{|I_n|} \\
  = &
  \lim_{n \rightarrow \infty} \frac{k + kn - (k + n)}{(2k-1)(n+1)} \\
  = &
  \frac{k-1}{2k-1}
  =
  \frac{1}{2}-\frac{1}{4k-2}
\end{align*}
\end{proof}

\begin{theorem}\label{cycle-FIFO-LRU}
For the cycle access graph $C_N$,
\[
\left[ -1+\frac{r}{k}, \frac{1}{2}-\frac{1}{4k-2} \right]
\subseteq
\I^{C_N}[ \FIFO, \LRU]
\subseteq
\left[ -1+\frac{1}{k}, \frac{1}{2}-\frac{1}{4k-2} \right]
\]
\end{theorem}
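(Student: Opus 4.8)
The plan is to observe that, by definition, $\I^{C_N}[\FIFO,\LRU]$ is simply the interval $[\Min^{C_N}(\FIFO,\LRU),\,\Max^{C_N}(\FIFO,\LRU)]$, so the two claimed set inclusions reduce to four scalar inequalities on its two endpoints, each of which is already supplied by a preceding lemma. First I would establish the outer (larger) containment $\I^{C_N}[\FIFO,\LRU]\subseteq[-1+\frac{1}{k},\,\frac{1}{2}-\frac{1}{4k-2}]$. Since one closed interval sits inside another exactly when its left endpoint is no smaller and its right endpoint is no larger, this asks for $\Min^{C_N}(\FIFO,\LRU)\geq -1+\frac{1}{k}$ and $\Max^{C_N}(\FIFO,\LRU)\leq\frac{1}{2}-\frac{1}{4k-2}$. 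Both are immediate from Lemma~\ref{any-fifo-lru}, which proves exactly these two bounds for \emph{every} access graph $G$, and in particular for $G=C_N$.

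For the inner (smaller) containment $[-1+\frac{r}{k},\,\frac{1}{2}-\frac{1}{4k-2}]\subseteq\I^{C_N}[\FIFO,\LRU]$ I would argue in the opposite direction: I need the actual left endpoint $\Min^{C_N}(\FIFO,\LRU)$ to lie at or below $-1+\frac{r}{k}$ and the actual right endpoint $\Max^{C_N}(\FIFO,\LRU)$ to lie at or above $\frac{1}{2}-\frac{1}{4k-2}$. The first is precisely Lemma~\ref{cycle-FIFO-LRU-min-ub}, whose witness is the family $J_n$ built from the wrap-around block $B$ that forces \FIFO to fault on only the first $r$ entries of each row while \LRU faults at every step around the cycle; the second is precisely Lemma~\ref{cycle-FIFO-LRU-max-lb}, whose witness is the sliding family of segments $S_i$ on which \FIFO faults $k$ times per segment and \LRU only once. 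Assembling these four inequalities yields both inclusions.

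I would also record a consistency check: since $1\leq r\leq k-1$, we have $-1+\frac{1}{k}\leq -1+\frac{r}{k}\leq -\frac{1}{k}$, so the inner-interval left endpoint sits above the outer-interval left endpoint, and the two inclusions are genuinely compatible. Note further that the two $\Max$ bounds coincide, so in fact $\Max^{C_N}(\FIFO,\LRU)=\frac{1}{2}-\frac{1}{4k-2}$ is pinned down exactly; only the $\Min$ endpoint is left trapped in the window $[-1+\frac{1}{k},\,-1+\frac{r}{k}]$, which is why the theorem states a sandwich rather than an equality.

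I do not expect a real obstacle here, since all of the combinatorial difficulty---the charging/association argument that bounds $\Max$ in Lemma~\ref{any-fifo-lru}, and the explicit cycle-wrapping and sliding-segment constructions in Lemmas~\ref{cycle-FIFO-LRU-min-ub} and~\ref{cycle-FIFO-LRU-max-lb}---has already been absorbed into those lemmas, leaving only bookkeeping. The one place I would be most careful is the direction of the endpoint comparisons: enlarging $\I^{C_N}$ (certifying the inner containment) means pushing $\Min$ \emph{down} and $\Max$ \emph{up}, whereas confining $\I^{C_N}$ inside a fixed interval (the outer containment) means bounding $\Min$ from \emph{below} and $\Max$ from \emph{above}, and it is easy to pair a lemma with the wrong inclusion if one is not deliberate about this.
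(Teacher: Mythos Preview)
Your proposal is correct and matches the paper's own proof exactly: the outer containment is obtained from Lemma~\ref{any-fifo-lru} (applied with $G=C_N$), and the inner containment from Lemmas~\ref{cycle-FIFO-LRU-min-ub} and~\ref{cycle-FIFO-LRU-max-lb}. Your additional remarks on endpoint directions and the consistency check are sound but go beyond what the paper records.
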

\begin{proof}
The left-most containment follows from
Lemmas~\ref{cycle-FIFO-LRU-min-ub} and~\ref{cycle-FIFO-LRU-max-lb},
and the right-most from Lemma~\ref{any-fifo-lru}.
\end{proof}

\begin{theorem}\label{Max:C:FWF}For the cycle access graph $C_{N}$,
\[\I^{C_{N}}[\FWF, \LRU] = \left[ 0, 1 -\frac{1}{k} \right] \]
\end{theorem}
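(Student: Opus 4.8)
The plan is to establish the two endpoints of the interval separately, reusing machinery already developed for the path. For the left endpoint, I would invoke Lemma~\ref{fwf-conservative}: since \LRU is conservative (indeed a marking algorithm), $\LRU(I)\le\FWF(I)$ for every sequence $I$ respecting $C_N$, so $\FWF(I)-\LRU(I)\ge 0$ always. Because the constant sequence $\SEQ{1}^n$ forces exactly one fault for each algorithm, $\Mi{\FWF}{\LRU}(n,C_N)=0$ for all $n$, giving $\Min^{C_N}(\FWF,\LRU)=0$ and confirming that \LRU dominates \FWF on the cycle.

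For the right endpoint, the upper bound $\Max^{C_N}(\FWF,\LRU)\le 1-\frac1k$ is immediate from Proposition~\ref{general_bound}, taking $\A=\FWF$ (a marking algorithm) and $\B=\LRU$. The matching lower bound is the substantive part, but it can be borrowed wholesale from the path analysis: since $N\ge k+1$, the cycle $C_N$ contains a path on $k+1$ consecutive vertices, and the back-and-forth sequence $I_n=\SEQ{1,2,\ldots,k,k+1,k,\ldots,2}^n$ used in Theorem~\ref{FWF:P} lives entirely on these vertices, hence respects $C_N$. Neither \FWF nor \LRU consults the access graph when choosing evictions, so their fault counts on $I_n$ are exactly those computed for the path, namely $\LRU(I_n)=2n+k-1$ and $\FWF(I_n)=2kn$, with $|I_n|=2kn$. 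Passing to the limit yields the ratio $\frac{k-1}{k}=1-\frac1k$, so $\Max^{C_N}(\FWF,\LRU)\ge 1-\frac1k$, and the two bounds coincide.

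The only point requiring care---and it is minor---is the caveat flagged in the text preceding Theorem~\ref{FWF:P}: \FAR genuinely behaves differently on $C_N$ than on $P_{k+1}$, so this transfer argument succeeds precisely because the pair under consideration is $(\FWF,\LRU)$, both of which are graph-oblivious in their eviction decisions. I would therefore make explicit that the lower-bound sequence never traverses the ``closing'' edge of the cycle, so that the request pattern, and thus every eviction decision of both algorithms, is identical to the path case. Combining $\Min^{C_N}(\FWF,\LRU)=0$ with $\Max^{C_N}(\FWF,\LRU)=1-\frac1k$ then gives $\I^{C_N}[\FWF,\LRU]=\left[0, 1-\frac1k\right]$.
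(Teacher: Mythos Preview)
Your proposal is correct and follows essentially the same approach as the paper: the left endpoint via Lemma~\ref{fwf-conservative}, the upper bound on the right endpoint via Proposition~\ref{general_bound}, and the matching lower bound via the same back-and-forth sequence $I_n=\SEQ{1,2,\ldots,k,k+1,k,\ldots,2}^n$ used in Theorem~\ref{FWF:P}. Your added remarks (the constant sequence witnessing that the minimum is exactly zero, and the observation that the transfer from the path works precisely because \FWF and \LRU are graph-oblivious, unlike \FAR) are accurate and slightly more explicit than the paper's terse one-line proof, but the substance is identical.
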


\begin{proof} Sequence, $I_{n} = \SEQ{1,2,\ldots, k, k+1, k , \ldots, 2}^{n}$, respecting $C_{N}$, gives the right endpoint in conjunction with 
Proposition~\ref{general_bound}. The left endpoint is given by Lemma~\ref{fwf-conservative}.
\end{proof}

%%%%%%%%% FAR treatment starts here %%%%%%%%%%%%

The exact results to be presented sometimes depend on
the relationship between $k$ and $N$, e.g., whether or
not $r$ divides $N$ (denoted $r \mid N$).
To express many of the results, we need the following term
that, for brevity, we will simply denote $X_r$:
\[ X_r = r(x-1) + \CEIL{\frac{N}{2^x}}, \textrm{ where }
  x = \FLOOR{\log \frac{N}{r}} \]

In the following lemma, we analyze \FAR's behavior on the simplest sequence
exploiting the cycle structure.

\begin{lemma}\label{FAR_WholeCycle}
For \FAR and the sequence $I_{n}= \SEQ{1,2, \ldots, k, \ldots, N}^n$ in $C_N$, 
each $k$-phase, except the first and possibly the last, has $X_r$ faults, and 
\[ \FLOOR{\frac{nN}{k}}X_r +k- X_r \leq \FAR^{C_N }(I_{n})  \leq  \FLOOR{\frac{nN}{k}} X_r + k-1. \] 
\end{lemma}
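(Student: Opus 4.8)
The plan is to analyze \FAR one $k$-phase at a time, maintaining a structural invariant on the cache across phase boundaries and counting the faults inside a single phase, where the farthest-first eviction rule produces a distance-halving pattern. Because the requests sweep the cycle in the fixed cyclic order, every $k$-phase consists of $k$ consecutive vertices $p, p+1, \ldots, p+k-1$, and successive phases are shifted by $k \equiv -r \pmod N$. I would then prove, by induction on the phase number, the invariant that at the start of every phase after the first the $r$ pages missing from \FAR's cache are exactly the first $r$ requests $\SET{p, \ldots, p+r-1}$ of that phase (recording also that the recency order used for tie-breaking is the sweep order). The base case is the first phase: starting from an empty cache it faults exactly $k$ times and leaves $\SET{k+1, \ldots, N}$ outside the cache, which are precisely the first $r$ requests of the second phase.

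The heart of the argument is the within-phase count. Since \FAR is a marking algorithm, the marked pages always form a contiguous arc that extends as the sweep advances, and among the unmarked pages \FAR evicts one that is essentially antipodal to this arc (the LRU tie-break choosing the page swept least recently). Servicing the $r$ inherited holes thus costs $r$ faults and evicts $r$ pages forming a fresh contiguous hole-block antipodal to the arc; each time the sweep later reaches a freshly created block it faults and spawns the next one, but the length of the still-uncovered arc is, up to rounding, halved at each such step. Formalizing this as the recursion $g_0 = N$, $g_i = \CEIL{g_{i-1}/2}$, I would show that the faults fall into $x = \FLOOR{\log \frac{N}{r}}$ rounds: the first $x-1$ rounds each contribute a block of $r$ faults, while in the final round the gaps have shrunk enough that the remaining holes are faulted in a single run of length $g_x = \CEIL{N/2^x}$. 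This totals exactly $X_r = r(x-1) + \CEIL{N/2^x}$ faults, the last request of the phase is a fault, and the holes left behind are $\SET{p+k, \ldots, p+k+r-1}$, the first $r$ requests of the next phase, which closes the induction.

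Finally I would assemble the bounds. Writing $b = \FLOOR{nN/k}$ for the number of complete $k$-phases, the first contributes $k$ faults and each of the remaining $b-1$ complete phases contributes $X_r$, for a subtotal of $k + (b-1)X_r = \FLOOR{nN/k}X_r + k - X_r$; this is the lower bound, and it is attained when $k \mid nN$, so that there is no incomplete phase. A possible final incomplete phase is a proper prefix of a full phase, so since the last request of a full phase is a fault it contains at most $X_r - 1$ faults, which gives the upper bound $\FLOOR{nN/k}X_r + k - 1$.

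I expect the main obstacle to be the within-phase count: making the antipodal-eviction geometry precise (including the LRU tie-breaking and the contiguity of each hole-block), controlling the floors and ceilings so the total lands exactly on $r(x-1) + \CEIL{N/2^x}$ rather than an approximation, and in particular pinning down the final round, where the halving has brought the blocks close enough to merge into one run of $\CEIL{N/2^x}$ faults. Verifying that the end-of-phase configuration reproduces the start-of-phase invariant exactly is the other delicate point, since the whole induction rests on it.
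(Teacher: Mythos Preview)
Your plan is essentially the paper's proof. The paper also fixes the phase-boundary invariant (the $r$ pages outside \FAR's cache are the first $r$ requests of the coming phase) and then counts faults inside one phase via a halving recursion on the length of the unmarked arc. The paper tracks $d_1=N-1$, $d_{i+1}=\lfloor d_i/2\rfloor$, which is the same object as your $g_i$ up to the shift $g_i=d_{i+1}+1$; both give $d_i+1=\lceil N/2^{i-1}\rceil$. Your assembly of the lower and upper bounds, including the observation that the last request of a complete phase is a fault, is exactly what the paper does.

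One point to correct in your endgame picture: the final $\lceil N/2^x\rceil$ faults generally do \emph{not} occur as a single contiguous run. For example with $k=11$, $r=2$, $N=13$ (so $x=2$, $\lceil N/2^x\rceil=4$) the phase beginning at $12$ faults on $12,13$, then $5,6$, then $8,9$, with a hit on $7$ separating the last two pairs; your ``final round'' of $4$ faults is split by that hit. The paper handles this by counting $x+1$ batches of faults rather than $x$: the first $x$ batches have $r$ faults each and the last has $\lceil N/2^x\rceil-r$, for the same total $r(x-1)+\lceil N/2^x\rceil=X_r$. So your count survives, but the structural claim that the blocks eventually ``merge into one run'' should be replaced by this two-batch description when you formalize the last step.
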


\begin{proof}In the given sequence, as in any other sequence, the first $k$-phase
contributes $k$ faults. The first phase change in $I_{n}$ occurs at $k+1$,
at which all the other $N-1$ pages are unmarked. Given that the sequence
goes around the cycle $n$ times, without turning, the properties discussed
about faults in the second phase holds for all subsequent ones,
with the possible exception of the last which may contain just one fault.
Consider the fault incurred at the phase change at $k+1$.
The page evicted lies in the middle of the unmarked segment
$[k+2, \ldots, N, 1, \ldots, k]$.
Following this, there are $r-1$ more faults before the next hit.
Each fault leads to the eviction of the page adjacent to the most recently
evicted page, the evictions moving in the same direction
in which the faults are encountered.

In each phase, we refer to the first $r$ faults as the first {\it batch},
faults numbered $r+1$ through $2r$ as the second batch, and so on.
If there are $i$ batches of faults in one $k$-phase, then the first $i-1$
batches will contribute $r$ faults each, and the last batch will have at
least one and at most $r$ faults. For the $i$th batch,
we denote the length of the unmarked segment after marking the first page
in the batch by $d_i$, and the distance to the page evicted at the first
fault in the $i$th batch by $D_i$. These distances are measured in the
direction in which the faulting page was approached. Therefore,
$d_1 = N-1$ and for $i \geq 1$, $d_{i+1}  = d_i - D_i$.
Since \LRU is used to break ties, if for some $i$, $d_i$ is even,
then the closer of the two midpoints is evicted at the first fault of the
$i$th batch. Thus, we have the following dependencies:
\[ \textrm{For $i \geq 1$, } D_i  = \CEIL{d_i/2}
  \textrm{ and }
  d_{i+1} =d_i - D_i= \FLOOR{d_i/2} \]
From the recurrence $d_i = \FLOOR{d_{i-1}/2}$, we obtain the following relation:
\[\textrm{For $i \geq 1$, }
 d_i =  \FLOOR{\frac{d_{i-1}}{2}}
     = \FLOOR{\frac{1}{2}\FLOOR{\frac{d_{i-2}}{2}}}
     = \FLOOR{\frac{d_{i-2}}{2^2}}
     = \FLOOR{\frac{d_1}{2^{i-1}}}
     = \FLOOR{\frac{N-1}{2^{i-1}}}
\]
A $k$-phase ends when all the pages in the cache are marked
and the next request will be a fault.
At any given instant, the marked segment is a path in $C_N$.
This implies that a phase ends when the $r$ pages outside the cache
constitute the unmarked segment, and one of those unmarked pages is
requested. Therefore, if there are $i$ batches
in a $k$-phase, then $d_i + 1 \leq 2r$. Stated differently,
the smallest value of $i$ for which $d_i+1 \leq 2r$
gives the number of batches in a phase. 

If there is an $i$ such that $d_i +1= 2r$, then the phase has $i$ batches
contributing $r$ faults each.  Otherwise, if $d_i + 1< 2r$, then the first
$i-1$ batches contribute $r$ faults each and the last batch contributes
fewer than $r$.

It follows from the above that $d_i + 1 = \CEIL{\frac{N}{2^{i-1}}}$.
Solving $\CEIL{\frac{N}{2^{i-1}}} \leq 2r$ gives
$i-1 = \FLOOR{\log\frac{N}{r}}$ batches with $r$ faults each
and the last with $y=\CEIL{\frac{N}{2^{i-1}}} - r$ faults.
Therefore, each phase in $I_{n}$, excluding the first and perhaps the last,
contains $r(i-1) + y$ faults.
There are $\FLOOR{\frac{nN}{k}}$ complete phases in $I_{r,n}$
and if the last phase is not complete, that is, $k \nmid nN$,
then the last phase can contain at most $r(i-1)+y-1$ faults.
Thus, we obtain the following relation for \FAR serving $I_{n}$:
\[\FLOOR{\frac{nN}{k}} (rx + y) + c \leq \FAR^{C_N}(I_{n})
 \leq
 \FLOOR{\frac{nN}{k}} (rx + y) + rx+y-1 + c,
\]
where $x = \FLOOR{\log\frac{N}{r}}$,
     $y = \CEIL{\frac{N}{2^x}}- r$ and
     $c =  k - (rx + y)$.
\end{proof}

The following lemma analyzes \FAR's behavior on a cycle when the cycle
structure is {\em not} used. Thus, the cycle access graph is used as a path access graph. However,
\FAR is oblivious to this and uses distances involving the non-utilized
edge in the graph, leading to non-optimal results.

From now on, whenever needed , we use $\hat{N}$ to denote $N$, if $N$ is even, and $N-1$, otherwise.

\begin{lemma}\label{LRU_FAR_Min_Ubound_Cycle2}
For \FAR and the sequence $I_{n}= \SEQ{1,2, \ldots, k, \ldots, N-1, N, N-1, \ldots, 2}^n$ in $C_N$, each $k$-phase, except the first (which has $k$) and the last (which has $r$), has $rx+y$ faults, where $x = \FLOOR{\log\frac{\hat{N}}{r}}$ and $y = \FLOOR{\frac{\hat{N}}{2^x}}-r$.
\end{lemma}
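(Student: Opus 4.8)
The plan is to follow the proof of Lemma~\ref{FAR_WholeCycle} as closely as possible, tracking \FAR's fault-and-eviction behavior one $k$-phase at a time, and then to isolate the single place where the oscillating sequence differs from the purely forward sequence of that lemma, namely at the turning points $N$ and $1$. Throughout, the point is that \FAR keeps measuring distances across the whole cycle (including the non-utilized edge), while the requests only ever traverse the path, so its midpoint evictions are computed as in Lemma~\ref{FAR_WholeCycle} right up until the traversal reverses.

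First I would dispose of the two boundary phases. The first $k$-phase is $\SEQ{1,2,\ldots,k}$, on which \FAR faults $k$ times from the cold cache, exactly as before. The last phase is handled by inspecting where the sequence terminates (its final request is to page $2$); this contributes only the stated $r$ faults and, being a single phase, is asymptotically irrelevant to any later use of the lemma.

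The heart of the argument is a generic intermediate phase. I would reuse the batch decomposition of Lemma~\ref{FAR_WholeCycle}: at the phase change \FAR unmarks everything, marks the first requested page, and evicts the midpoint of the unmarked segment, whose initial length is again $d_1 = N-1$ because the first fault of a phase occurs before the traversal reverses direction within that phase, so \FAR still measures distances across the full cycle. The same recurrences $D_i = \CEIL{d_i/2}$ and $d_{i+1} = \FLOOR{d_i/2}$ therefore hold, and the number of full batches of $r$ faults each is unchanged, equal to $x = \FLOOR{\log(\hat N/r)}$; here I would note that $\FLOOR{\log(\hat N/r)} = \FLOOR{\log(N/r)}$, since for odd $N$ the quantity $N/r$ is never an integer power of two. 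The one genuine difference is the final, partial batch. In Lemma~\ref{FAR_WholeCycle} the traversal keeps moving forward through the entire shrunken segment, producing $\CEIL{N/2^x}-r$ faults; in the oscillating sequence the traversal instead reaches a turning point and reverses onto pages already marked in this phase, so those re-requests are hits and the phase closes as soon as $k$ distinct pages have been seen. Counting the faults actually incurred in this truncated batch gives $y = \FLOOR{\hat N/2^x} - r$, the replacement of the ceiling by a floor (and the appearance of $\hat N$) being exactly the off-by-one caused by reflecting at the endpoint together with the \LRU tie-breaking at even segment lengths. Summing gives $rx+y$ faults for the phase.

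The main obstacle is this last step: proving that the turn always interrupts precisely the final batch, never an earlier one, for every intermediate phase, and that the truncated count is uniformly $\FLOOR{\hat N/2^x}-r$ regardless of the phase's starting page and of whether the turn occurs at the top ($N$) or at the bottom ($1$). To handle this I would argue, using $r<k$, that a phase always spans the apex of one oscillation, so the run of new, faulting pages is exactly the prefix governed by the whole-cycle recurrence, while the post-turn tail consists solely of hits until the $(k+1)$st distinct page opens the next phase; the symmetry of the cycle then makes the top and bottom turns interchangeable, yielding the claimed uniform count. Establishing the exact parity bookkeeping, that the reflection converts $\CEIL{N/2^x}$ into $\FLOOR{\hat N/2^x}$, is the delicate part, and it is precisely where the parity of $N$ encoded by $\hat N$ is used.
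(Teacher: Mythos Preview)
Your structural picture of a phase is backwards, and this is a genuine gap, not merely missing bookkeeping.

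Take the phase that begins at $k{+}1$. The set of pages outside \FAR's cache at that moment is exactly $\{k{+}1,\ldots,N\}$, so the first batch of $r$ faults is $k{+}1,k{+}2,\ldots,N$ and the traversal reaches the turning point $N$ \emph{immediately after the first batch}, not near the end of the phase. The sequence then reverses, passes over the just-marked pages $N{-}1,\ldots,k{+}1$ as hits, and only then begins the second batch, hitting the pages that were evicted during the first batch. All remaining batches in the phase are traversed in this reversed direction. The same thing happens, symmetrically, for the phase beginning at $r$. So the phase looks like ``one batch forward, turn, all remaining batches backward,'' not ``many batches forward, turn, a truncated final batch.''

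This matters because the reversal changes the \LRU tie-breaking for even-length unmarked segments from the second batch onward: the farther midpoint (in the new direction of approach) is now the less recently used one. Consequently the recurrence is not the $d_{i+1}=\FLOOR{d_i/2}$ of Lemma~\ref{FAR_WholeCycle} but rather $d_2 = D_1-1$ and $d_{i+1}=\FLOOR{(d_i-1)/2}$ for $i\ge 2$; one then shows $d_i+1=\FLOOR{\hat N/2^{i-1}}$ for $i\ge 3$, and it is this modified recurrence, applied all the way through, that yields the floor $\FLOOR{\hat N/2^x}$ in place of $\CEIL{N/2^x}$. Your proposed mechanism (same recurrences, with the $\hat N$ and the floor arising from a cutoff in the last batch) does not match what \FAR actually does, and the obstacle you identify (``proving that the turn always interrupts precisely the final batch'') is in fact a false statement rather than a difficulty to be overcome.
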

\begin{proof}
The first $k$-phase in $I_{n}$ has $k$ faults. In any $k$-phase of $I_{n}$, excluding the first, the first set of $r$ faults is called the
first {\it batch}, faults numbered $r+1$ through $2r$ is called
the second batch, and so on. If there are $i$ batches of faults
in one $k$-phase, then the first $i-1$ batches will contribute $r$ faults
each, and the last batch will have at least one and at most $r$ faults.

As before, the length of the unmarked segment after marking the
first page of the $i$th batch is denoted by $d_i$ and the page located
$D_i$ pages away is evicted at that fault. All these distances are measured
in the direction in which the first fault of the batch was encountered.
Note that within each iteration within $I_n$, there are two phase changes,
occurring first at $k+1$ and then at $r$. In the following discussion,
we explain the behavior of \FAR in one iteration within $I_n$.
Since the same properties hold for others, that will lead to a bound
for $\FAR^{C_{N}}(I_{n})$. 
 
At the end of a phase and right before the start of the next, {\FAR}'s cache is connected. Hence, the $r$ pages outside the cache also form a connected component, implying that the sets of pages outside \FAR's cache immediately before the phase changes at $k+1$ and $r$ are $\SET{k+1, \ldots, N}$ and $\SET{r, r-1, \ldots, 1}$, respectively.
 
For the phase changes at $k+1$ and $r$, the faulting request is approached
from $k$ and $r+1$, respectively. For either case, we have $d_1 = N-1$
and as in Lemma~\ref{FAR_WholeCycle}, the page located $D_1 = \CEIL{d_1/2}$
vertices away is evicted at the first fault in the phase.
The next $r-1$ faults lead to eviction of pages in the same direction
in which the faults are encountered. Unlike in the previous lemma,
the sequence considered here turns back at the end of the first batch
and so the second batch of faults start at the most recently evicted page.

{\it Phase change at $k+1$}:
The first fault in the second batch occurs when the sequence reaches
$D_1 $, which is also the first page marked in the batch.
The unmarked segment at that instant is $\{ D_1,D_1 -1, \ldots, 1\}$.

{\it Phase change at $r$}: Analogously to the previous case,
the second batch of faults starts when the sequence reaches $N-D_1+1$.
The unmarked segment at that instant is
\[ \Big[N-D_1+2 , N-D_1 +3, \ldots, N-D_1 +r, \ldots, k, k+1, \ldots, N-1 , N \Big].\]

In either case, the length of the unmarked segment is $d_2 =D_1 -1$.
Note that for both locations of phase change, the change in direction of
the sequence right after the first batch affects the resolution of ties
in subsequent batches. In fact, if $d_2$ is even, then the farther of the
two midpoints, measured in the same direction as the fault, is less recently
requested than the other. Therefore, for each phase, we have the following
correspondence:
\[ D_2 = \begin{cases} d_2/2 +1,     & \hbox{if $d_2$ is even } \\
                      \CEIL{d_2/2}, & \hbox{if $d_2$ is odd }
        \end{cases}
\]
Since, in either case, from the second batch onwards, the sequence does not
change direction for the rest of the phase, all subsequent ties within the
phase are resolved in the manner of the second batch. 
Therefore, in any given phase, from the second batch onwards, if the
unmarked segment is even, the farther of the two midpoints, measured in
the same direction in which the fault was approached is evicted in favor
of the other. This yields the following set of relations:
$d_1 = N-1$, $D_1 = \CEIL{d_1/2}$, $d_2 = D_1 -1$, and for $i \geq 2$,
\begin{align*}
D_i = & \begin{cases}
         d_i/2 +1,     & \hbox{if $d_i$ is even }\\
         \CEIL{d_i/2}, & \hbox{if $d_i$ is odd }\\
       \end{cases}
\end{align*}
and
\begin{align*}
d_{i+1} =  d_i -D_i = &
       \begin{cases}
         d_i/2 -1,      & \hbox{if $d_i$ is even }\\
         \FLOOR{d_i/2}, & \hbox{if $d_i$ is odd }\\
       \end{cases} 
\end{align*}

This implies that for all $i \geq 2$, $d_{i+1}=~\FLOOR{\frac{d_i-1}{2}}$. 

We now establish the following claim.
Recall that $\hat{N}$ denotes $N$, if $N$ is even, and $N-1$, otherwise. 

\begin{claim}For $i \geq 3$, we have $d_{i} + 1 = \FLOOR{\frac{D_1}{2^{i-2}}} = \FLOOR{\frac{\hat{N}}{2^{i-1}}}$. 
\end{claim}

\begin{proof}
Since $D_{1} = \CEIL{\frac{N-1}{2}}$, using the new notation,
$D_{1} = \frac{\hat{N}}{2}$.
 
We proceed to show by induction that for $i\geq 3$,
$d_{i} + 1 = \FLOOR{\frac{D_1}{2^{i-2}}}$.

For the base case, $i=3$, we have
\[
 d_{3} =\FLOOR{\frac{d_{2}-1}{2}} = \FLOOR{\frac{(D_{1} -1) -1 }{2}} =  
\FLOOR{\frac{D_{1}}{2}}-1.
\]
Hence,  $d_{3} +1 = \FLOOR{\frac{D_1}{2^{3-2}}}$.
 
Now, we assume that the induction hypothesis holds up to some $i \geq 3$.
For the induction step, we prove the relation
$d_{t+1} =\FLOOR{\frac{d_{t}-1}{2}}$, by applying the hypothesis for $d_{t}$
in the last equality below.
\begin{align*}
d_{t+1}= &~\FLOOR{\frac{d_{t}-1}{2}}= \FLOOR{\frac{1}{2}(d_{t}+1) -1}=\FLOOR{\frac{1}{2}\FLOOR{\frac{D_1}{2^{t-2}}} }-1
\end{align*}
Therefore, $ d_{t+1} + 1= \FLOOR{\frac{D_1}{2^{t-1}}}$,
and the claim is proved.
\end{proof}

As was the case in the previous lemma,
the last batch starts when for the first time in the current phase,
the length of the unmarked segment is no greater than $2r$, i.e.,
the smallest value $i$ for which $d_i+1 \leq 2r$
gives the number of batches in the phase.
Solving $\FLOOR{\frac{\hat{N}}{2^{i-1}}} \leq 2r$
gives $i-1 = \FLOOR{\log\frac{\hat{N}}{r}}$.
Therefore, the first $i-1$ batches in a $k$-phase have $r$ faults each.
In the last batch, though, there 
are exactly $\FLOOR{\frac{\hat{N}}{2^{i-1}}} -r$ faults.

Right before the start of the $i$th batch, the length of the unmarked
segment is $d_i+1$. The phase must end when the length of the unmarked
segment becomes $r$. Therefore, $d_i+1-r$ is an upper bound on the number
of faults incurred in the $i$th batch.
\end{proof}

Note that in the above proof, making the sequence go only up
to some other value between $k+1$ and $N-1$, instead of up to $N$,
would never give more faults.

\begin{lemma}\label{MaxDist_FaultToHole_InABatch}
For $1\leq r \leq k-1$, in any sequence respecting the cycle access graph $C_N$,
the maximum number of faults incurred by \FAR in a $k$-phase, excluding the
first, is at most $X_r$. In particular, \FAR incurs the maximum number of faults in a $k$-phase
if the sequence takes the shortest path between any two faults in that phase. 
Consequently, in $C_{k+1}$, each $k$-phase can generate at most
$\CEIL{\log (k+1)}$ faults for \FAR.
\end{lemma}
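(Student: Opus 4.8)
The plan is to reduce an arbitrary sequence to the structure already analysed in Lemma~\ref{FAR_WholeCycle}, showing that no sequence can beat the go-around-the-cycle behaviour. First I would record the two structural invariants that make \FAR tractable on a cycle. Since \FAR is a marking algorithm, every page requested within a $k$-phase stays in cache until the phase ends, and because requests move along cycle edges, the marked pages always form a contiguous arc $M$; its complement, the unmarked segment $U$, is therefore also a contiguous arc. \FAR evicts the unmarked page farthest from $M$, which is exactly the midpoint of $U$ (with the tie, when $|U|$ is even, resolved by \LRU as in Lemmas~\ref{FAR_WholeCycle} and~\ref{LRU_FAR_Min_Ubound_Cycle2}). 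I would then show by induction on batches that the $r$ pages outside the cache form a contiguous block at the start of each batch: the first fault of a batch evicts the midpoint, and the next evictions are forced onto adjacent pages, so the new hole is again an arc of length $r$.

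The heart of the argument is a halving recurrence together with a shortest-path optimality claim. I would argue that to fault the walk must extend $M$ until it reaches a hole page, and since each eviction places a hole page at distance at least $\CEIL{|U|/2}$ from $M$, reaching the next hole forces $M$ to grow by that much; grouping faults into batches of $r$ (the hole size) then gives $d_{i+1}\le\FLOOR{d_i/2}$ with $d_1=N-1$, exactly the recurrence of Lemma~\ref{FAR_WholeCycle}, so that $d_i=\FLOOR{(N-1)/2^{\,i-1}}$ and the number of batches is the least $i$ with $d_i+1\le 2r$. To turn ``at most'' into a genuine upper bound I would prove the shortest-path claim by an exchange argument: any walk that does not take a shortest cycle-path between two consecutive faults marks strictly more pages, enlarging $M$ faster and hence only lowering the number of batches the unmarked segment can still support. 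Consequently the per-phase fault count is at most $rx+(\CEIL{N/2^x}-r)=X_r$ with $x=\FLOOR{\log\frac{N}{r}}$, and Lemma~\ref{FAR_WholeCycle} already exhibits a sequence attaining it.

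Finally I would obtain the corollary by specialising to $r=1$, i.e.\ $N=k+1$. Here every batch is a single fault, the recurrence collapses to the plain halving $d_i=\FLOOR{k/2^{\,i-1}}$, and a short arithmetic simplification of $X_1=(\FLOOR{\log(k+1)}-1)+\CEIL{(k+1)/2^{\FLOOR{\log(k+1)}}}$ shows it equals $\CEIL{\log(k+1)}$ (one checks the two cases according to whether $k+1$ is a power of two).

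I expect the main obstacle to be the shortest-path optimality step. \FAR's eviction rule is fixed, but the adversary still chooses the direction of travel and which endpoint of the hole to approach, and it may reverse direction as in Lemma~\ref{LRU_FAR_Min_Ubound_Cycle2}; making the exchange argument airtight requires showing that none of these freedoms---together with the \LRU tie-breaking on even segments---lets a fault be bought for less marking than a shortest-path step, so that the halving recurrence is genuinely extremal rather than merely descriptive of the cyclic sequence.
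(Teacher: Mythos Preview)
Your proposal is correct and follows essentially the same approach as the paper: both reduce to Lemma~\ref{FAR_WholeCycle} by arguing that any deviation from the shortest path between consecutive faults marks additional pages, thereby shrinking the unmarked segment faster and only decreasing the number of faults in the phase. Your version is considerably more detailed than the paper's---you make explicit the contiguity invariants, the batch structure, and the arithmetic for the $r=1$ corollary, and you correctly flag the tie-breaking subtlety (cf.\ Lemma~\ref{LRU_FAR_Min_Ubound_Cycle2}) as the place where the exchange argument needs care---whereas the paper's proof dispatches the shortest-path optimality in a single sentence.
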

\begin{proof}
Given the eviction rule of \FAR in $C_N$, which is that it evicts
the midpoint of the current unmarked segment, it follows that when a
sequence does not turn inside a phase, it is taking the shortest path
to the next fault. This situation is analyzed in Lemma~\ref{FAR_WholeCycle}.
When a sequence turns such that at least one page is marked before the next
turn, then all those pages become unavailable for eviction for the
remainder of the phase. A phase ends when all the pages in the cache
are marked and a new phase starts at the next fault. Therefore, if a
sequence keeps moving along the shortest path which takes it to the
next fault, then it is also marking the fewest number of pages in order
to get to the next fault, thereby, maximizing the number of faults
\FAR incurs in the current phase. Hence, the maximum number of faults
incurred by \FAR in each phase, excluding the first, is upper bounded
by $X_r$, as proved in Lemma~\ref{FAR_WholeCycle}. The special case of 
$C_{k+1}$ is given by $r=1$ and so the lemma is proved.
\end{proof}

\begin{lemma}\label{LB:C:Min}For the cycle access graph $C_N$, and $\A \in \{\LRU, \FIFO, \FWF\}$,
\[ \Min^{C_N}(\A, \FAR) \geq -\frac{X_r -1}{k}.\]
\end{lemma}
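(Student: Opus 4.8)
The plan is to dispose of $\FWF$ by a domination argument and to handle $\LRU$ and $\FIFO$ through Lemma~\ref{phases-limit-min}, applied with $\B=\FAR$ and $G=C_N$. For $\A=\FWF$ the bound is essentially free: since $\FAR$ is a marking algorithm, Lemma~\ref{fwf-conservative} gives $\FAR(I)\le\FWF(I)$ for every $I$, hence $\FWF(I)-\FAR(I)\ge 0$. As $X_r\ge 1$ (which holds because $N=k+r\ge 2r+1$ forces $x=\FLOOR{\log(N/r)}\ge 1$, whence $X_r\ge\CEIL{N/2^x}\ge 1$), we get $\Min^{C_N}(\FWF,\FAR)\ge 0\ge-\frac{X_r-1}{k}$, as required.

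For $\A\in\{\LRU,\FIFO\}$ I would divide an arbitrary $I\in\LANG{C_N}$ into $k$-phases and let $b=b_I$ be the number of complete ones. Two estimates drive the argument. For a lower bound on $\A$, Lemma~\ref{minimum-cost} gives $\A(I)\ge b+k-1$ (in fact this holds for any algorithm). For an upper bound on $\FAR$, Lemma~\ref{MaxDist_FaultToHole_InABatch} shows that $\FAR$ faults at most $X_r$ times in every $k$-phase except the first, where as a marking algorithm it faults at most $k$ times; counting the first phase, the remaining complete phases, and a possible incomplete final phase yields $\FAR(I)\le k+bX_r$. Subtracting gives $\A(I)-\FAR(I)\ge b(1-X_r)-1=:f(b)$, and since each complete $k$-phase contains at least $k$ distinct requests we have $|I|\ge bk=:g(b)$. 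Because $\lim_{b\to\infty}f(b)/g(b)=\frac{1-X_r}{k}$ exists, Lemma~\ref{phases-limit-min} delivers $\Min^{C_N}(\A,\FAR)\ge\frac{1-X_r}{k}=-\frac{X_r-1}{k}$.

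The main obstacle is the one remaining hypothesis of Lemma~\ref{phases-limit-min}, namely $\lim_{n\to\infty}\Mi{\A}{\FAR}(n,C_N)=-\infty$: it demands a family of cycle-respecting sequences on which $\FAR$ is unboundedly worse than $\A$ (and it genuinely fails for $\FWF$, which is why that case must be argued separately above). I would supply such a family with the back-and-forth sequences $\SEQ{1,2,\ldots,N-1,N,N-1,\ldots,2}^n$ of Lemma~\ref{LRU_FAR_Min_Ubound_Cycle2}, where $\FAR$ is misled into measuring distances across the unused cycle edge and faults about $rx+y$ times per phase, strictly more than the $O(r)$ faults per phase that $\A$ incurs on what is effectively a path, so $\A(I)-\FAR(I)\to-\infty$. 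A cleaner alternative that sidesteps this check entirely is to argue directly from the same two estimates: since $f$ is non-increasing and $b\le n/k$ whenever $|I|=n$, one has $\frac{\A(I)-\FAR(I)}{n}\ge\frac{f(b)}{n}\ge\frac{f(n/k)}{n}=\frac{1-X_r}{k}-\frac{1}{n}$ for every $I$, and taking $\liminf$ over $n$ yields the bound with no growth hypothesis on $\Mi{\A}{\FAR}(n,C_N)$ and uniformly over all three choices of $\A$.
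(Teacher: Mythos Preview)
Your proposal is correct, and the ``cleaner alternative'' you give at the end is essentially the paper's own argument: bound $\FAR$ from above by $X_r$ faults per phase (Lemma~\ref{MaxDist_FaultToHole_InABatch}), bound $\A$ from below by one fault per phase (Lemma~\ref{minimum-cost}), note each phase has length at least $k$, and divide. The paper carries this out uniformly for all three algorithms without separating out $\FWF$ and without invoking Lemma~\ref{phases-limit-min}.

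Your initial route through Lemma~\ref{phases-limit-min} is more elaborate than necessary and, as you yourself flag, introduces the extra hypothesis $\Mi{\A}{\FAR}(n,C_N)\to-\infty$. Your proposed verification of that hypothesis for $\FIFO$ via the back-and-forth sequence is not quite justified: the claim that $\FIFO$ ``incurs $O(r)$ faults per phase on what is effectively a path'' is not obvious, since $\FIFO$ is far from optimal on paths and can fault heavily when the direction reverses. (The paper's own witness for $\Min^{C_N}(\FIFO,\FAR)<0$ is a different sequence, the one in Lemma~\ref{large-cycle1}.) Since your alternative argument avoids this hypothesis entirely and applies uniformly to all three choices of $\A$, nothing is lost; but if you keep the $\Mi{\A}{\FAR}\to-\infty$ route in your write-up, either replace the $\FIFO$ witness or drop that branch in favour of the direct bound.
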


\begin{proof}
%J This proof is longer than some others showing the same thing.
Consider an arbitrary sequence $I_n$ in $C_N$, where $n$ denotes the number
of $k$-phases in the sequence. The last phase of a sequence may contain fewer 
than $k$ distinct pages and in that case we can ignore the last phase in $I_n$.
Note that each phase contains requests to $k$ distinct pages. It follows
that each phase in a sequence is of length at least $k$ and \A incurs at
least one fault in each of them. By Lemma~\ref{MaxDist_FaultToHole_InABatch},
we know that \FAR can incur at most $X_r$ faults in each phase,
excluding the first. By Lemma~\ref{minimum-cost}, \A faults at least once in each phase.
In the first phase both algorithms incur $k$ faults.
Thus, in each phase,
the absolute value of the maximum
difference in faults is at most $X_r-1$. 
Thus,
$\lim_{n \rightarrow \infty}\frac{\A(I_n) - \FAR(I_n)}{|I_n|}
\geq -\frac{X_r-1}{k}$. \end{proof}

\begin{lemma}\label{large-cycle1}
For the cycle access graph $C_N$,
\[ \Min^{C_N}(\FIFO, \FAR) \leq -\frac{ X_r  -r }{k}. \] 
\end{lemma}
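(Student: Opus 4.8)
The plan is to establish this upper bound on the $\Min$ value by exhibiting a single family of sequences on which \FIFO faults far less often than \FAR. Since $\Min^{C_N}(\FIFO,\FAR)$ is an infimum over all request sequences, it suffices to produce one family $\{J_n\}$ with $\lim_{n\to\infty}\frac{\FIFO(J_n)-\FAR(J_n)}{|J_n|}\leq-\frac{X_r-r}{k}$. I would reuse exactly the sequence $J_n=\SEQ{P,B^n}$ from Lemma~\ref{cycle-FIFO-LRU-min-ub}, because it is engineered so that \FIFO faults only $r$ times per row while, as I argue below, it simultaneously drives \FAR into its worst-case behavior of $X_r$ faults per $k$-phase. (The naive traversal $\SEQ{1,\ldots,N}^n$ does force $X_r$ faults on \FAR but makes \FIFO fault on essentially every request, so it is useless here; the extra structure of $J_n$ is precisely what keeps \FIFO cheap.)

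First I would recall from the proof of Lemma~\ref{cycle-FIFO-LRU-min-ub} that $\FIFO^{C_N}(J_n)=N+rRn$, where $R$ is the number of rows of $B$, and that $|J_n|=N+r-1+kRn$. The remaining work concerns \FAR. The key structural observation is that, apart from the single turn occurring at the first request of $B$, the sequence $J_n$ is a \emph{monotone} traversal of the cycle: each row consists of $k$ consecutive vertices traversed in one direction, and consecutive rows join seamlessly, since row~$j$ ends one vertex above where row~$j{+}1$ begins. Hence between any two of \FAR's faults the sequence follows a shortest path on $C_N$ and never reverses direction inside a $k$-phase.

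Given this, I would invoke Lemma~\ref{MaxDist_FaultToHole_InABatch} together with Lemma~\ref{FAR_WholeCycle}: a non-turning, shortest-path traversal forces \FAR to incur exactly $X_r$ faults in every $k$-phase except the first and possibly the last. Because a monotone cycle traversal repeats no request within a phase, each $k$-phase has length exactly $k$, so $J_n$ contains $Rn+O(1)$ complete $k$-phases and $\FAR^{C_N}(J_n)=X_rRn+O(1)$. Combining the two fault counts,
\[
\lim_{n\to\infty}\frac{\FIFO^{C_N}(J_n)-\FAR^{C_N}(J_n)}{|J_n|}
=\lim_{n\to\infty}\frac{rRn-X_rRn+O(1)}{kRn+O(1)}
=\frac{r-X_r}{k}=-\frac{X_r-r}{k},
\]
which gives the claimed bound.

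I expect the main obstacle to be the middle step: justifying that \FAR faults \emph{exactly} $X_r$ per phase on $J_n$, rather than merely at most $X_r$. This requires checking that the phase boundaries induced by the sequence line up with the batch analysis of Lemma~\ref{FAR_WholeCycle}, and that the initial turn at the $P$--$B$ junction and the prefix $P$ disturb only a bounded number of phases whose contribution vanishes in the limit. I would handle this by observing that after the initial turn, \FAR's pattern of holes on $J_n$ is identical, up to a relabeling and reflection of the cycle, to its pattern on the pure traversal $\SEQ{1,2,\ldots,N}^n$ analyzed in Lemma~\ref{FAR_WholeCycle}; hence the per-phase count of $X_r$ carries over for all but finitely many phases, and the limit above is unaffected.
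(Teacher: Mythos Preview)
Your proposal is correct and follows essentially the same argument as the paper's proof: both use the sequence $J_n$ from Lemma~\ref{cycle-FIFO-LRU-min-ub}, recall the \FIFO cost and length computed there, observe that $J_n$ has a single turn (at the first request of $B$) after which it traverses the cycle monotonically, and then apply Lemma~\ref{FAR_WholeCycle} to conclude that \FAR incurs $X_r$ faults in every $k$-phase except a bounded initial segment, giving $\FAR^{C_N}(J_n)=RnX_r+O(1)$ and the claimed limit. Your final paragraph correctly identifies the one point needing care---that the per-phase count is exactly $X_r$, not merely at most---and your resolution (the post-turn behavior is a relabeled, reflected copy of the pure traversal) is precisely how the paper's appeal to Lemma~\ref{FAR_WholeCycle} is justified.
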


% Repeated sequence Lemma \ref{cycle-FIFO-LRU-min-ub}.

\begin{proof} Recall the sequence $J_n$ from the proof of Lemma~\ref{cycle-FIFO-LRU-min-ub}.
\[J_n = \SEQ{P, B^n}, \textrm{ where }
   P = \SEQ{1,2, \ldots, k, k+1, \ldots, N, 1, 2, \ldots, r-1}
\]
and
\[ B=\left[ \begin{array}{llll|lllll}
r  & r-1 &  \cdots  & 1 & N & N-1 &  \cdots  & 2r+2  & 2r+1   \\
2r & 2r-1 &  \cdots  & r+1 & r & r-1 &  \cdots  & 3r+2 & 3r+1 \\ 
3r & 3r-1 &  \cdots  & 2r+1 & 2r & 2r-1 &  \cdots  & 4r+2 & 4r+1 \\ 
 \vdots  &   \vdots  &   \cdots  &    \vdots  &    \vdots  &   \vdots  &  \cdots  &    \vdots  &  \vdots  \\
N  &  N-1  &   \cdots  &   k+1 &  k  &   \vdots  &  \cdots  &   r+2 & r+1  \\
\end{array}\right]\]

$|J_n| = kRn + N+r-1$ and $\FIFO^{C_N}(J_n) = N+rRn$. 

There is exactly one turn in $J_n$,
which occurs at the first request in $B$ and nowhere else.
For the rest of the sequence, it moves around the cycle without turning.
Hence, the number of faults incurred by \FAR in each phase of $B$,
excluding the first two, is given by Lemma~\ref{FAR_WholeCycle},
to be $X_r = r\big(x -1\big) + \CEIL{\frac{N}{2^x}}$,
where $x = \FLOOR{\log \frac{N}{r} }$.
Therefore, $\FAR^{C_N}(J_n) = \FLOOR{\frac{nkR}{k}}X_r + c$,
where $c$ is a constant.
The constant bounds the number of faults in the first two phases. 
Now, $\Min^{C_N}(\FIFO,\FAR)$ is at most
\[\lim_{n \rightarrow \infty}\frac{\FIFO^{C_N}(J_n) - \FAR^{C_N}(J_n)}{|J_n|} = \lim_{n \rightarrow \infty}\frac{nR(r - X_r)}{nRk +~N+r-1} = -\frac{X_r-r}{k} \]
\mbox{}\end{proof}

\begin{lemma}\label{Max:C:A:FAR}For the cycle access graph $C_N$, and $\A \in \{\LRU, \FIFO, \FWF\}$, 
\[ \Max^{C_N}(\A, \FAR) \geq 1-\frac{X_r}{k}. \]
\end{lemma}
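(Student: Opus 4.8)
The plan is to exhibit a single family of request sequences on which \A faults on essentially every request while \FAR faults only about $X_r$ times per $k$-phase; since $\Max^{C_N}(\A,\FAR)$ is a limit superior, a lower bound follows from any one such family. The natural candidate is exactly the ``going around the cycle'' sequence $I_n = \SEQ{1,2,\ldots,k,\ldots,N}^n$ already analyzed in Lemma~\ref{FAR_WholeCycle}, which has length $|I_n| = nN$ and never turns.

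First I would compute $\A(I_n)$ for each of the three algorithms. Because the sequence never turns and $N>k$, a demand-paging algorithm moving around the cycle always requests a page that was evicted at least one full lap earlier, so every request after the cache is full is a fault; hence $\LRU(I_n)=\FIFO(I_n)=nN$. For \FWF, Lemma~\ref{fwf-conservative} gives $\FWF(I_n)\geq \LRU(I_n)$, and since no algorithm can fault more than once per request, $\FWF(I_n)=nN$ as well. Thus $\A(I_n)=nN$ for all $\A\in\{\LRU,\FIFO,\FWF\}$.

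Next I would invoke Lemma~\ref{FAR_WholeCycle} directly, which yields $\FAR^{C_N}(I_n)\leq \FLOOR{\frac{nN}{k}}X_r + k-1 \leq \frac{nN}{k}X_r + (k-1)$. Combining the two bounds gives
\[
\frac{\A(I_n)-\FAR(I_n)}{|I_n|} \geq \frac{nN - \frac{nN}{k}X_r - (k-1)}{nN} = 1-\frac{X_r}{k} - \frac{k-1}{nN}.
\]
Letting $n\to\infty$ along multiples of $N$, and using that $\Max^{C_N}(\A,\FAR)$ dominates the limit of this subsequence, produces $\Max^{C_N}(\A,\FAR)\geq 1-\frac{X_r}{k}$.

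The only real work here is the fault counts rather than any delicate estimate: one must confirm that every request is genuinely a fault for each of the three algorithms on the non-turning cycle traversal, and that the $X_r$-per-phase bound of Lemma~\ref{FAR_WholeCycle} is applied with the correct number of complete $k$-phases, namely $\FLOOR{nN/k}$. Everything else---the additive constant $k-1$ and the special treatment of the first and last phases---is lower order and vanishes in the limit. Hence the main (modest) obstacle is simply the bookkeeping of establishing $\A(I_n)=nN$ and matching it against the already-proven \FAR bound.
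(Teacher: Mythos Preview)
Your proposal is correct and follows essentially the same approach as the paper: both use the non-turning sequence $I_n=\SEQ{1,2,\ldots,N}^n$, observe that $\A(I_n)=|I_n|=nN$ for all three algorithms, apply Lemma~\ref{FAR_WholeCycle} to bound $\FAR(I_n)$, and take the limit. The only cosmetic difference is that the paper restricts to $n$ with $k\mid nN$ so that $\FLOOR{nN/k}=nN/k$, whereas you keep the floor and absorb it via the inequality $\FLOOR{nN/k}\leq nN/k$; both are fine.
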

\begin{proof}
Consider the sequences $I_{n} = \SEQ{ 1, 2, \ldots, N}^n$ in $C_N$
such that $k$ divides $nN$. 
It is easy to see that $\A(I_{n}) = |I_{n}|= Nn$. By
Lemma~\ref{FAR_WholeCycle}, we have
$\FAR^{C_N}(I_{n}) \leq \frac{Nn}{k} X_r + k-1$.
Thus, $\Max^{C_N}(\A, \FAR)$ is at least
\[ \lim_{n \rightarrow \infty}\frac{ \A^{C_N}(I_{n}) - \FAR^{C_N}(I_{n})}{|I_{n}|} \geq \lim_{n \rightarrow \infty}\frac{nN - \frac{nN}{k}X_r-(k-1)}{nN}  = 1-\frac{X_r}{k} \]
\end{proof}

\begin{lemma}\label{Min:C:LRU:FAR}For the cycle access graph $C_N$,

\[ \Min^{C_N}(\LRU, \FAR) \leq -\frac{r\left( \FLOOR{ \log\frac{\hat{N}}{r}}  -1 \right) }{N-1} \]

where $\hat{N}$ is $N$ and $N-1$ if $N$ is even and odd, respectively.
\end{lemma}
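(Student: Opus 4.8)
The plan is to reuse the family of sequences $I_n = \SEQ{1,2,\ldots,N-1,N,N-1,\ldots,2}^n$ already analyzed in Lemma~\ref{LRU_FAR_Min_Ubound_Cycle2}, and simply exhibit that $\LRU(I_n)-\FAR(I_n)$ is sufficiently negative relative to $|I_n|$. Since $\Min^{C_N}(\LRU,\FAR)$ is a liminf of the normalized minimum difference taken over all lengths, it suffices to bound $\lim_{n\to\infty}\frac{\LRU(I_n)-\FAR(I_n)}{|I_n|}$ from above along this one family: the liminf over all lengths is no larger than the limit along the subsequence of lengths $\{|I_n|\}$, and for each $n$ the minimum difference at length $|I_n|$ is at most $\LRU(I_n)-\FAR(I_n)$.

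First I would record the three quantities I need. The block $B=\SEQ{1,2,\ldots,N,N-1,\ldots,2}$ has length $2(N-1)$, so $|I_n| = 2(N-1)n$. Lemma~\ref{LRU_FAR_Min_Ubound_Cycle2} already establishes \FAR's phase behavior on exactly this sequence: apart from the first and last phase, every $k$-phase contains $rx+y$ faults, where $x=\FLOOR{\log(\hat{N}/r)}$ and $y=\FLOOR{\hat{N}/2^x}-r$; moreover its proof observes that each iteration of $B$ contains exactly two phase changes (at $k+1$ and at $r$), so there are $2n$ phases up to an additive constant. Hence $\FAR^{C_N}(I_n) = 2n(rx+y)+O(1)$.

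Next I would compute \LRU's cost directly (its behavior does not depend on the graph, so it is the same as on a path). In steady state the cache at the start of a block is $\{2,\ldots,k+1\}$; ascending, \LRU faults on the bottom turning page $1$ and on the $r$ newly seen high pages $k+1,\ldots,N$, giving $1+r$ faults, and descending it faults on the $r-1$ newly seen low pages $r,r-1,\ldots,2$, giving $r-1$ faults. Thus \LRU faults $2r$ times per block and $\LRU^{C_N}(I_n)=2rn+O(1)$. Combining,
\[
\frac{\LRU^{C_N}(I_n)-\FAR^{C_N}(I_n)}{|I_n|} \longrightarrow \frac{2r-2(rx+y)}{2(N-1)} = -\frac{r(x-1)+y}{N-1}.
\]
Since $x=\FLOOR{\log(\hat{N}/r)}$ forces $2^x\le \hat{N}/r$, we have $\FLOOR{\hat{N}/2^x}\ge r$ and hence $y\ge 0$; dropping the nonnegative term $y$ yields $-\frac{r(x-1)+y}{N-1}\le -\frac{r(x-1)}{N-1}$, which is precisely the claimed bound.

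The length computation and the appeal to Lemma~\ref{LRU_FAR_Min_Ubound_Cycle2} for \FAR are routine. The step that deserves care—the main obstacle—is pinning down \LRU's exact per-block cost together with the count of $2n$ phases, so that \FAR's per-phase cost scales correctly against $|I_n|$. The phase boundaries do \emph{not} align with the block boundaries (a phase may straddle a turning point), so I would invoke periodicity of the phase partition with period $2(N-1)$ to justify ``two phases per block'' up to boundary terms, and I would track \LRU's steady-state cache configuration carefully to confirm the $1+r$ faults ascending and $r-1$ descending. All additive $O(1)$ corrections (the first and last phase, and the first block) vanish after dividing by $|I_n|\to\infty$.
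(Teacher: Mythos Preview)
Your proposal is correct and follows essentially the same route as the paper: both use the family $I_n=\SEQ{1,2,\ldots,N,N-1,\ldots,2}^n$, invoke Lemma~\ref{LRU_FAR_Min_Ubound_Cycle2} to get two phases per iteration with $rx+y$ \FAR-faults each, compute $\LRU(I_n)=2rn+O(1)$ and $|I_n|=2(N-1)n$, and then drop the nonnegative term $y=\FLOOR{\hat{N}/2^x}-r$ to reach the stated bound. The paper states the \LRU cost as $2nr+k-1$ without the detailed trace you give, but the argument is otherwise identical.
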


\begin{proof}
Consider the sequence $I_{n} = \SEQ{1, 2,\ldots, N-1, N, N-1, \ldots, 2}^{n}$ used in the proof of
Lemma~\ref{LRU_FAR_Min_Ubound_Cycle2}. Clearly, 
$\LRU^{C_{N}}(I_{n}) = 2nr+k-1$ and $|I_{n}| = 2(N-1)n $.
There are two phase changes in each iteration of $I_n$, so
by Lemma~\ref{LRU_FAR_Min_Ubound_Cycle2},
\[ 
k + 2n \left(r\FLOOR{ \log\frac{\hat{N}}{r}} +  \FLOOR{\frac{\hat{N}}{2^x}} -r \right)  \leq \FAR^{C_N}(I_{n}), \]
where $x = \FLOOR{ \log\frac{\hat{N}}{r}}$.

Now, since $r \leq \FLOOR{\frac{\hat{N}}{2^x}}$,
\[\lim_{n \rightarrow \infty}\frac{ \LRU^{C_N}(I_{n}) - \FAR^{C_N}(I_{n}) }{|I_{n}|} \leq -\frac{r \left(
\FLOOR{ \log\frac{\hat{N}}{r}}  -1 \right)   }{N-1 }
\]
\end{proof}

When $r=1$, we
get the bound
$\Min^{C_{k+1}}(\LRU, \FAR) \leq  - \frac{ \FLOOR{\log k} -1 }{k}$.

\begin{theorem}\label{thm:C:FAR}
For the cycle access graph $C_N$,
\[
\left[-\frac{X_r  -r}{k}
 , 1-\frac{X_r}{k} \right] \subseteq \I^{C_N}[ \FIFO, \FAR]  \subseteq \left[- \frac{X_r-1}{k} , 1-\frac{1}{k} \right],
\]
\[
\left[- \frac{r\left(  \FLOOR{ \log \frac{\hat{N}}{r} }   -1 \right) }{N-1},   1-\frac{X_r}{k}  \right ] \subseteq \I^{C_N}[ \LRU, \FAR] \subseteq \left[- \frac{X_r-1}{k} , 1-\frac{1}{k} \right]  \\
\]
and
\[
\left[0, 1-\frac{X_r}{k} \right] \subseteq \I^{C_N}[ \FWF, \FAR] \subseteq  \left[0, 1 -\frac{1}{k} \right].
\]
\end{theorem}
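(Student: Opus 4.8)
The plan is to recognize that Theorem~\ref{thm:C:FAR} is a packaging of the lemmas already proved in this section, so the whole proof reduces to matching each of the six interval endpoints (three pairs of algorithms, each with an inner and an outer bound on $\Min$ and on $\Max$) to the result that supplies it. I would first record the bookkeeping principle: since $\I^{C_N}[\A,\FAR]=[\Min^{C_N}(\A,\FAR),\Max^{C_N}(\A,\FAR)]$, a containment $[a,b]\subseteq\I^{C_N}[\A,\FAR]$ is equivalent to $\Min^{C_N}(\A,\FAR)\leq a$ together with $\Max^{C_N}(\A,\FAR)\geq b$, whereas $\I^{C_N}[\A,\FAR]\subseteq[c,d]$ is equivalent to $\Min^{C_N}(\A,\FAR)\geq c$ together with $\Max^{C_N}(\A,\FAR)\leq d$. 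Thus each double containment splits into four scalar inequalities, which I would then verify one at a time.

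For the $\FIFO$--$\FAR$ interval, I would cite Lemma~\ref{large-cycle1} for the upper bound $\Min^{C_N}(\FIFO,\FAR)\leq -\frac{X_r-r}{k}$ on the left endpoint, and Lemma~\ref{LB:C:Min} with $\A=\FIFO$ for the matching lower bound $\Min^{C_N}(\FIFO,\FAR)\geq -\frac{X_r-1}{k}$. For the right endpoint, the lower bound $\Max^{C_N}(\FIFO,\FAR)\geq 1-\frac{X_r}{k}$ is Lemma~\ref{Max:C:A:FAR} with $\A=\FIFO$, and the upper bound $\Max^{C_N}(\FIFO,\FAR)\leq 1-\frac{1}{k}$ is Proposition~\ref{general_bound}, which applies because $\FIFO$ is conservative. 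The $\LRU$--$\FAR$ interval is treated identically, the only change being that the upper bound on the left endpoint now comes from Lemma~\ref{Min:C:LRU:FAR}, giving the sharper value $-\frac{r\left(\FLOOR{\log\frac{\hat{N}}{r}}-1\right)}{N-1}$ tailored to $\LRU$; the remaining three bounds come from Lemma~\ref{LB:C:Min} with $\A=\LRU$, Lemma~\ref{Max:C:A:FAR} with $\A=\LRU$, and Proposition~\ref{general_bound} (using that $\LRU$ is conservative and marking).

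For the $\FWF$--$\FAR$ interval the left endpoint is pinned to $0$: since $\FAR$ is a marking algorithm, Lemma~\ref{fwf-conservative} gives $\FAR(I)\leq\FWF(I)$ on every sequence, so $\Min^{C_N}(\FWF,\FAR)\geq 0$, and the equality $\Min^{G}(\FWF,\FAR)=0$ (valid for every access graph, as noted after Lemma~\ref{fwf-conservative}, witnessed by a sequence confined to $k$ consecutive cycle vertices on which neither algorithm faults after the first phase) gives $\Min^{C_N}(\FWF,\FAR)\leq 0$ as well. The lower bound on the right endpoint is Lemma~\ref{Max:C:A:FAR} with $\A=\FWF$, and the upper bound is again Proposition~\ref{general_bound}, which is applicable here because $\FWF$ is itself a marking algorithm.

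All the analytic content lives in the cited lemmas, so there is no genuine obstacle; the only thing to get right is the direction of the inequalities, namely that an inner containment requires an \emph{upper} bound on $\Min$ and a \emph{lower} bound on $\Max$ while the outer containment requires the reverse, and that Proposition~\ref{general_bound} is legitimately invoked in every case precisely because the left-hand algorithm ($\FIFO$, $\LRU$, or $\FWF$) is conservative or marking. Once the six endpoints are matched to these sources, the three double containments follow immediately.
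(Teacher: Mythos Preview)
Your proposal is correct and matches the paper's own proof essentially line for line: the paper likewise derives each of the three double containments by citing Proposition~\ref{general_bound} together with the appropriate subset of Lemmas~\ref{LB:C:Min}, \ref{large-cycle1}, \ref{Max:C:A:FAR}, \ref{Min:C:LRU:FAR}, and~\ref{fwf-conservative}. Your write-up is more explicit about the direction of each inequality and about why Proposition~\ref{general_bound} applies to each left-hand algorithm, but the underlying argument is identical.
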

\begin{proof}
The first relation follows from Proposition~\ref{general_bound} and Lemmas~\ref{LB:C:Min}, \ref{large-cycle1}, and \ref{Max:C:A:FAR},
and the second from Proposition~\ref{general_bound} and
Lemmas~\ref{LB:C:Min}, \ref{Min:C:LRU:FAR}, and \ref{Max:C:A:FAR}.
The third result follows from Proposition~\ref{general_bound} and Lemmas~\ref{fwf-conservative} and \ref{Max:C:A:FAR}. 
\end{proof}

\section{Concluding Remarks}
Relative interval analysis has the advantage that it can
separate algorithms properly when one algorithm is at least
as good as another on every sequence and is better on some.
This was reflected in the results concerning \FWF which is
dominated by the other algorithms considered for all access
graphs. It was also reflected by the result showing that
\LRU and \FAR have better performance than \FIFO on paths.
The analysis also found the expected result that \FAR, which is
designed
to perform well on access graphs, performs better than both \LRU
and \FIFO on cycles.

However, it is disappointing that the relative interval
analysis of \LRU and \FIFO on stars and cycles found that
\FIFO had the better performance, confirming the original results
by~\cite{DLM09} on complete graphs. Clearly, the access graph
technique cannot be arbitrarily applied to all quality measures
for online algorithms to show that \LRU is better than \FIFO.
To try to understand quality measures better,
it would be interesting to determine on which
the access graph technique
is useful for this well studied problem and on which it is not.

\bibliographystyle{plain}
\bibliography{refs}

\end{document}